\numberwithin{equation}{section}
\newtheorem{theorem}{Theorem}[section]
\newtheorem{lemma}[theorem]{Lemma}
\newtheorem{proposition}[theorem]{Proposition}
\newtheorem{corollary}[theorem]{Corollary}
\theoremstyle{definition}
\newtheorem{definition}[theorem]{Definition}
\newtheorem{remark}[theorem]{Remark}
\def\E{{\mathbb E}}
\def\R{{\mathbb R}}
\def\P{{\mathcal P}}
\def\Q{{\mathcal Q}}
\def\G{{\mathcal G}}
\def\A{{\mathcal A}}
\def\F{{\mathcal F}}
\DeclareMathOperator*{\esssup}{ess\,sup}
\title{Law invariant risk measures and information divergences}
\author{Daniel Lacker}
\thanks{This material is based upon work supported by the National Science Foundation under Award No. DMS-1502980}
\address{\noindent Brown University, Division of Applied Mathematics, 182 George St, Providence, RI 02906}
\email{daniel\_lacker@brown.edu}
\begin{document}

\begin{abstract}
A one-to-one correspondence is drawn between law invariant risk measures and divergences, which we define as functionals of pairs of probability measures on arbitrary standard Borel spaces satisfying a few natural properties. Divergences include many classical information divergence measures, such as relative entropy and $f$-divergences. Several properties of divergence and their duality with law invariant risk measures are developed, most notably relating their chain rules or additivity properties with certain notions of time consistency for dynamic law invariant risk measures known as acceptance and rejection consistency. These properties are linked also to a peculiar property of the acceptance sets on the level of distributions, analogous to results of Weber on weak acceptance and rejection consistency. Finally, the examples of shortfall risk measures and optimized certainty equivalents are discussed in some detail, and it is shown that the relative entropy is essentially the only divergence satisfying the chain rule.
\end{abstract}

\maketitle

\section{Introduction}

This paper deepens the analysis of law invariant risk measures and their connection to divergence-type functionals of probability measures. Throughout the paper, a nonatomic standard Borel space $(\Omega,\F,P)$ is fixed, and a \emph{risk measure} is defined to be a convex functional $\rho : L^\infty := L^\infty(\Omega,\F,P) \rightarrow \R$ satisfying:
\begin{enumerate}
\item Monotonicity: If $X,Y \in L^\infty$ and $X \le Y$ a.s. then $\rho(X) \le \rho(Y)$.
\item Cash additivity: If $X \in L^\infty$ and $c \in \R$ then $\rho(X + c) = \rho(X) + c$.
\item Normalization: $\rho(0)=0$.
\end{enumerate}
The functional $X \mapsto \rho(-X)$ is more traditionally called a \emph{normalized convex risk measure}; some authors use the term \emph{acceptability measure} \cite{roorda2007time} for what we have chosen to call a risk measure. Convex risk measures first appeared in \cite{follmer-schied-convex,frittelli2002putting,heath2004pareto}, extending the class of \emph{coherent} risk measures introduced in the seminal paper of Artzner et al. \cite{artzner1999coherent} (see also \cite{delbaen2002coherent}). A risk measure $\rho$ is \emph{law invariant} if $\rho(X)=\rho(Y)$ whenever $X$ and $Y$ have the same law. Three standard examples will guide us throughout the paper: The first is the well known entropic risk measure $\rho(X) = \eta^{-1}\log\E[e^{\eta X}]$, $\eta > 0$. Second, given a nondecreasing convex function $\ell : \R \rightarrow [0,\infty)$ with $\ell(0)=1$, the corresponding \emph{shortfall risk measure} (introduced by F\"ollmer and Schied in \cite{follmer-schied-convex}) is
\[
\rho(X) = \inf\left\{c \in \R : \E[\ell(X-c)] \le 1\right\}.
\]
Lastly, given a nondecreasing convex function $\phi : \R \rightarrow \R$ with $\phi^*(1) = \sup_{x \in \R}(x-\phi(x)) = 0$, the corresponding
\emph{optimized certainty equivalent} (introduced by Ben-Tal and Teboulle in \cite{bental-teboulle-1986,bental-teboulle-2007}) is
\[
\rho(X) := \inf_{m \in \R}\left(\E[\phi(m+X)] - m\right).
\]

We construct divergences as follows:
Fix a law invariant risk measure $\rho$. 
Given a Polish space $E$, let $\P(E)$ denote the set of Borel probability measures on $E$. For any Polish space (or any standard Borel space) $E$ and any $\mu \in \P(E)$, we may define a new law invariant risk measure $\rho_\mu : L^\infty(E,\mu) \rightarrow \R$ by $\rho_\mu(f) := \rho(f(X))$, where $X$ is any $E$-valued random variable on $\Omega$ with law $P \circ X^{-1} = \mu$. Indeed, such an $X$ exists because $\Omega$ is nonatomic, and this definition is independent of the choice $X$ thanks to law invariance.
This family of risk measures satisfies a consistency property, namely
\begin{align}
\rho_\mu(f) = \rho_\nu(g), \text{ whenever } \mu \circ f^{-1} = \nu \circ g^{-1}. \label{def:riskmeasureconsistency}
\end{align}
Let $\alpha(\cdot | \mu)$ denote the minimal penalty function associated to $\rho_\mu$, i.e., the restriction to $\P(E)$ of the convex conjugate of $\rho_\mu$:
\[
\alpha(\nu | \mu) = \sup_{f \in L^\infty(E,\mu)}\left(\int_Ef\,d\nu - \rho_\mu(f)\right) = \sup\left\{\int_Ef\,d\nu : f \in L^\infty(E,\mu), \ \rho_\mu(f) \le 0\right\}.
\]
We call $\alpha$ the \emph{divergence induced by $\rho$}.
In summary, the functional $\alpha(\cdot | \cdot)$ is defined for pairs of probability measures on \emph{any Polish space} (or standard Borel space), much like the classical relative entropy and other information divergences, such as the $f$-divergence.
Indeed, when $\rho$ is the entropic risk measure, $\alpha$ is nothing but the usual relative entropy (also known as the Kullback-Leibler divergence)
\[
H(\nu | \mu) = \int \log\left(\frac{d\nu}{d\mu}\right)\,d\nu \ \ \text{ for } \nu \ll \mu, \quad \infty \text{ otherwise}.
\]
When $\rho$ is a shortfall risk measure corresponding to a function $\ell$, the induced divergence is
\[
\alpha(\nu | \mu) = \inf_{t > 0}\frac{1}{t}\left(1 + \int_E\ell^*\left(t\frac{d\nu}{d\mu}\right)d\mu\right), \text{ for } \nu \ll \mu, \quad \infty \text{ otherwise},
\]
where $\ell^*(t) = \sup_{s \in \R}(st - \ell(s))$ is the convex conjugate of $\ell$.
Finally, when $\rho$ is the optimized certainty equivalent corresponding to a function $\phi$, the induced divergence is the $\phi^*$-divergence
\[
\alpha(\nu|\mu) = \int\phi^*\left(\frac{d\nu}{d\mu}\right)d\mu, \text{ for } \nu \ll \mu, \quad \infty \text{ otherwise}.
\]
In fact, we could instead start from a $[0,\infty]$-valued function $\alpha=\alpha(\nu|\mu)$, defined for pairs of probability measure $(\nu,\mu) \in \P(E)^2$ for any Polish space $E$, such that for each Polish space $E$ and each $\mu \in \P(E)$ we have the following properties:
\begin{enumerate}
\item $\alpha(\mu | \mu) = 0$.
\item $\alpha(\nu | \mu) = \infty$ if $\nu \in \P(E)$ is not absolutely continuous with respect to $\mu$.
\item The map $\nu \mapsto \alpha(\nu | \mu)$ is convex and lower semicontinuous with respect to total variation.
\item $\alpha(\nu K | \mu K)\le \alpha(\nu | \mu)$ for every $\nu \in \P(E)$ and every kernel $K$ from $E$ to another Polish space $F$, where $\mu K(dy) := \int_E\mu(dx)K(x,dy) \in \P(F)$.
\end{enumerate}
We call such a functional a \emph{divergence},
and we show that to any divergence there corresponds a unique law invariant risk measure defined on the original space $(\Omega,\F,P)$; we prove this by showing the definitions
\[
\rho(f(X)) := \rho_\mu(f) := \sup_{\nu \in \P(E)}\left(\int_Ef\,d\nu - \alpha(\nu | \mu)\right),
\]
to be consistent in the sense of \eqref{def:riskmeasureconsistency}, where $E$ is a Polish space, $f \in B(E)$, and $\mu = P \circ X^{-1}$ for some $X : \Omega \rightarrow E$. The property (4) corresponds exactly to the consistency property \eqref{def:riskmeasureconsistency} and is known as the \emph{data processing inequality} in information theory, at least when $\alpha$ is the usual relative entropy.

The primary focus of the paper is on the characterization of properties of divergences related to the well known chain rule for relative entropy, which reads
\[
H\left(\nu(dx)K^\nu_x(dy) \ | \ \mu(dx)K^\mu_x(dy)\right) = H(\nu | \mu) + \int\nu(dx)H(K^\nu_x | K^\mu_x),
\]
and holds for all (disintegrated) probability measures $\mu(dx)K^\mu_x(dy)$ and $\nu(dx)K^\nu_x(dy)$ on the product of two Polish spaces. More generally, we say a divergence $\alpha$ is \emph{superadditive} if
\begin{align}
\alpha\left(\nu(dx)K^\nu_x(dy) \ | \ \mu(dx)K^\mu_x(dy)\right) \ge \alpha(\nu | \mu) + \int\nu(dx)\alpha(K^\nu_x | K^\mu_x). \label{intro:superadditivity}
\end{align}
and we say $\alpha$ is \emph{subadditive} if the reverse inequality holds. We characterize this in terms of various properties of the corresponding risk measure $\rho$.

The original motivation for this study comes from an ongoing investigation into the tensorization properties of \emph{concentration inequalities} of the form 
\begin{align}
\rho(\lambda X) \le \gamma(\lambda), \text{ for all } \lambda \ge 0, \label{intro:concentration}
\end{align}
where $\gamma : [0,\infty) \rightarrow [0,\infty]$. In a follow-up paper \cite{lacker-liquidity}, we study concentration inequalities \eqref{intro:concentration} in connection with liquidity risk. When $\rho$ is the entropic risk measure, the inequality \eqref{intro:concentration} is simply a bound on the moment generating function of $X$. Tensorization in this context roughly means bounding $\rho(\lambda h(X,Y))_{\lambda \ge 0}$ in terms of bounds on $\rho(\lambda f(X))_{\lambda \ge 0}$ and $\rho(\lambda g(Y))_{\lambda \ge 0}$, for two given (typically independent) random variables $X$ and $Y$ and various (classes of) functions $f,g,h$. Tensorization properties are typically proven using the chain rule (see \cite{gozlan-leonard-survey} for details, particularly Proposition 1. thereof), so we seek alternatives for the chain rule in order to understand how to extend these ideas to general concentration inequalities of the form \eqref{intro:concentration}.

It turns out that the dual form of superadditivity \eqref{intro:superadditivity} is a so-called \emph{time-consistency} property of the corresponding risk measure $\rho$, which we describe by building on a construction of Weber \cite{weber-distributioninvariant}: Define a functional $\tilde{\rho}$ on $\P(\R)$ by $\tilde{\rho}(P \circ X^{-1}) = \rho(X)$, which is of course well defined thanks to law invariance. For any $\sigma$-field $\G \subset \F$ in $\Omega$ and any $X \in L^\infty$, consider the $\G$-measurable random variable
\[
\rho(X | \G)(\omega) := \tilde{\rho}(P(X \in \cdot \, | \, \G)(\omega)),
\]
where $P(X \in \cdot \, | \, \G)$ denotes a regular conditional law of $X$ given $\G$. We say $\rho$ is acceptance consistent if $\rho(X) \le \rho(\rho(X | \G))$ for every $X \in L^\infty$ and any $\sigma$-field $\G \subset \F$.
If the reverse inequality holds, we say $\rho$ is \emph{rejection consistent}. If $\rho$ is both acceptance and rejection consistent, we say it is \emph{time consistent}. We show that acceptance consistency of $\rho$ is essentially equivalent to the superadditivity of the induced divergence $\alpha$, and we provide an additional characterization in terms of a property of the \emph{measure acceptance set}
\[
\A := \left\{P \circ X^{-1} : X \in L^\infty, \ \rho(X) \le 0 \right\} \subset \P(\R).
\]
These various characterizations are put to use to find those functions $\ell$ and $\phi$ for which the corresponding shortfall risk measures and optimized certainty equivalents are acceptance consistent.
It follows from the results of Kupper and Schachermayer \cite{kupper-schachermayer} that the entropic risk measure is essentially the only time consistent risk measure, and as a corollary we find that the relative entropy is the only divergence (up to a scalar multiple) satisfying the chain rule.\footnote{We make no attempt to reconcile our characterization of relative entropy with the many already present in the literature (see the survey of Csisz\'ar \cite{csiszar2008axiomatic}), but we can at least say with confidence that the techniques by which we obtained it are new, notably avoiding functional equations.} Ultimately, we find that not many law invariant risk measures are acceptance consistent (or rejection consistent) other than the entropic one, or modest perturbations thereof. In other words, not many divergences beyond relative entropy are superadditive. Although our results are somewhat negative in this sense, the construction and characterization of divergences induced by risk measures is interesting in its own right, and they appear to be useful tools in the study of law invariant risk measures. Moreover, we find some value in understanding the limitations of our divergences in the applications discussed above.

We also briefly revisit the related results of Weber \cite{weber-distributioninvariant}. Say that $\rho$ is \emph{weakly acceptance consistent} if $\rho(X) \le 0$ whenever $\rho(X | \G) \le 0$ a.s., for $X \in L^\infty$ and $\sigma$-fields $\G \subset \F$. Weber showed that this is essentially equivalent to the convexity of the measure acceptance set $\A$.
We show that weak acceptance consistency is also equivalent to an inequality weaker than superadditivity:
\[
\alpha\left(\nu(dx)K^\nu_x(dy) \ | \ \mu(dx)K^\mu_x(dy)\right) \ge \int\nu(dx)\alpha(K^\nu_x | K^\mu_x).
\]
Time consistency properties of dynamic risk measures have by now been studied thoroughly \cite{riedel2004dynamic,detlefsen2005conditional,frittelli2004dynamic,follmer2006convex,tutsch2008update,cheridito2011composition}. The nice survey of Acciaio and Penner \cite{acciaio-penner-dynamic} will be a useful reference, although we will mostly work with the type of dynamic law invariant risk measures constructed by Weber in \cite{weber-distributioninvariant}. With this rich literature in mind, the most novel of our results on time consistency is the characterization of acceptance consistency in terms of the shift-convexity of the measure acceptance set, which nicely complements Weber's result on weak acceptance consistency. In theory, our characterization in terms of superadditivity \eqref{intro:superadditivity} could be deduced from results in \cite{acciaio-penner-dynamic}, but this is non-trivial: The key difference is that previous papers on the subject (including \cite{acciaio-penner-dynamic}) use \emph{essential suprema} to define the minimal penalty function of a conditional risk measure. We work purely with pointwise definitions, and while this distinction is largely technical, there is a non-trivial gap between the two stemming from a delicate measurable selection argument. See Section \ref{se:essentialsuprema} for details.

The above results must be qualified: the equivalence of superadditivity and acceptance consistency is only proven under the additional assumption that the divergence $\alpha$ is \emph{simplified}, in the sense that
\[
\alpha(\nu | \mu) = \sup_{f \in C([0,1])}\left(\int_{[0,1]}f\,d\nu - \rho_\mu(f)\right),
\]
for each $\mu,\nu \in \P([0,1])$. This additional assumption is admittedly somewhat of a nuisance, and it is unclear if the main result on superadditivity holds without it. While we did not identify a nice dual characterization for this condition, we have identified a common stronger condition: Namely, $\alpha$ is simplified as soon as $\rho$ is \emph{Lebesgue continuous}, in the sense that $\rho(X_n) \rightarrow \rho(X)$ whenever $X_n$ are uniformly bounded and $X_n \rightarrow X$ a.s. This is a strong assumption, but it indeed holds for our main examples of shortfall risk measures and optimized certainty equivalents. Moreover, we show that Lebesgue continuity of $\rho$ is actually equivalent to joint lower semicontinuity of the induced divergence $\alpha$ (with respect to weak convergence).

Finally, Section \ref{se:furtherproperties} we study miscellaneous properties of divergences.
First, joint convexity of $\alpha$ is shown to be equivalent to the concavity of $\rho$ on the level of probability measures (i.e., concavity of $\tilde{\rho}$ defined above), a property studied in some detail by Acciaio and Svindland \cite{acciaio-svindland-concave} which holds for every optimized certainty equivalent. Second, as an interesting decision-theoretic consequence of the defining property (4) of divergences, note that if $T : E \rightarrow F$ is measurable then $\alpha(\nu \circ T^{-1} | \mu \circ T^{-1}) \le \alpha(\nu | \mu)$; we show that equality holds if $T$ is a sufficient statistic for $\{\mu,\nu\}$. Lastly, we show that simplified divergences can be approximated in a sense by their projections on finite sets.

The paper is organized as follows. Section \ref{se:riskmeasures} reviews the basic definitions and duality results of law invariant risk measures before introducing divergences and studying their most essential properties. The main characterization of divergences in terms of law invariant risk measures is given by Proposition \ref{pr:informationinequality} and Theorem \ref{th:divergence-characterization}. Section \ref{se:convexitycontinuitysufficiency} then introduces the concept of a \emph{simplified divergence}, and as an important class of examples we then clarify the connection between continuity properties of a risk measure and joint lower semicontinuity of the induced divergence. This is a useful preparatory step for Section \ref{se:timeconsistency}, which turns to time consistency and superadditivity. The main Theorem \ref{th:mainequivalence} characterizes time consistency properties of a law invariant risk measure in terms of both the additivity properties of the induced divergence and what we call the \emph{shift-convexity} of its measure acceptance set. Section \ref{se:furtherproperties} studies additional results pertaining to convexity and some more information-theoretic uses for divergences, and it should be noted that this section is completely independent of Section \ref{se:timeconsistency}. Finally, Section \ref{se:examples} applies the theory to the examples of shortfall risk measures and optimized certainty equivalents. The short appendix is devoted to the proof of a technical lemma.

\section{Risk measures and divergences} \label{se:riskmeasures}
First, let us fix some notation. Throughout the paper, $(\Omega,\F,P)$ is a fixed probability space, which we assume is a nonatomic standard Borel space. Abbreviate $L^p=L^p(\Omega,\F,P)$ as usual for the set of (equivalence classes of) $p$-integrable real-valued measurable functions on $\Omega$. Let $\P(\Omega)$ denote the set of probability measures on $(\Omega,\F)$, and let $\P_P(\Omega)$ denote the subset consisting of those measures which are absolutely continuous with respect to $P$.
As stated in the introduction, a \emph{risk measure} to us is a convex nondecreasing (with respect to a.s. order) functional $\rho : L^\infty \rightarrow \R$ satisfying $\rho(0)=0$ and $\rho(X+c)=\rho(X)+c$ for all $X\in L^\infty, c \in \R$. Note again that this is somewhat different from the standard definition, in which $\rho$ is instead nonincreasing \cite{follmer-schied-book}.
Law-invariant risk measures possess some nice additional structure, highlighted in particular by the results of \cite{jouini-touzi-schachermayer} and \cite{filipovic-svindland}, though we will not need the latter.

\begin{theorem}[Theorem 2.1 of \cite{jouini-touzi-schachermayer}] \label{th:jouini-touzi-schachermayer}
Every law-invariant risk measure $\rho$ satisfies the \emph{Fatou property}, which means that whenever $X_n \in L^\infty$ are uniformly bounded and converge a.s. to $X \in L^\infty$, then $\rho(X) \le \liminf_{n\rightarrow\infty}\rho(X_n)$.
\end{theorem}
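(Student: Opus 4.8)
The plan is to reduce the Fatou property to two structural consequences of law invariance on a nonatomic space and then to isolate the one genuinely hard continuity step. First I would record the elementary regularity of $\rho$: monotonicity and cash additivity give, for $X,Y \in L^\infty$, the bound $X \le Y + \|X-Y\|_\infty$ a.s., hence $\rho(X) \le \rho(Y) + \|X-Y\|_\infty$ and by symmetry $|\rho(X)-\rho(Y)| \le \|X-Y\|_\infty$; thus $\rho$ is $1$-Lipschitz for the supremum norm, and in particular its restriction to any finite-dimensional subspace $L^\infty(\Omega,\G,P)$, with $\G$ generated by a finite partition, is continuous. Since $(\Omega,\F,P)$ is standard Borel, $L^1$ is separable, so the weak* topology $\sigma(L^\infty,L^1)$ is metrizable on norm-bounded sets; moreover a uniformly bounded a.s.-convergent sequence converges weak* by dominated convergence. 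Consequently it suffices to prove weak* lower semicontinuity on balls, i.e. $\rho(X) \le \liminf_n \rho(X_n)$ whenever $\|X_n\|_\infty \le C$ and $X_n \to X$ weak*.

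The key input from law invariance I would establish is \emph{dilatation monotonicity}: $\rho(\E[X \mid \G]) \le \rho(X)$ for every sub-$\sigma$-field $\G$. On a nonatomic space $\E[X \mid \G]$ is a mean-preserving contraction of $X$, i.e. it precedes $X$ in \emph{convex order}, and a convex nondecreasing law-invariant functional is monotone with respect to convex order; this is the one place where both law invariance and nonatomicity are indispensable, and it is proven via a rearrangement (Hardy--Littlewood / quantile) argument. Granting this, fix increasing finite $\sigma$-fields $\G_k \uparrow \F$. For fixed $k$, weak* convergence of the bounded sequence forces $\E[X_n \mathbf{1}_A]/P(A) \to \E[X \mathbf{1}_A]/P(A)$ for each cell $A$ of $\G_k$, so $\E[X_n \mid \G_k] \to \E[X \mid \G_k]$ in the finite-dimensional space $L^\infty(\G_k)$, where $\rho$ is continuous; combined with dilatation monotonicity this yields
\[
\rho(\E[X \mid \G_k]) = \lim_{n}\rho(\E[X_n \mid \G_k]) \le \liminf_n \rho(X_n).
\]
Because the $\G_k$ increase, the tower property and dilatation monotonicity make $k \mapsto \rho(\E[X \mid \G_k])$ nondecreasing and bounded above by $\liminf_n \rho(X_n)$.

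It therefore remains to pass to the limit in $k$, that is, to prove the continuity-from-below statement $\rho(X) \le \lim_k \rho(\E[X \mid \G_k])$, and this is the main obstacle: the conditional expectations converge to $X$ only a.s. and in $L^1$, not in supremum norm, so the $1$-Lipschitz estimate does not close the gap. Here I would again exploit nonatomicity and law invariance, choosing the $\G_k$ adapted to $X$ through its quantile function on $([0,1],\mathrm{Leb})$ so that the rearranged conditional expectations increase to $X$ with control on their distributions, and use the Hardy--Littlewood comonotone structure to transfer the a.s. convergence into convergence that $\rho$ can see. This delicate measurable/rearrangement selection is precisely the technical core of the Jouini--Schachermayer--Touzi argument, and I expect it, rather than any of the preceding reductions, to be where essentially all the work lies.
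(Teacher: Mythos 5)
The paper does not prove this statement; it is imported verbatim as Theorem 2.1 of Jouini--Schachermayer--Touzi, so there is no internal proof to compare against. Judged on its own terms, your architecture (Lipschitz continuity in sup norm, reduction to finite $\sigma$-fields, dilatation monotonicity, then a limit in $k$) is the right one and matches known proofs, but two points need repair. First, you justify dilatation monotonicity by appealing to the general fact that law-invariant convex functionals respect convex order; the standard proof of that fact (e.g.\ \cite[Corollary 4.65]{follmer-schied-book}, which this paper itself invokes elsewhere) \emph{uses} the Fatou property, so as written this step is circular. For finite $\G$ generated by a partition $\{A_1,\dots,A_m\}$ one can instead argue directly: after an $\epsilon$-approximation of $Y$ by a simple function, the averages $\frac{1}{N}\sum_{j}Y\circ\tau_j$ over cyclic measure-preserving shifts within each cell converge to $\E[Y\mid\G]$ in \emph{sup norm}, and convexity plus law invariance plus the $1$-Lipschitz bound then give $\rho(\E[Y\mid\G])\le\rho(Y)$ without any semicontinuity input. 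Nonatomicity is used here, but convex order is not needed.

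Second, and more seriously, you leave the decisive step --- passing from $\rho(\E[X\mid\G_k])\le\liminf_n\rho(X_n)$ to $\rho(X)\le\liminf_n\rho(X_n)$ --- as an acknowledged open gap, remarking that the martingale $\E[X\mid\G_k]$ converges to $X$ only a.s.\ and in $L^1$. This is a genuine hole in the proposal, since lower semicontinuity of $\rho$ along that martingale is precisely an instance of the Fatou property you are trying to prove. The fix, however, is elementary and requires none of the Hardy--Littlewood comonotone machinery you gesture at: choose $\G_k:=\sigma(\phi_k\circ X)$ where $\phi_k(x)=2^{-k}\lfloor 2^kx\rfloor$, a finite $\sigma$-field since $X$ is bounded. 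Then $X$ oscillates by at most $2^{-k}$ on each cell, so $\|\E[X\mid\G_k]-X\|_\infty\le 2^{-k}$, while dominated convergence still gives $\E[X_n\mid\G_k]\to\E[X\mid\G_k]$ in the finite-dimensional space $L^\infty(\G_k)$. The chain
\[
\rho(X)\le\rho(\E[X\mid\G_k])+2^{-k}=\lim_{n}\rho(\E[X_n\mid\G_k])+2^{-k}\le\liminf_{n}\rho(X_n)+2^{-k}
\]
then closes the argument upon letting $k\to\infty$. In short: condition on a discretization of the \emph{limit} $X$ rather than on an abstract increasing sequence $\G_k\uparrow\F$, and the ``technical core'' you defer to the original paper evaporates.
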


Let us recall a classical duality result, but note that the details of the presentation are somewhat unusual: We say a function $\alpha : \P_P(\Omega) \rightarrow [0,\infty]$ is a \emph{penalty function} for $\rho$ if it holds that
\begin{align}
\rho(X) = \sup_{Q \in \P_P(\Omega)}\left(\E^Q[X] - \alpha(Q)\right).
\label{def:penalty}
\end{align}
(Note that the supremum involves only countably additive measures, and we will make no mention of finite additivity.)
Here $\E^Q$ denotes expectation with respect to the probability $Q$. Expectation under the reference measure $P$ is simply denoted $\E$, and integrals on spaces other than $\Omega$ are written in a more explicit measure-theoretic notation.

\begin{theorem}[Theorem 4.33 of \cite{follmer-schied-book}] \label{th:follmerschied}
Suppose $\rho$ is a law invariant risk measure. Then the function $\alpha : \P_P(\Omega) \rightarrow [0,\infty]$ defined by
\begin{align}
\alpha(Q) := \sup\left\{\E^Q[X] : X \in L^\infty, \ \rho(X) \le 0\right\} = \sup_{X \in L^\infty}\left\{\E^Q[X] - \rho(X)\right\} \label{def:minimalpenalty}
\end{align}
is a penalty function for $\rho$. In fact, it is the \emph{minimal penalty function}, in the sense that any other penalty function $\alpha'$ for $\rho$ satisfies $\alpha \le \alpha'$.
\end{theorem}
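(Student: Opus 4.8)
The plan is to obtain \eqref{def:penalty} from the Fenchel--Moreau biconjugation theorem applied to $\rho$ as a convex functional on the dual pair $(L^\infty, L^1)$, and then to identify the dual variables with densities of measures in $\P_P(\Omega)$ using the structural properties of $\rho$.

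First I would upgrade the Fatou property of Theorem \ref{th:jouini-touzi-schachermayer} to lower semicontinuity of $\rho$ in the weak-$*$ topology $\sigma(L^\infty, L^1)$. By the Krein--Smulian theorem it suffices to check that each sublevel set $\{\rho \le c\}$ is weak-$*$ closed after intersecting with a norm ball $\{\|X\|_\infty \le r\}$; since $L^1$ is separable, the weak-$*$ topology is metrizable on such a ball, so I may argue with sequences. Given $X_n \to X$ weak-$*$ with $\|X_n\|_\infty \le r$ and $\rho(X_n) \le c$, Mazur's lemma produces convex combinations $\tilde{X}_n \in \mathrm{conv}\{X_k : k \ge n\}$ converging to $X$ in $L^1$, hence a.s. along a subsequence; these remain uniformly bounded, and convexity of $\rho$ gives $\rho(\tilde{X}_n) \le c$, so the Fatou property yields $\rho(X) \le \liminf_n \rho(\tilde{X}_n) \le c$.

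With weak-$*$ lower semicontinuity in hand, Fenchel--Moreau gives $\rho = \rho^{**}$, i.e.
\[
\rho(X) = \sup_{Z \in L^1}\left(\E[XZ] - \rho^*(Z)\right), \qquad \rho^*(Z) := \sup_{X \in L^\infty}\left(\E[XZ] - \rho(X)\right).
\]
I would then exploit the three defining properties to restrict the supremum. Monotonicity forces $\rho^*(Z) = +\infty$ unless $Z \ge 0$ a.s.; cash additivity forces $\rho^*(Z) = +\infty$ unless $\E[Z] = 1$; and normalization $\rho(0)=0$ gives $\rho^*(Z) \ge 0$. Hence only densities $Z = dQ/dP$ of measures $Q \in \P_P(\Omega)$ contribute, and setting $\alpha(Q) := \rho^*(dQ/dP)$ turns the biconjugate into \eqref{def:penalty}. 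The identity \eqref{def:minimalpenalty} between the two expressions for $\alpha$ follows from cash additivity: for any $X$, the shifted variable $X - \rho(X)$ satisfies $\rho(X - \rho(X)) = 0$, so it is admissible in the first supremum and realizes the value $\E^Q[X] - \rho(X)$.

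Finally, minimality is immediate and requires none of the above machinery: if $\alpha'$ is any penalty function, then $\rho(X) \ge \E^Q[X] - \alpha'(Q)$ for all $X$ and all $Q$, whence $\alpha'(Q) \ge \sup_{X \in L^\infty}(\E^Q[X] - \rho(X)) = \alpha(Q)$. I expect the only genuine obstacle to be the first step: transferring the Fatou property to true weak-$*$ lower semicontinuity, where the interplay of Krein--Smulian, metrizability on balls, and the Mazur/a.s.-subsequence argument must be carried out carefully.
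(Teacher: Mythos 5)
The paper does not prove this statement itself --- it is quoted as Theorem 4.33 of F\"ollmer and Schied, whose proof is exactly the route you describe: upgrade the Fatou property (supplied here by Theorem \ref{th:jouini-touzi-schachermayer}) to $\sigma(L^\infty,L^1)$-lower semicontinuity via Krein--Smulian, metrizability on balls, and Mazur's lemma, then apply Fenchel--Moreau and use monotonicity, cash additivity, and normalization to reduce the dual variables to densities of measures in $\P_P(\Omega)$. Your argument is correct, including the identification of the two expressions in \eqref{def:minimalpenalty} via the shift $X \mapsto X - \rho(X)$ and the one-line minimality claim, so there is nothing to fix.
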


Note that the dual representation \eqref{def:minimalpenalty} implies that the minimal penalty function is convex and lower semicontinuous with respect to the total variation topology, as well as the weaker topology $\sigma(\P_P(\Omega),L^\infty)$.\footnote{As usual, when $F$ is a set of real-valued functions on a set $E$, the notation $\sigma(E,F)$ refers to the coarsest topology on $E$ rendering the elements of $F$ continuous.} There is an alternative dual representation more specific to law invariant risk measures, due to Kusuoka \cite{kusuoka2001law} and extended in \cite{frittelli2005law,jouini-touzi-schachermayer}, but we will make no use of this.

\begin{remark} \label{re:equivalenceclasses}
Note that we may afford to be lazy about the fact that $\rho$ is to be evaluated at \emph{equivalence classes}, i.e. elements of $L^\infty$, as opposed to specific measurable functions. For a risk measure $\rho$, we may define $\rho(X) := \rho([X])$ in the obvious way for a measurable function $X : \Omega \rightarrow \R$ by finding the equivalence class $[X] \in L^\infty$ to which $X$ belongs. With this in mind, we may then define $\alpha(Q) := \infty$ for $Q \in \P(\Omega)$ which are not absolutely continuous with respect to $P$, and then the dual formula \eqref{def:penalty} may be re-written
\[
\rho(X) = \sup_{Q \in \P(\Omega)}\left(\E^Q[X] - \alpha(Q)\right),
\]
for bounded measurable functions $X : \Omega \rightarrow \R$.
\end{remark}

As with many properties of convex risk measures, law invariance may be alternatively characterized by a property of the minimal penalty function, and this will be a building block for a more general discussion in the next section. This characterization appears to be new, although a very similar result appeared in \cite[Proposition 2]{shapiro2013kusuoka}, and see also \cite[Lemma A.4]{jouini-touzi-schachermayer}.

\begin{proposition} \label{pr:lawinvariantcharacterization}
Suppose a risk measure $\rho$ on $L^\infty$ has the Fatou property (see Theorem \ref{th:jouini-touzi-schachermayer}). Then $\rho$ is law-invariant if and only if it has a penalty function $\alpha$ satisfying $\alpha(Q \circ T^{-1}) \le \alpha(Q)$ for every $Q \in \P_P(\Omega)$ and for every measurable $T : \Omega \rightarrow \Omega$ satisfying $P \circ T^{-1} = P$.
\end{proposition}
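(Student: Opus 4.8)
The plan is to pass between the two formulations through the duality \eqref{def:penalty} and the elementary identity $\E^Q[X\circ T]=\E^{Q\circ T^{-1}}[X]$, valid for any $Q\in\P_P(\Omega)$, any $X\in L^\infty$, and any measurable $T:\Omega\rightarrow\Omega$ with $P\circ T^{-1}=P$. Note first that such a $T$ sends $P$-null sets to $P$-null sets, so $X\circ T$ is a well-defined element of $L^\infty$ and $Q\circ T^{-1}\in\P_P(\Omega)$ whenever $Q\in\P_P(\Omega)$. The bridge I would establish is that the penalty inequality $\alpha(Q\circ T^{-1})\le\alpha(Q)$ for all measure-preserving $T$ is exactly what makes $\rho$ \emph{monotone under measure-preserving substitution}, i.e.\ $\rho(X\circ T)\le\rho(X)$ for all $X\in L^\infty$: indeed, assuming the penalty inequality and using \eqref{def:penalty} at the point $X\circ T$,
\[
\rho(X\circ T)=\sup_{Q\in\P_P(\Omega)}\left(\E^{Q\circ T^{-1}}[X]-\alpha(Q)\right)\le\sup_{Q\in\P_P(\Omega)}\left(\E^{Q\circ T^{-1}}[X]-\alpha(Q\circ T^{-1})\right)\le\rho(X),
\]
the last step holding because $Q\circ T^{-1}$ ranges over a subset of $\P_P(\Omega)$.

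For the forward implication I would take $\alpha$ to be the minimal penalty function of Theorem \ref{th:follmerschied}, available since $\rho$ is law invariant. Because $T$ is measure-preserving, $X\circ T$ and $X$ have the same law, so law invariance gives $\rho(X\circ T)=\rho(X)$; in particular $Z\circ T$ lies in $\{\rho\le0\}$ whenever $Z$ does. Feeding this into the representation $\alpha(Q)=\sup\{\E^Q[W]:W\in L^\infty,\ \rho(W)\le0\}$ yields
\[
\alpha(Q\circ T^{-1})=\sup\left\{\E^Q[Z\circ T]:Z\in L^\infty,\ \rho(Z)\le0\right\}\le\sup\left\{\E^Q[W]:W\in L^\infty,\ \rho(W)\le0\right\}=\alpha(Q),
\]
which is the desired property.

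For the converse I am given a penalty function $\alpha$ with the stated property, so the bridge above supplies $\rho(X\circ T)\le\rho(X)$ for every measure-preserving $T$, and I must deduce $\rho(X)=\rho(Y)$ whenever $X$ and $Y$ have the same law. I would treat simple functions first: if $X=\sum_i x_i\mathbf 1_{A_i}$ and $Y=\sum_i x_i\mathbf 1_{B_i}$ for the same distinct values $x_i$ and partitions with $P(A_i)=P(B_i)$, then since $(\Omega,\F,P)$ is nonatomic and standard Borel there is a bimeasurable measure-preserving bijection $T$ (mod $P$) with $T(B_i)=A_i$ for every $i$; for this $T$ one checks $X\circ T=Y$ exactly, so $\rho(Y)\le\rho(X)$, and applying the inequality to the measure-preserving inverse $T^{-1}$ (which gives $Y\circ T^{-1}=X$) yields the reverse, hence equality. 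For general bounded $X,Y$ with the same law I would quantize: with $\psi_n(t):=\lfloor nt\rfloor/n$, the functions $X_n:=\psi_n\circ X$ and $Y_n:=\psi_n\circ Y$ are simple, have the same law, and satisfy $X_n\le X\le X_n+1/n$ and likewise for $Y$; monotonicity and cash additivity then give $0\le\rho(X)-\rho(X_n)\le1/n$ and $0\le\rho(Y)-\rho(Y_n)\le1/n$, while the simple case gives $\rho(X_n)=\rho(Y_n)$. Letting $n\to\infty$ produces $\rho(X)=\rho(Y)$. (Alternatively, the Fatou property could replace this uniform-continuity step.)

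The crux of the argument is the construction in the simple-function case of a measure-preserving bijection exchanging the level sets, which rests essentially on the nonatomic standard-Borel hypothesis: each cell of equal measure is isomorphic mod $P$ to the corresponding one, and these isomorphisms glue to a global bijection. I would also flag the subtlety that forces the detour through simple functions, namely that for general equidistributed $X,Y$ there need not exist \emph{any} measure-preserving $T$ with $Y=X\circ T$ — for instance $X$ the doubling map and $Y$ the identity on $[0,1]$, where a Lebesgue-density argument rules out every measurable section. Reducing to the finite, bijective case removes this obstruction, and the uniform approximation transports the identity $\rho(X_n)=\rho(Y_n)$ to the limit.
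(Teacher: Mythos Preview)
Your proof is correct. The forward implication is identical to the paper's. For the converse you take a different route: the paper invokes a transfer result of Kallenberg to produce, for arbitrary equidistributed $X,Y\in L^\infty$, a measure-preserving $T:\Omega\to\Omega$ with $P(X=Y\circ T)=1$, after which the dual computation yields $\rho(X)\le\rho(Y)$ in one stroke, and ``reversing the roles of $X$ and $Y$'' is meant to give the other inequality. You instead first extract the one-sided bound $\rho(X\circ T)\le\rho(X)$ from the penalty hypothesis, then construct an explicit bimeasurable measure-preserving bijection for \emph{simple} functions by matching level sets of equal mass (so that both $T$ and $T^{-1}$ are available and the equality $\rho(X_n)=\rho(Y_n)$ follows from two applications of the one-sided bound), and finally pass to general bounded $X,Y$ via quantization together with monotonicity and cash additivity. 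Your argument is longer but entirely self-contained; the paper's is shorter but leans on the cited result. The subtlety you flag with the doubling map is genuine---for a prescribed ordering of the pair, a measure-preserving $T$ realizing $Y=X\circ T$ need not exist (your Lebesgue-density obstruction is exactly right)---and this is precisely what motivates your detour through the bijective simple case rather than appealing to a single $T$ for general $X,Y$.
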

\begin{proof}
First, assume $\rho$ is law invariant, and let $\alpha$ be its minimal penalty function provided by Theorem \ref{th:follmerschied}. Let $T : \Omega \rightarrow \Omega$ be a measurable map satisfying $P \circ T^{-1} = P$. Then $X \circ T$ and $X$ have the same law and thus $\rho(X) = \rho(X \circ T)$ for every $X \in L^\infty$. Hence
\begin{align*}
\alpha(Q \circ T^{-1}) &= \sup_{X \in L^\infty}\left(\E^{Q \circ T^{-1}}[X] - \rho(X)\right) \\
	&= \sup_{X \in L^\infty}\left(\E^{Q}[X \circ T] - \rho(X \circ T)\right) \\
	&\le \sup_{X \in L^\infty}\left(\E^{Q}[X] - \rho(X)\right) \\
	&= \alpha(Q).
\end{align*}
To prove the converse, fix $X,Y \in L^\infty$ with the same law. By \cite[Corollary 6.11]{kallenberg-foundations} (since the probability space is nonatomic) we may find a measurable map $T : \Omega \rightarrow \Omega$ such that $P \circ T^{-1} = P$ and $P(X = Y \circ T)=1$. Then
\begin{align*}
\rho(X) &= \sup_{Q \in \P_P(\Omega)}\left(\E^Q[X] - \alpha(Q)\right) \\
	&= \sup_{Q \in \P_P(\Omega)}\left(\E^{Q \circ T^{-1}}[Y] - \alpha(Q)\right) \\
	&\le \sup_{Q \in \P_P(\Omega)}\left(\E^{Q \circ T^{-1}}[Y] - \alpha(Q \circ T^{-1})\right) \\
	&\le \sup_{Q \in \P_P(\Omega)}\left(\E^Q[Y] - \alpha(Q)\right) \\
	&= \rho(Y).
\end{align*}
Reversing the roles of $X$ and $Y$ completes the proof.
\end{proof}

\begin{remark}
From the proof of Proposition \ref{pr:lawinvariantcharacterization}, it should be clear that the assumption that $\rho$ has the Fatou property is not needed. We state only this simpler form in order to avoid introducing additional terminology, and to avoid dwelling on details involving finitely additive measures.
\end{remark}

\subsection{Divergences and their characterization} \label{se:divergences}

Let us now exploit law invariance to construct a corresponding \emph{family} of risk measures and what we refer to as \emph{divergences}.
Fix for the rest of this section a law-invariant risk measure $\rho$. Given a Polish space $E$, let $\P(E)$ denote the set of Borel probability measures on $E$. 
We write $\nu \ll \mu$ to mean $\nu$ is absolutely continuous with respect to $\mu$. Given any $\mu \in \P(E)$, write $\P_\mu(E) := \{\nu \in \P(E) : \nu \ll \mu\}$. Define also $C(E)$, $C_b(E)$, and $B(E)$ to be the sets of continuous, bounded continuous, and bounded measurable functions on $E$, respectively.
The space $\P(E)$ is endowed with the $\sigma$-field generated by the maps $\mu \mapsto \mu(A)$, where $A \subset E$ is Borel; this equals the Borel $\sigma$-field generated by the topology of weak convergence, i.e., $\sigma(\P(E),C_b(E))$.

Given a Polish space $E$ and $\mu \in \P(E)$, we may find (because $\Omega$ is nonatomic) a measurable function $X : \Omega \rightarrow E$ such that $P \circ X^{-1} = \mu$. We may then define a (law invariant) risk measure $\rho_\mu$ on $L^\infty(E,\mu)$ by
\[
\rho_\mu(f) := \rho(f(X)).
\]
Note that by law-invariance this definition does not depend on the choice of $X$, as long as $P \circ X^{-1} = \mu$. 
We call $(\rho_\mu)_{\mu,E}$ the \emph{family of risk measures induced by $\rho$}.
This family of risk measures satisfies a consistency property, namely 
\begin{align}
\rho_\mu(f) = \rho_\nu(g), \text{ whenever } \mu \circ f^{-1} = \nu \circ g^{-1}. \label{def:riskconsistency}
\end{align}
In particular, for any measurable map $T$ from one Polish space $E$ to another $F$, we have $
\rho_{\mu \circ T^{-1}}(f) = \rho_\mu(f \circ T)$, for $f \in L^\infty(F,\mu \circ T^{-1})$. 
The same construction is valid when $E$ is any standard Borel space, but for simplicity we stick with Polish spaces.

The minimal penalty function of $\rho_\mu$ is denoted $\alpha( \cdot | \mu) : \P_\mu(E) \rightarrow [0,\infty]$ and defined by
\begin{align}
\alpha(\nu | \mu) &:= \sup_{f \in B(E)}\left(\int_E f\,d\nu - \rho_\mu(f)\right) = \sup\left\{\int_E f\,d\nu : f \in L^\infty(E,\mu), \ \rho_\mu(f) \le 0\right\}. \label{def:alpha}
\end{align}
Extend $\alpha(\cdot | \mu)$ to all of $\P(E)$ by setting $\alpha(\nu | \mu) = \infty$ whenever $\nu$ is not absolutely continuous with respect to $\mu$. Then, for $f \in B(E)$,
\[
\rho(f(X)) = \rho_\mu(f) = \sup_{\nu \in \P(E)}\left(\int_E f\,d\nu - \alpha(\nu | \mu)\right), \text{ if } P \circ X^{-1} = \mu,
\]
and it is easy to check that \eqref{def:alpha} remains valid for $\nu \in \P(E) \backslash \P_\mu(E)$.
(As in Remark \ref{re:equivalenceclasses}, let us not be overly careful about distinguishing between measurable functions and equivalence classes thereof.)
We refer to $\alpha(\cdot|\cdot)$ as the \emph{divergence induced by $\rho$}.
Note that $\alpha(\cdot | \cdot)$ is defined for \emph{pairs} of probability measures on \emph{any} Polish space. Additionally, $\alpha(\cdot | \mu)$ is always convex and lower semicontinuous with respect to total variation, and also with respect to the topology $\sigma(\P(E),B(E))$. 
An alternative expression for the divergence induced by $\rho$ is through the \emph{measure acceptance set}
\[
\A := \left\{P \circ X^{-1} : X \in L^\infty, \ \rho(X) \le 0\right\} \subset \P(\R).
\]
Indeed, we may then write
\[
\alpha(\nu|\mu) = \sup\left\{ \int_Ef\,d\nu : f \in B(E), \ \mu \circ f^{-1} \in \A\right\}.
\]

Divergences satisfy a consistency property related to \eqref{def:riskconsistency}, the statement of which requires some notation involving kernels. Given Polish spaces $E$ and $F$, a \emph{kernel from $E$ to $F$} is a measurable function $E \ni x \mapsto K_x \in \P(F)$. Given $\mu \in \P(E)$, write $\mu K := \int_E\mu(dx)K_x(\cdot)$ for the mean measure in $\P(F)$, i.e.,
\[
\mu K(A) = \int_E\mu(dx)K_x(A), \text{ for } A \subset E \text{ Borel}.
\]
For $f \in B(F)$, write $Kf$ for the function $Kf(x) = \int_FK_x(dy)f(y)$ in $B(E)$. Note the identity $\int_F f\,d(\mu K) = \int_E Kf\,d\mu$.

\begin{proposition} \label{pr:informationinequality}
Let $\rho$ be a law invariant risk measure with divergence $\alpha$. If $E$ and $F$ are Polish spaces and $K$ is a kernel from $E$ to $F$, then
\begin{align}
\alpha(\nu K | \mu K) \le \alpha(\nu | \mu), \label{def:informationinequality}
\end{align}
for all $\mu,\nu \in \P(E)$.
In particular, if $T : E \rightarrow F$ is measurable, then
\[
\alpha(\nu \circ T^{-1} | \mu \circ T^{-1}) \le \alpha(\nu | \mu),
\]
and equality holds if $T$ is bijective with measurable inverse.
\end{proposition}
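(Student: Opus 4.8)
The plan is to reduce \eqref{def:informationinequality} to a conditional Jensen inequality for $\rho$ and then establish the latter using convexity, law invariance, and the Fatou property. Starting from the dual definition \eqref{def:alpha}, observe that for every $f \in B(F)$ we have $Kf \in B(E)$ and $\int_F f\,d(\nu K) = \int_E Kf\,d\nu$, so the quantity $\int_F f\,d(\nu K) - \rho_{\mu K}(f)$ appearing in $\alpha(\nu K | \mu K)$ equals $\int_E Kf\,d\nu - \rho_{\mu K}(f)$. If I can show the pointwise comparison $\rho_{\mu K}(f) \ge \rho_\mu(Kf)$, then this is bounded above by $\int_E Kf\,d\nu - \rho_\mu(Kf) \le \alpha(\nu|\mu)$, and taking the supremum over $f \in B(F)$ yields \eqref{def:informationinequality}.

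To interpret and prove the comparison $\rho_{\mu K}(f) \ge \rho_\mu(Kf)$, I would realize the kernel on $\Omega$: since $E \times F$ is standard Borel and $\Omega$ is nonatomic, choose $(X,Y):\Omega \to E\times F$ with $P\circ(X,Y)^{-1} = \mu(dx)K_x(dy)$, so that $X \sim \mu$, $Y \sim \mu K$, and the conditional law of $Y$ given $X$ is $K_X$; then $(Kf)(X) = \E[f(Y)\,|\,X]$ a.s. Writing $Z = f(Y)$ and $\G = \sigma(X)$, the comparison becomes the conditional Jensen inequality $\rho(\E[Z\,|\,\G]) \le \rho(Z)$, which is where convexity must enter.

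For the conditional Jensen inequality, the key idea is to exhibit $\E[Z\,|\,\G]$ as a limit of convex combinations of law-preserving rearrangements of $Z$. First reduce to finitely generated $\G$: taking an increasing sequence of finite $\sigma$-fields $\G_n \uparrow \G$, martingale convergence gives $\E[Z\,|\,\G_n] \to \E[Z\,|\,\G]$ a.s. and boundedly, so by the Fatou property (Theorem \ref{th:jouini-touzi-schachermayer}) it suffices to treat each $\G_n$. For a finite $\sigma$-field with atoms $A_1,\dots,A_k$, identify each atom with $([0,1],\mathrm{Leb})$ (possible since $P$ is nonatomic and standard Borel) and let $T^{(m)}_0,\dots,T^{(m)}_{m-1}$ act on each atom as the rotations by $j/m$; each $T^{(m)}_j:\Omega\to\Omega$ satisfies $P\circ(T^{(m)}_j)^{-1}=P$, and by equidistribution the averages $W_m := \tfrac1m\sum_{j=0}^{m-1}Z\circ T^{(m)}_j$ converge, along a subsequence, a.s. and boundedly to $\E[Z\,|\,\G_n]$. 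Convexity and law invariance give $\rho(W_m) \le \tfrac1m\sum_{j=0}^{m-1} \rho(Z\circ T^{(m)}_j) = \rho(Z)$, and a final application of the Fatou property yields $\rho(\E[Z\,|\,\G_n]) \le \rho(Z)$. This averaging step is the main obstacle; everything else is bookkeeping.

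Finally, the ``in particular'' assertions are immediate. For measurable $T:E\to F$, take the deterministic kernel $K_x = \delta_{T(x)}$, for which $\mu K = \mu\circ T^{-1}$, $\nu K = \nu\circ T^{-1}$, and $Kf = f\circ T$; here the comparison is in fact an equality $\rho_{\mu K}(f) = \rho_\mu(f\circ T)$ by the consistency property \eqref{def:riskconsistency}, so no Jensen inequality is needed and \eqref{def:informationinequality} reads $\alpha(\nu\circ T^{-1}|\mu\circ T^{-1})\le\alpha(\nu|\mu)$. If $T$ is a bijection with measurable inverse $S = T^{-1}$, applying this inequality to $S$ with the measures $\mu\circ T^{-1}$ and $\nu\circ T^{-1}$ gives the reverse inequality $\alpha(\nu|\mu)\le\alpha(\nu\circ T^{-1}|\mu\circ T^{-1})$, hence equality.
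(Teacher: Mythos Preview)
Your proposal is correct and follows the same overall architecture as the paper: both reduce \eqref{def:informationinequality} via the dual formula to the single comparison $\rho_{\mu K}(f) \ge \rho_\mu(Kf)$, and both derive the ``in particular'' claims from the deterministic kernel $K_x = \delta_{T(x)}$. The difference lies in how the comparison is established. The paper observes that Jensen's inequality gives $\mu \circ (Kf)^{-1} \le (\mu K)\circ f^{-1}$ in convex order and then invokes the known fact (\cite[Corollary 4.65]{follmer-schied-book}) that normalized law invariant risk measures are monotone with respect to convex order. You instead realize the kernel on $\Omega$ to rewrite the comparison as the conditional Jensen inequality $\rho(\E[Z\mid\G]) \le \rho(Z)$ with $\G = \sigma(X)$, and then prove this from scratch by approximating $\E[Z\mid\G]$ as an a.s.\ limit (along a subsequence) of convex averages of $Z$ under measure-preserving rotations on the atoms of finite sub-$\sigma$-fields, combined with the Fatou property. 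Since $\G=\sigma(X)$ is countably generated, the reduction to finite $\G_n$ is legitimate, and the $L^2$ convergence of the rotation averages on each atom (with a common subsequence over finitely many atoms) makes the limiting step work. Your argument is thus a self-contained proof of the convex-order monotonicity that the paper simply cites; it is longer but avoids the external reference, while the paper's route is a two-line appeal to a standard result. Substantively the two approaches use the same ingredients---convexity, law invariance, and the Fatou property---so the distinction is one of packaging rather than of idea.
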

\begin{proof}
Note that the second claim follows from the first by setting $K(x,dy) = \delta_{T(x)}(dy)$. Jensen's inequality shows easily that $\mu \circ (Kf)^{-1} \le (\mu K) \circ f^{-1}$ in convex order for all $f \in B(F)$; indeed, for every convex function $\phi$ on $\R$,
\[
\int_\R\phi\,d\mu \circ (Kf)^{-1} = \int_E\phi(Kf)\,d\mu \le \int_EK(\phi \circ f)\,d\mu = \int_F \phi \circ f\,d(\mu K) = \int_\E\phi\,d(\mu K) \circ f^{-1}.
\]
It is well known that (normalized) law invariant risk measures are increasing with respect to convex order, e.g. by \cite[Corollary 4.65]{follmer-schied-book}, and thus $\rho_{\mu K}(f) \ge \rho_{\mu}(Kf)$. Then
\begin{align*}
\alpha(\nu K | \mu K) &= \sup_{f \in B(F)}\left(\int_Ff\,d(\nu K) - \rho_{\mu K}(f)\right) \\
	&= \sup_{f \in B(F)}\left(\int_E (Kf)\,d\mu - \rho_{\mu K}(f)\right) \\
	&\le \sup_{g \in B(E)}\left(\int_E g\,d\mu - \rho_{\mu}(g)\right) \\
	&= \alpha(\nu | \mu).
\end{align*}
\end{proof}

In fact, the inequality \eqref{def:informationinequality} is enough to reconstruct from $\alpha$ the original family of risk measures. This is made precise in the following:

\begin{theorem} \label{th:divergence-characterization}
Suppose we are given family of functions $\P(E) \ni \nu \mapsto \alpha(\nu|\mu) \in [0,\infty]$, for each Polish space $E$ and each $\mu \in \P(E)$, and suppose the following conditions hold:
\begin{enumerate}
\item $\alpha(\mu|\mu)=0$.
\item $\alpha(\nu|\mu)=\infty$ if $\nu \in \P(E)$ is not absolutely continuous with respect to $\mu$.
\item $\alpha(\nu K | \mu K) \le \alpha(\nu | \mu)$ for every $\nu \in \P(E)$ and every kernel $K$ from $E$ to another Polish space $F$.
\end{enumerate}
For each Polish space $E$ and each $\mu \in \P(E)$, define
\begin{align}
\rho_\mu(f) := \sup_{\nu \in \P(E)}\left(\int_E f\,d\nu - \alpha(\nu | \mu)\right), \ f \in B(E). \label{def:inducedriskmeasurefamily}
\end{align}
Then each $\rho_\mu$ is a law invariant risk measure. Moreover, for any Polish spaces $F$ and $G$, any $\mu \in \P(F)$ and $\nu \in \P(G)$, and any $f \in B(F)$ and $g \in B(G)$ with $\mu \circ f^{-1} = \nu \circ g^{-1}$, we have $\rho_\mu(f) = \rho_\nu(g)$.
\end{theorem}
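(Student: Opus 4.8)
The plan is to read off the risk-measure axioms for each $\rho_\mu$ directly from the dual formula \eqref{def:inducedriskmeasurefamily}, and then to deduce the consistency property---of which law invariance is the special case $F=G=E$, $\mu=\nu$---by reducing every $\rho_\mu(f)$ to one canonical one-dimensional object. As a supremum of the affine maps $f \mapsto \int_E f\,d\nu - \alpha(\nu|\mu)$, the functional $\rho_\mu$ is convex; cash additivity is immediate because each $\nu$ is a probability measure, so adding a constant $c$ to $f$ adds exactly $c$ to every term; and normalization $\rho_\mu(0) = -\inf_\nu \alpha(\nu|\mu) = 0$ follows from $\alpha \ge 0$ together with property (1). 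The same two facts bound $\rho_\mu(f) \in [-\|f\|_\infty, \|f\|_\infty]$, so $\rho_\mu$ is real-valued. Monotonicity uses property (2): it confines the effective supremum to $\nu \ll \mu$, and for such $\nu$ the map $f \mapsto \int f\,d\nu$ is monotone and depends only on the $\mu$-class of $f$, so $\rho_\mu$ descends to $L^\infty(E,\mu)$.

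For the consistency claim I would set $\lambda := \mu \circ f^{-1} = \nu \circ g^{-1}$, choose $M \ge \max(\|f\|_\infty,\|g\|_\infty)$ so that $\lambda$ is supported in $[-M,M]$, and let $h \in C_b(\R)$ be the truncation $h(y) = (y \wedge M)\vee(-M)$, which agrees with the identity on $\mathrm{supp}\,\lambda$. The goal is to prove $\rho_\mu(f) = \rho_\lambda(h)$; the symmetric argument gives $\rho_\nu(g) = \rho_\lambda(h)$, and comparing the two yields $\rho_\mu(f) = \rho_\nu(g)$.

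The upper bound $\rho_\mu(f) \le \rho_\lambda(h)$ is a direct application of the data processing inequality, property (3), to the deterministic kernel $x \mapsto \delta_{f(x)}$ from $F$ to $\R$: any $\pi \in \P(F)$ satisfies $\alpha(\pi\circ f^{-1}\,|\,\lambda) \le \alpha(\pi|\mu)$ while $\int_F f\,d\pi = \int_\R h\,d(\pi\circ f^{-1})$ (as $f$ is bounded by $M$), so passing to suprema, and discarding the $\pi \not\ll \mu$ that contribute $-\infty$, gives the claim. The reverse bound is the crux and proceeds by a lifting through disintegration: since $F$ is standard Borel I would disintegrate $\mu$ along $f$ as $\mu = \lambda Q$ for a kernel $Q$ from $\R$ to $F$ with $Q_y$ concentrated on the fiber $f^{-1}(\{y\})$ for $\lambda$-a.e.\ $y$. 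Given any $\sigma \ll \lambda$, set $\pi := \sigma Q$; then $\pi \circ f^{-1} = \sigma$ (the fiber property upgrades from $\lambda$-a.e.\ to $\sigma$-a.e.\ precisely because $\sigma \ll \lambda$), whence $\int_F f\,d\pi = \int_\R h\,d\sigma$, while property (3) applied to $Q$ gives $\alpha(\pi|\mu) = \alpha(\sigma Q\,|\,\lambda Q) \le \alpha(\sigma|\lambda)$. Inserting this $\pi$ into \eqref{def:inducedriskmeasurefamily} and taking the supremum over $\sigma$ produces $\rho_\lambda(h) \le \rho_\mu(f)$.

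I expect the main obstacle to be the existence and measurability of the disintegration kernel $Q$ and the identity $\pi \circ f^{-1} = \sigma$: this is exactly where the standard-Borel hypothesis enters (to guarantee regular conditional laws as a genuine kernel, so that property (3) is even applicable) and where the null-set bookkeeping of passing from $\lambda$-a.e.\ to $\sigma$-a.e.\ must be handled. Everything downstream of having $Q$ in hand is then a routine substitution. The truncation $h$ is only a cosmetic device to stay within $B(\R)$, on which $\rho_\lambda$ is defined, since the natural comparison map---the identity on $\R$---is unbounded.
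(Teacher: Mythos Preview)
Your argument is correct and takes a genuinely different route from the paper. The paper proves the consistency claim by constructing a kernel $K$ from $F$ to $G$ satisfying $\mu K = \nu$ and $\mu(Kg = f) = 1$, then applying property~(3) with this $K$; the existence of such a kernel is obtained via Strassen's marginal theorem, which entails verifying an integral inequality and handling some analytic-set measurability issues. You instead factor both sides through the common one-dimensional image $\lambda = \mu\circ f^{-1} = \nu\circ g^{-1}$, and the only tool you need is the standard disintegration $\mu = \lambda Q$ with $Q_y$ supported on the fiber $f^{-1}(\{y\})$, i.e., an ordinary regular conditional distribution on a standard Borel space. This is strictly more elementary than Strassen's theorem, and the null-set bookkeeping you flag (passing from $\lambda$-a.e.\ to $\sigma$-a.e.\ via $\sigma\ll\lambda$) is routine. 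As a bonus, you recover law invariance of each $\rho_\mu$ as the special case $F=G=E$, $\mu=\nu$, whereas the paper derives it separately from Proposition~\ref{pr:lawinvariantcharacterization}; the only price you pay is the cosmetic truncation $h$ to stay inside $B(\R)$.
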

\begin{proof}
It is immediate from the definition that $\rho_\mu$ is a risk measure. Indeed, since $\alpha(\mu|\mu) = 0$ and $\alpha(\nu|\mu) \ge 0$ for all $\nu$, we have $\rho_\mu(0)=0$. Theorem 4.33 of \cite{follmer-schied-book} shows that $\rho_\mu$ satisfies the Fatou property, since the supremum in its definition includes only countably additive measures. For a fixed $\mu$, we deduce from property (3) and Proposition \ref{pr:lawinvariantcharacterization} that $\rho_\mu$ is law-invariant.

It remains to prove the last claim. 
Suppose for the moment that we can find a kernel $K$ from $F$ to $G$ such that $\mu K = \nu$ and $\mu(Kg = f) =1$. Then
\begin{align*}
\rho_\nu(g) &= \sup_{\eta \in \P(G)}\left(\int_G g\,d\eta - \alpha(\eta | \nu)\right) \\
	&\ge \sup_{\eta \in \P_\mu(F)}\left(\int_G g \,d(\eta K) - \alpha(\eta K | \nu)\right) \\
	&= \sup_{\eta \in \P_\mu(F)}\left(\int_F f \,d\eta - \alpha(\eta K | \mu K)\right) \\
	&\ge \sup_{\eta \in \P_\mu(F)}\left(\int_F f \,d\eta - \alpha(\eta  | \mu )\right) \\
	&= \rho_\mu(f).
\end{align*}
Indeed, the second inequality follows from the assumption (3). Reversing the roles of $f$ and $g$ completes the proof. To prove the existence of such a kernel, we appeal to a famous theorem of Strassen \cite[Theorem 3]{strassen-marginals}: Define $h_\phi$ on $F$ by
\[
h_\phi(x) = \sup_{\eta \in S(x)}\int_G\phi\,d\eta, \quad \text{ where } \quad S(x) := \left\{\eta \in \P(G) : \int_Gg\,d\eta = f(x)\right\}.
\]
If $S(x)$ is nonempty for each $x \in F$, and if $h_\phi$ is measurable, then Strassen's theorem says that there exists a kernel $K$ from $F$ to $G$ satisfying both $\mu K=\nu$ and $\mu(Kg=f)=1$ if and only if
\begin{align}
\int_G\phi\,d\nu \le \int_Fh_\phi\,d\mu, \text{ for all } \phi \in C_b(G). \label{pf:divergence-characterization1}
\end{align}
Suppose for the moment that $S(x)$ is nonempty for each $x \in F$ and that $h_\phi$ is measurable, so that we can apply this theorem. Define a new function $\widetilde{h}_\phi$ on $\R$ by
\begin{align*}
\widetilde{h}_\phi(a) := \sup\phi(g^{-1}(\{a\})) = \sup\left\{\phi(y) : y \in G, \ g(y) = a\right\},
\end{align*}
with the usual convention $\sup\emptyset = -\infty$.
Let us check that
\begin{align}
\widetilde{h}_\phi(f(x)) \le h_\phi(x), \ \mu-a.e. \ x \in F. \label{pf:divergence-characterization2}
\end{align}
If $x \in F$ has $\widetilde{h}_\phi(f(x)) = -\infty$, there is nothing to prove. 
So suppose $x \in F$  satisfies $\widetilde{h}_\phi(f(x)) > -\infty$. For fixed $\epsilon > 0$ we may then choose $y_0 \in G$ such that $g(y_0)=f(x)$ and $\phi(y_0) \ge \widetilde{h}_\phi(f(x)) - \epsilon$. Then, since $\int_G g\,d\delta_{y_0} = f(x)$, we have by definition $\phi(y_0) \le h_\phi(x)$, and thus $\widetilde{h}_\phi(f(x)) \le \epsilon + h_\phi(x)$. Since $\epsilon$ was arbitrary, this proves \eqref{pf:divergence-characterization2}.
Finally, since clearly $\phi(x) \le \widetilde{h}_\phi(g(x))$ for all $x \in G$, using $\nu \circ g^{-1} = \mu \circ f^{-1}$ we prove \eqref{pf:divergence-characterization1}:
\[
\int_G\phi\,d\nu \le \int_G\widetilde{h}_\phi\circ g\,d\nu = \int_F\widetilde{h}_\phi \circ f\,d\mu \le \int_Fh_\phi\,d\mu.
\]
It remains to check the technical points left out above. First note that $S(x)$ is nonempty for $\mu$-almost every $x$: If $S(x)$ is empty then $f(x)$ is not in the range of $g$, which cannot hold on a set of positive $\mu$-measure because $\mu \circ f^{-1} = \nu \circ g^{-1}$. Modify $S(x)$ to equal $F$ on a null set in such a way that it is never empty $S(x) \ne \emptyset$. Next note that $h_\phi$ is universally measurable because the graph of $S(x)$ is analytic \cite[Proposition 7.47]{bertsekasshreve}, so we may apply Strassen's theorem by simply replacing the Borel $\sigma$-field on $F$ with its universal completion.
\end{proof}

With the previous result in mind, it is natural to make the following definition:

\begin{definition}
A \emph{divergence} is a family of convex lower-semicontinuous (with respect to total variation) functions $\P(E) \ni \nu \mapsto \alpha(\nu | \mu) \in [0,\infty]$, for each Polish space $E$ and each $\mu \in \P(E)$, satisfying properties (1-3) of Theorem \ref{th:divergence-characterization}. Given a divergence $\alpha$, the \emph{corresponding (or induced) family of risk measures} is the family $(\rho_\mu)_{\mu,E}$ defined by \eqref{def:inducedriskmeasurefamily}. The \emph{corresponding (or induced) risk measure} is the risk measure $\bar{\rho}$ defined on $L^\infty=L^\infty(\Omega,\F,P)$ by
\[
\bar{\rho}(X) := \rho_{P \circ X^{-1}}(id),
\]
where $id$ denotes the identity map on $\R$. Thanks to Theorem \ref{th:divergence-characterization}, $\bar{\rho}$ is well defined. It is straightforward to check that its induced divergence is exactly $\alpha$, and also that $\bar{\rho}_\mu = \rho_\mu$ for each Polish space $E$ and $\mu \in \P(E)$.
\end{definition}

\subsection{Simplified divergences} \label{se:convexitycontinuitysufficiency}

An important property of relative entropy is that its dual formula can be reduced to a supremum over \emph{continuous} functions: For a Polish space $E$ and for $\mu,\nu \in \P(E)$,
\[
H(\nu | \mu) = \sup_{f \in B(E)}\left(\int_E f\,d\nu - \log\int_Ee^f\,d\mu\right) = \sup_{f \in C_b(E)}\left(\int_E f\,d\nu - \log\int_Ee^f\,d\mu\right).
\]
For our characterization of superadditivity of divergences in Section \ref{se:timeconsistency}, it will be important for us to have a similar result for general divergences. Such a simplification is not always possible, so we make a definition:

\begin{definition} \label{def:simplified}
The divergence $\alpha$ induced by a law invariant risk measure $\rho$ is said to be \emph{simplified} if for every $\mu,\nu \in \P([0,1])$ we have
\[
\alpha(\nu | \mu) = \sup_{f \in C([0,1])}\left(\int_{[0,1]} f\,d\nu - \rho_\mu(f)\right).
\]
Equivalently, for each $\mu \in \P([0,1])$, the map $\alpha(\cdot | \mu)$ is weakly lower semicontinuous, where ``weakly'' refers to the usual weak convergence topology $\sigma(\P(E),C_b(E))$.\footnote{This equivalence is a simple application of the Fenchel-Moreau theorem: Let $M([0,1])$ denote the set of bounded finitely additive signed measures on $[0,1]$, and extend $\alpha(\cdot | \mu)$ to $M([0,1])$ by setting $\bar{\alpha}_\mu(\nu) = \alpha(\nu | \mu)$ for $\nu \in \P([0,1])$ and $\bar{\alpha}_\mu(\nu) = \infty$ otherwise. Put $M([0,1])$ and $C([0,1])$ in duality. Then $\rho_\mu$ is the convex conjugate of $\bar{\alpha}_\mu$. As $\bar{\alpha}_\mu$ is convex and proper, it is lower semicontinuous if and only if it equals its biconjugate.}
\end{definition}

An important reason for this definition is the following measurability result, which we are unfortunately unable to prove without the additional assumption.

\begin{lemma} \label{le:jointlymeasurable}
Every simplified divergence $\alpha$ is jointly measurable, in the sense that for any fixed Polish space $E$ the function $\alpha(\cdot | \cdot)$ is jointly measurable on $\P(E) \times \P(E)$ (with respect to the Borel $\sigma$-field generated by the topology of weak convergence).
\end{lemma}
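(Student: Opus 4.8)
The plan is to reduce to the case $E=[0,1]$ and there exploit the separability of $C([0,1])$ together with the Fatou property.

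\emph{Reduction to $[0,1]$.} For a general Polish space $E$ I would fix a Borel isomorphism of $E$ onto a Borel subset of $[0,1]$, that is, an injective Borel map $T : E \to [0,1]$ with Borel inverse on its image, together with a measurable retraction $r : [0,1] \to E$ satisfying $r \circ T = \mathrm{id}_E$ (take $r = T^{-1}$ on $T(E)$ and constant off it). Applying the inequality of Proposition \ref{pr:informationinequality} first to $T$ and then to $r$, and using $(\mu \circ T^{-1}) \circ r^{-1} = \mu$, yields both inequalities and hence $\alpha(\nu \circ T^{-1} | \mu \circ T^{-1}) = \alpha(\nu | \mu)$ for all $\mu,\nu \in \P(E)$. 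Since the pushforward $\mu \mapsto \mu \circ T^{-1}$ is Borel measurable from $\P(E)$ to $\P([0,1])$ --- its composition with each evaluation $\lambda \mapsto \lambda(B)$ is the evaluation $\mu \mapsto \mu(T^{-1}(B))$, and these generate the relevant $\sigma$-field --- joint measurability of $\alpha$ on $\P([0,1])^2$ transfers to $\P(E)^2$.

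\emph{Collapsing to a countable supremum on $[0,1]$.} Fix a countable set $\{f_n\} \subset C([0,1])$ dense in the uniform norm. Because each $\rho_\mu$ is $1$-Lipschitz with respect to the supremum norm (an immediate consequence of monotonicity and cash additivity) and $|\int f\,d\nu - \int f_n\,d\nu| \le \|f-f_n\|_\infty$, the simplified representation of Definition \ref{def:simplified} collapses to
\[
\alpha(\nu | \mu) = \sup_n\left(\int_{[0,1]} f_n\,d\nu - \rho_\mu(f_n)\right).
\]
Each summand is finite and splits into the weakly continuous map $\nu \mapsto \int f_n\,d\nu$ and the term $\mu \mapsto \rho_\mu(f_n)$, so it remains only to check that $\mu \mapsto \rho_\mu(f)$ is Borel for fixed $f \in C([0,1])$. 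Writing $\rho_\mu(f) = \widetilde{\rho}(\mu \circ f^{-1})$, where $\widetilde{\rho}(P \circ X^{-1}) := \rho(X)$ is the law functional from the introduction, and fixing $M$ with $|f| \le M$, the map $\mu \mapsto \mu \circ f^{-1}$ is weakly continuous into $\P([-M,M])$, so it suffices to show $\widetilde{\rho}$ is Borel on $\P([-M,M])$. I claim it is weakly lower semicontinuous there: given $\lambda_k \to \lambda$ weakly in $\P([-M,M])$, I realize the Skorokhod coupling on $\Omega$ via quantile transforms of a single uniform random variable, obtaining uniformly bounded $X_k \to X$ a.s. with laws $\lambda_k,\lambda$; the Fatou property (Theorem \ref{th:jouini-touzi-schachermayer}) then gives $\widetilde{\rho}(\lambda) = \rho(X) \le \liminf_k \rho(X_k) = \liminf_k \widetilde{\rho}(\lambda_k)$.

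Combining these, $\alpha$ is a countable supremum of jointly Borel functions on $\P([0,1])^2$, hence jointly Borel, and the reduction step completes the proof. The step I expect to be most delicate is the measurability of $\mu \mapsto \rho_\mu(f)$: the simplified hypothesis is precisely what lets me pass to the separable space $C([0,1])$ and thus to a \emph{countable} supremum --- without it the representing supremum ranges over the non-separable space $B([0,1])$ and this argument breaks down, which is why the lemma is stated only for simplified divergences. The Skorokhod-plus-Fatou route to lower semicontinuity of $\widetilde{\rho}$ is the other point requiring care, in particular the fact that the coupling can be realized on the given nonatomic space $\Omega$ rather than on an abstract auxiliary space.
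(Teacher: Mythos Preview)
Your proposal is correct and follows essentially the same route as the paper: reduce to $[0,1]$ via Borel isomorphism and Proposition~\ref{pr:informationinequality}, then use separability of $C([0,1])$ together with the $1$-Lipschitz property of $\rho_\mu$ to collapse the simplified representation to a countable supremum, and finally invoke Skorokhod plus the Fatou property for the measurability of $\mu \mapsto \rho_\mu(f)$ (the paper packages this last step as Lemma~\ref{le:rhofamilyregularity}). Your retraction argument for the reduction step is in fact slightly more careful than the paper's, which tacitly assumes $E$ is Borel isomorphic to all of $[0,1]$ rather than merely to a Borel subset.
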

\begin{proof}
Fix a Polish space $E$. By Borel isomorphism (see \cite[Theorem 15.6]{kechris-settheory}), there exists a measurable bijection $T : E \rightarrow [0,1]$ with measurable inverse. It follows from Proposition \ref{pr:informationinequality} that
\[
\alpha(\nu \circ T^{-1} | \mu \circ T^{-1}) = \alpha(\nu | \mu),
\]
for all $\mu,\nu \in \P(E)$. Note also that the map $\mu \mapsto \mu \circ T^{-1}$ is a measurable bijection from $\P(E)$ to $\P([0,1])$ with measurable inverse. Thus, to show $\alpha$ is jointly measurable on $\P(E) \times \P(E)$, it suffices 
to show it is jointly measurable on $\P([0,1]) \times \P([0,1])$.
Since $\alpha$ is simplified, we have
\[
\alpha(\nu | \mu) = \sup_{f \in C([0,1])}\left(\int f\,d\nu - \rho_\mu(f)\right)
\]
for $\mu,\nu \in \P([0,1])$. Since $\rho_\mu$ is Lipschitz with respect to the supremum norm on $C([0,1])$, and since $C([0,1])$ is separable, we may reduce the supremum above to a countable one. But $\nu \mapsto \int f\,d\nu$ is measurable for each $f \in C([0,1])$, as is $\mu \mapsto \rho_\mu(f)$ thanks to Lemma \ref{le:rhofamilyregularity}.
\end{proof}

\subsection{Semicontinuity of divergences}

The rest of the section studies lower semicontinuity properties of $\alpha$, in part for their intrinsic interest, and in part for a tractable condition that will allow us to verify that all of the examples of divergences we discuss in Section \ref{se:examples} are indeed simplified.
We did not find a good characterization of simplified divergences on the dual side, i.e., in terms of $\rho$, but the following partial results shed some light on the condition nonetheless.
We know that for any divergence $\alpha$, the map $\alpha(\cdot | \mu)$ is lower semicontinuous with respect to the topology $\sigma(\P(E),B(E))$ for any fixed $\mu$, $E$. We will see later in Lemma \ref{pr:strongerlsc} that in fact $\alpha(\cdot|\cdot)$ is \emph{jointly} lower semicontinuous with respect to the same topology. On the other hand, relative entropy is known to be jointly lower semicontinuous with respect to weak convergence, and we first characterize those divergences which share this property. For this it helps to make two definitions, the second of which is well known:

\begin{definition} \label{def:lsc}
We say a divergence $\alpha$ is \emph{jointly weakly lower semicontinuous} if, for each Polish space $E$, the map $\alpha(\cdot|\cdot)$ is lower semicontinuous on $\P(E) \times \P(E)$ with respect to the topology of weak convergence, i.e., equipping $\P(E)$ with the topology $\sigma(\P(E),C_b(E))$.
\end{definition}

\begin{definition} \label{def:lebesgue-continuous}
We say a risk measure $\rho$ is \emph{Lebesgue continuous} if whenever $X_n \in L^\infty$ is a uniformly bound sequence with $X_n \rightarrow X$ a.s. we have $\rho(X_n) \rightarrow \rho(X)$. This is equivalent to the seemingly weaker condition that whenever $X_n,X \in L^\infty$ with $X_n \downarrow X$ a.s. we have $\rho(X_n) \downarrow \rho(X)$ (c.f. Remark 4.25 and Exercise 4.2.2 of \cite{follmer-schied-book}).
\end{definition}

The main result of this section is the following, and the proof is preceded by a preparatory lemma:

\begin{theorem} \label{th:tight-lowersemicontinuity}
Let $\rho$ be a law invariant risk measure with induced divergence $\alpha$. The following are equivalent:
\begin{enumerate}
\item $\alpha$ is jointly weakly lower semicontinuous.
\item For each Polish space $E$ and each $f \in C_b(E)$, the map $\mu \mapsto \rho_\mu(f)$ is continuous.
\item For each Polish space $E$, each $\mu \in \P(E)$, and each $f,f_n \in C_b(E)$ with $f_n \rightarrow f$ pointwise and with $f_n$ uniformly bounded, we have $\rho_\mu(f_n) \rightarrow \rho_\mu(f)$.
\item $\rho$ is Lebesgue continuous.
\end{enumerate}
If these conditions hold, then:
\begin{enumerate}
\item[(5)] For each Polish space $E$ and each $\nu,\mu \in \P(E)$ we have
\[
\alpha(\nu | \mu) = \sup_{f \in C_b(E)}\left(\int_E f\,d\nu - \rho_\mu(f)\right).
\]
\end{enumerate}
\end{theorem}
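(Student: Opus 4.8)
The plan is to treat Lebesgue continuity (4) as the hub and prove $(4)\Rightarrow(2)$, $(4)\Rightarrow(3)$, $(4)\Rightarrow(5)$, and $(4)\Rightarrow(1)$ together with the three reverse implications $(2)\Rightarrow(4)$, $(1)\Rightarrow(4)$, and $(3)\Rightarrow(4)$. The recurring mechanism is that a.s.\ convergence of uniformly bounded random variables translates, through the identity $\rho_\mu(f)=\rho(f(X))$, into the various continuity hypotheses, once the auxiliary Polish space is chosen well. A preliminary observation, furnished by Skorokhod's representation theorem together with the Fatou property (Theorem \ref{th:jouini-touzi-schachermayer}), is that for every $f\in C_b(E)$ the map $\mu\mapsto\rho_\mu(f)$ is always weakly lower semicontinuous: if $\mu_n\to\mu$ weakly, realize $Y_n\to Y$ a.s.\ with $Y_n\sim\mu_n$ and $Y\sim\mu$, so that $f(Y_n)\to f(Y)$ a.s.\ and boundedly, and apply Fatou. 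This one-sided bound is the baseline from which full continuity must be upgraded.

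The forward implications from (4) are routine. For $(4)\Rightarrow(2)$, the Skorokhod realization above gives $f(Y_n)\to f(Y)$ a.s.\ and boundedly, so Lebesgue continuity yields $\rho_{\mu_n}(f)\to\rho_\mu(f)$. For $(4)\Rightarrow(3)$, pointwise convergence $f_n\to f$ on $E$ forces $f_n(X)\to f(X)$ a.s.\ for $X\sim\mu$, and Lebesgue continuity again applies. For $(4)\Rightarrow(5)$, given $f\in B(E)$ I would approximate it in $C_b(E)$: since $C_b(E)$ is dense in $L^1(E,\mu+\nu)$, I choose $f_k\in C_b(E)$ with $\|f_k\|_\infty\le\|f\|_\infty$ converging to $f$ in $L^1(\mu+\nu)$, pass to a subsequence converging both $\mu$- and $\nu$-a.e., and conclude $\int_E f_k\,d\nu\to\int_E f\,d\nu$ and $\rho_\mu(f_k)\to\rho_\mu(f)$, so the supremum defining $\alpha(\nu|\mu)$ is unchanged when restricted to $C_b(E)$. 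Finally $(4)\Rightarrow(1)$ combines (5) with (2): each map $(\nu,\mu)\mapsto\int_E f\,d\nu-\rho_\mu(f)$ is jointly weakly continuous for $f\in C_b(E)$, and $\alpha$ is their pointwise supremum, hence jointly weakly lower semicontinuous.

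For the reverse implications $(2)\Rightarrow(4)$ and $(1)\Rightarrow(4)$ I would exploit a joint-law device on a compact square. Given $X_n\to X$ a.s.\ with $|X_n|\le M$, set $\mu_n:=P\circ(X,X_n)^{-1}\in\P(E_0)$ where $E_0:=[-M,M]^2$ is compact, so $\mu_n\to\mu_\infty:=P\circ(X,X)^{-1}$ weakly, and let $g(a,b):=b\in C_b(E_0)$, for which $\rho_{\mu_n}(g)=\rho(X_n)$ and $\rho_{\mu_\infty}(g)=\rho(X)$. Under (2) the continuity of $\mu\mapsto\rho_\mu(g)$ gives $\rho(X_n)\to\rho(X)$ at once. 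Under (1) I would instead pick near-optimal dual measures $\nu_n\in\P(E_0)$ with $\rho_{\mu_n}(g)\le\int_{E_0} g\,d\nu_n-\alpha(\nu_n|\mu_n)+1/n$; compactness of $E_0$ makes $\{\nu_n\}$ weakly precompact, which is precisely where passing to a compact space removes any tightness difficulty, and along a subsequence $\nu_{n_k}\to\nu^*$ joint weak lower semicontinuity gives $\liminf_k\alpha(\nu_{n_k}|\mu_{n_k})\ge\alpha(\nu^*|\mu_\infty)$, whence $\limsup_k\rho_{\mu_{n_k}}(g)\le\int_{E_0} g\,d\nu^*-\alpha(\nu^*|\mu_\infty)\le\rho_{\mu_\infty}(g)$; together with the baseline lower bound this yields $\rho(X_n)\to\rho(X)$.

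The delicate step, which I expect to be the main obstacle, is $(3)\Rightarrow(4)$. The natural attempt encodes the whole sequence at once: after modifying on a null set so that $X_n\to X$ everywhere, put $Z:=(X,X_1,X_2,\dots)$ and $\mu:=P\circ Z^{-1}$, so that $\rho(X_n)=\rho_\mu(\pi_n)$ and $\rho(X)=\rho_\mu(\pi_0)$ for the coordinate projections. The trouble is that on the product space $[-M,M]^{\N}$ the projections $\pi_n$ converge to $\pi_0$ only on the Borel set $N=\{y:y_k\to y_0\}$ of full $\mu$-measure, not pointwise everywhere, so hypothesis (3) does not apply directly. I would resolve this by refining the topology: $Z$ maps $\Omega$ into the standard Borel space $N$, and by the theorem on finer Polish topologies (Kuratowski; see also the Borel-isomorphism machinery of \cite{kechris-settheory}) there is a Polish topology on $N$ with the same Borel $\sigma$-field in which all the countably many $\pi_n$ are continuous. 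On this Polish space the $\pi_n$ lie in $C_b(N)$, are uniformly bounded, and converge to $\pi_0$ at \emph{every} point of $N$ by the very definition of $N$; since the Borel structure, and hence $\mu$ and $\rho_\mu$, are unchanged, hypothesis (3) now applies and gives $\rho(X_n)=\rho_\mu(\pi_n)\to\rho_\mu(\pi_0)=\rho(X)$. This refinement of the topology is the crux, converting the merely $\mu$-a.e.\ convergence of coordinate projections into the genuine pointwise convergence that (3) demands.
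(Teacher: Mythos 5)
Your proposal is correct; all seven implications you list are sound and together they establish the equivalence of (1)--(4) and the consequence (5). The route differs from the paper's in its architecture and in one key device, so a comparison is worthwhile. The paper proves the cycle $(1)\Rightarrow(2)\Rightarrow(3)\Rightarrow(4)\Rightarrow(1)$ (plus $(4)\Rightarrow(5)$, $(4)\Rightarrow(3)$, and a separate $(3)\Rightarrow(2)$ via quantile functions on $([0,1],\lambda)$), whereas you use (4) as a hub with three separate reverse implications; in particular your $(1)\Rightarrow(4)$ fuses the paper's compactness argument for $(1)\Rightarrow(2)$ (near-optimal dual measures $\nu_n$ on a compact state space, weak precompactness, joint lower semicontinuity of $\alpha$) directly with the passage to Lebesgue continuity, which is slightly more economical. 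For the crux $(3)\Rightarrow(4)$ the paper works on the concrete Polish space of \emph{convergent} sequences in $[-M,M]$ under the \emph{supremum} metric, on which the coordinate maps and the limit functional are $1$-Lipschitz and converge pointwise everywhere by construction of the space; you instead restrict to the full-measure Borel set $N$ of convergent sequences inside the product space and invoke the Kuratowski finer-Polish-topology theorem to make the projections continuous while preserving the Borel structure (and hence $\rho_\mu$, which depends only on the Borel structure and the law). Both devices accomplish exactly the same thing --- converting a.s.\ convergence into everywhere-convergence of uniformly bounded continuous functions on a Polish space --- with the paper's being more concrete and self-contained and yours more abstract but transferable. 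A final minor point in your favor: your $(4)\Rightarrow(5)$ spells out the approximation of $f\in B(E)$ by $C_b(E)$ functions via density in $L^1(E,\mu+\nu)$ and a subsequence converging $(\mu+\nu)$-a.e., which correctly handles the fact that one needs convergence both $\mu$-a.e.\ (for $\rho_\mu$) and $\nu$-a.e.\ (for the integral); the paper leaves this step implicit.
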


\begin{lemma} \label{le:rhofamilyregularity}
Let $\rho$ be a law invariant risk measure. Fix a Polish space $E$ and a function $f \in B(E)$. The map $\Phi : \P(E) \rightarrow \R$ given by $\Phi(\mu) := \rho_\mu(f)$ is measurable. Moreover, if $f \in C_b(E)$, then $\Phi$ is lower semicontinuous.
\end{lemma}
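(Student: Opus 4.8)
The plan is to prove the two assertions in the logically convenient order, establishing lower semicontinuity for $f \in C_b(E)$ first and then bootstrapping to obtain measurability for general $f \in B(E)$.

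For the lower semicontinuity claim, I would fix $f \in C_b(E)$ and suppose $\mu_n \to \mu$ weakly in $\P(E)$. Since $E$ is Polish, the weak topology on $\P(E)$ is metrizable, so it suffices to check sequential lower semicontinuity. By the Skorokhod representation theorem I can find $E$-valued random variables $X_n, X$, realized on the nonatomic standard Borel space $(\Omega,\F,P)$ itself (after transporting along a measure isomorphism with $([0,1],\mathrm{Leb})$), with laws $\mu_n$ and $\mu$ and with $X_n \to X$ $P$-a.s. Because $f$ is continuous we get $f(X_n) \to f(X)$ a.s., and because $f$ is bounded the sequence $f(X_n)$ is uniformly bounded in $L^\infty$. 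The Fatou property of Theorem \ref{th:jouini-touzi-schachermayer} then yields $\rho_\mu(f) = \rho(f(X)) \le \liminf_n \rho(f(X_n)) = \liminf_n \rho_{\mu_n}(f)$, which is exactly lower semicontinuity of $\Phi$ at $\mu$.

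For measurability, I would fix $f \in B(E)$ and choose $M$ with $|f| \le M$, regarding $f$ as a Borel map $g : E \to [-M,M]$. Writing $\iota : [-M,M] \hookrightarrow \R$ for the inclusion, the consistency property \eqref{def:riskconsistency} applied to $g$ gives $\rho_\mu(f) = \rho_{\mu \circ g^{-1}}(\iota)$, since $\iota \circ g = f$ and hence $\mu \circ f^{-1} = (\mu \circ g^{-1}) \circ \iota^{-1}$. Thus $\Phi(\mu) = \Psi(\mu \circ g^{-1})$, where $\Psi(\lambda) := \rho_\lambda(\iota)$ on $\P([-M,M])$. Now $\iota$ is a bounded continuous function on the compact Polish space $[-M,M]$, so by the already-established lower semicontinuity part the map $\Psi$ is lower semicontinuous, hence Borel measurable. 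Finally, the pushforward map $\mu \mapsto \mu \circ g^{-1}$ from $\P(E)$ to $\P([-M,M])$ is Borel measurable, because for every Borel $B \subseteq [-M,M]$ the set $g^{-1}(B)$ is Borel in $E$ and $\mu \mapsto \mu(g^{-1}(B))$ is measurable by the very definition of the $\sigma$-field on $\P(E)$. Composing these two maps shows that $\Phi$ is measurable.

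The step I expect to require the most care is the measurability claim for merely bounded measurable $f$. A naive functional monotone class argument in the variable $f$ is unavailable here, since $\rho_\mu$ is only convex rather than linear, so the natural class of admissible $f$ is not a vector space. The device that circumvents this is precisely the reduction above: pushing forward to the compact interval $[-M,M]$ and invoking the consistency identity turns the bounded measurable $f$ into the continuous inclusion $\iota$, after which the semicontinuity result applies. A secondary technical point worth spelling out is that the Skorokhod approximants can and should be realized on $\Omega$, so that the Fatou property—which is a statement about $\rho$ on $L^\infty(\Omega,\F,P)$—can legitimately be invoked; this is where nonatomicity of $\Omega$ enters.
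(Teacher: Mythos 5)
Your proof is correct and follows essentially the same route as the paper's: lower semicontinuity for continuous $f$ via Skorokhod representation and the Fatou property of Theorem \ref{th:jouini-touzi-schachermayer}, then measurability for general $f \in B(E)$ by factoring $\Phi$ through the pushforward $\mu \mapsto \mu \circ f^{-1}$ onto $\P([-M,M])$ and applying the first part to the identity map. The extra care you take in realizing the Skorokhod approximants on $\Omega$ itself is a point the paper passes over silently, but the argument is the same.
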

\begin{proof}
First we prove the second claim. Let $\mu_n \rightarrow \mu$ in $\P(E)$. By Skorohod representation, we may find $E$-valued random variables $X,X_n$ defined on $\Omega$ with $P \circ X^{-1}=\mu$, $P \circ X_n^{-1} = \mu_n$, and $X_n \rightarrow X$ almost surely. Then $f(X_n) \rightarrow f(X)$ almost surely since $f$ is continuous, and the sequence $f(X_n)$ is uniformly bounded. Thus, the Fatou property (Theorem \ref{th:jouini-touzi-schachermayer}) implies
\[
\rho_\mu(f) = \rho(f(X)) \le \liminf_{n\rightarrow\infty}\rho(f(X_n)) = \liminf_{n\rightarrow\infty}\rho_{\mu_n}(f).
\]
To prove the first claim, find $M > 0$ such that $|f| \le M$, and write $\rho_\mu(f) = \rho_{\mu \circ f^{-1}}(id)$, where $id$ denotes the identity map on $[-M,M]$. According to the previous argument, $m \mapsto \rho_m(id)$ is lower semicontinuous and thus measurable on $\P([-M,M])$. Since also $\mu \mapsto \mu \circ f^{-1}$ is Borel measurable from $\P(E)$ to $\P([-M,M])$ (easily proven using, e.g., \cite[Proposition 7.25]{bertsekasshreve}), we see that $\Phi$ is the composition of two measurable maps.
\end{proof}

\begin{proof}[Proof of Theorem \ref{th:tight-lowersemicontinuity}]
($1 \Rightarrow 2$) Suppose first that $E$ is compact. Let $\mu_n \rightarrow \mu$ in $\P(E)$. We know from Lemma \ref{le:rhofamilyregularity} that $\rho_\mu(f) \le \liminf_{n\rightarrow\infty}\rho_{\mu_n}(f)$, so we show upper semicontinuity. Let $\epsilon > 0$, and find for each $n$ some $\nu_n \in \P(E)$ satisfying 
\[
\rho_{\mu_n}(f) \le \epsilon + \int f\,d\nu_n - \alpha(\nu_n | \mu_n).
\]
Since $E$ is compact, every subsequence admits a further subsequence $\{n_k\}$ such that $\nu_{n_k} \rightarrow \nu$ for some $\nu \in \P(E)$, and lower semicontinuity of $\alpha$ implies
\[
\limsup_{k\rightarrow\infty}\rho_{\mu_{n_k}}(f) \le \epsilon + \int f\,d\nu - \alpha(\nu | \mu) \le \epsilon + \rho_\mu(f).
\]
This shows $\limsup_{n\rightarrow\infty}\rho_{\mu_n}(f) \le \rho_\mu(f)$.
Finally, if $E$ is not necessarily compact, find $M > 0$ such that $\mu_n(|f| \le M) = 1$. Since $[-M,M]$ is compact, the previous result shows that
\[
\rho_{\mu_n}(f) = \rho_{\mu_n \circ f^{-1}}(id) \rightarrow \rho_{\mu \circ f^{-1}}(id) = \rho_\mu(f),
\]
where $id$ is the identity map on $[-M,M]$.

($2 \Rightarrow 3$) Let $f_n,f \in C_b(E)$ be uniformly bounded with $f_n \rightarrow f$ $\mu$-a.s. Then there exists $M > 0$ such that $\mu(|f_n| \le M) = 1$ for all $n$, and so
\[
\rho_\mu(f_n) = \rho_{\mu \circ f_n^{-1}}(id) \rightarrow \rho_{\mu \circ f^{-1}}(id) = \rho_\mu(f),
\]
where $id$ denotes the identity map on $[-M,M]$.

($3 \Rightarrow 4$) Let $X,X_n \in L^\infty$ be uniformly bounded with $X_n \rightarrow X$ a.s. Find $M > 0$ such that $|X_n|\le M$ a.s. for all $n$. Let $E$ denote the (complete separable) metric space of convergent sequences with values in $[-M,M]$ endowed with the supremum metric, and define $\mu$ on $E$ by 
\[
\mu = \prod_{n=1}^\infty P \circ X_n^{-1}.
\]
Let $f_n(x)=x_n$ denote the coordinate maps, and let $f(x) = \lim_{n\rightarrow\infty}x_n$,
for $x=(x_1,x_2,\ldots) \in E$. Then $f$ and $f_n$ are uniformly bounded and continuous, with $f_n \rightarrow f$ pointwise by construction. Since $\mu \circ f_n^{-1} = P \circ X_n^{-1}$ and $\mu \circ f^{-1} = P \circ X^{-1}$, we have
\[
\rho(X_n) = \rho_\mu(f_n) \rightarrow \rho_\mu(f) = \rho(X).
\]

($4 \Rightarrow 5$)  Clearly we have
\[
\alpha(\nu  | \mu ) \ge \sup_{f \in C_b(E)}\left(\int f\,d\nu - \rho_\mu(f)\right).
\]
To show the reverse inequality, fix $\epsilon > 0$ and find $f \in B(E)$ such that
\[
\alpha(\nu  | \mu ) \le \epsilon + \int f\,d\nu - \rho_\mu(f).
\]
Find a bounded sequence $f_n$ of continuous functions with $f_n \rightarrow f$ a.s. Then, using (3) and the bounded convergence theorem, we get
\[
\alpha(\nu  | \mu ) \le \epsilon + \lim_{n\rightarrow\infty}\left(\int f_n\,d\nu - \rho_\mu(f_n)\right) \le \epsilon + \sup_{f \in C_b(E)}\left(\int f\,d\nu - \rho_\mu(f)\right).
\]

($4 \Rightarrow 3$) Obvious.

($3 \Rightarrow 2$)  Let $\mu_n \rightarrow \mu$ in $\P(E)$ and $f \in C_b(E)$. Let $\lambda$ denote Lebesgue measure on $[0,1]$, and let $q_n$ and $q$ denote the quantile functions corresponding to $\mu_n \circ f^{-1}$ and $\mu\circ f^{-1}$, respectively, so that $\mu_n \circ f^{-1} = \lambda \circ q_n^{-1}$ and $\mu \circ f^{-1} = \lambda \circ q^{-1}$. Then $q_n$ are uniformly essentially bounded with $q_n \rightarrow q$ $\lambda$-a.s., and law invariance yields
\[
\rho_{\mu_n}(f) = \rho_\lambda(q_n) \rightarrow \rho_\lambda(q) = \rho_\mu(f).
\]

($4 \Rightarrow 1$) We know by now that (4) implies both (5) and (2), and thus we can write
\[
\alpha(\nu | \mu) = \sup_{f \in C_b(E)}\left(\int f\,d\nu - \rho_\mu(f)\right).
\]
Since the map $(\mu,\nu) \mapsto \int f\,d\nu - \rho_\mu(f)$ is jointly continuous by (2), $\alpha$ is lower semicontinuous as the supremum of continuous functions.
\end{proof}

As announced before, there are some additional continuity properties of potential interest, although we shall not use these in the sequel.
Note that Lemma \ref{le:jointlymeasurable} does not follow from the following Proposition \ref{pr:strongerlsc}, because the Borel $\sigma$-field of $\sigma(\P(E),B(E))$ is typically strictly larger than the Borel $\sigma$-field of the topology of weak convergence.

\begin{proposition} \label{pr:strongerlsc}
Suppose $\rho$ is a law invariant risk measure with induced divergence $\alpha$.
If $\P(E)$ is endowed with the topology $\sigma(\P(E),B(E))$, then the map $\mu \mapsto \rho_\mu(f)$ is continuous for every $f \in B(E)$, and $\alpha(\cdot|\cdot)$ is lower semicontinuous with respect to the product topology on $\P(E) \times \P(E)$.
\end{proposition}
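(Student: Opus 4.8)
The plan is to reduce the whole statement to a single continuity assertion, namely the claimed continuity of $\mu \mapsto \rho_\mu(f)$ for the topology $\sigma(\P(E),B(E))$. Granting that, the joint lower semicontinuity of $\alpha$ is essentially automatic: writing the minimal penalty in its defining form
\[
\alpha(\nu \mid \mu) = \sup_{f \in B(E)}\left(\int_E f\,d\nu - \rho_\mu(f)\right),
\]
each summand $(\nu,\mu) \mapsto \int_E f\,d\nu - \rho_\mu(f)$ is jointly continuous for the product topology, since $\nu \mapsto \int_E f\,d\nu$ is continuous by the very definition of $\sigma(\P(E),B(E))$ and $\mu \mapsto \rho_\mu(f)$ is continuous by the first assertion. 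Thus $\alpha$ is a supremum of jointly continuous functions and hence jointly lower semicontinuous, so everything reduces to the continuity of $\mu \mapsto \rho_\mu(f)$.

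To attack the latter, I would first reduce to the identity on a compact interval. If $|f|\le M$, then $\rho_\mu(f)=\rho_{\mu\circ f^{-1}}(\mathrm{id})$ with $\mathrm{id}$ the identity on $[-M,M]$, and the pushforward $\mu\mapsto \mu\circ f^{-1}$ is continuous from $\sigma(\P(E),B(E))$ to $\sigma(\P([-M,M]),B([-M,M]))$, because $\int g\,d(\mu\circ f^{-1})=\int (g\circ f)\,d\mu$ and $g\circ f\in B(E)$ whenever $g\in B([-M,M])$. So it suffices to show $m\mapsto \rho_m(\mathrm{id})$ is continuous for the setwise topology on $\P([-M,M])$. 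The lower semicontinuity half is free: since $\mathrm{id}\in C_b([-M,M])$, Lemma \ref{le:rhofamilyregularity} gives weak lower semicontinuity, and lower semicontinuity with respect to the coarser weak topology is inherited by the finer topology $\sigma(\P([-M,M]),B([-M,M]))$.

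The main obstacle is the upper semicontinuity of $m\mapsto\rho_m(\mathrm{id})$. The natural route is through quantile functions, exactly as in the proof of ($3\Rightarrow 2$) of Theorem \ref{th:tight-lowersemicontinuity}: writing $\rho_m(\mathrm{id})=\rho(q_m)$, where $q_m\in L^\infty([0,1],\lambda)$ is the quantile function of $m$, a convergent sequence $m_n\to m$ produces quantile functions $q_{m_n}\to q_m$ $\lambda$-almost everywhere, uniformly bounded by $M$, and one would like to conclude $\rho(q_{m_n})\to\rho(q_m)$. This is precisely the assertion that $\rho$ is continuous along uniformly bounded, almost surely convergent sequences, i.e. the Lebesgue continuity of $\rho$ (Definition \ref{def:lebesgue-continuous}); the Fatou property supplies only the $\liminf$ inequality, which is exactly why lower semicontinuity was free and upper semicontinuity is not. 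I therefore expect the honest argument to need Lebesgue continuity of $\rho$ — equivalently, by Theorem \ref{th:tight-lowersemicontinuity}, joint weak lower semicontinuity of $\alpha$ — under which $\mu\mapsto\rho_\mu(f)$ is even weakly continuous for $f\in C_b(E)$ and the finer continuity follows immediately. The step I would scrutinize most is whether the passage from setwise convergence of $m_n$ to convergence of $\rho(q_{m_n})$ can be made without such a hypothesis: setwise convergence only reproduces the same almost-everywhere convergence of quantiles as weak convergence, so it is not clear the finer topology contributes anything, and I would want to check the statement against a risk measure that charges the essential supremum (such as $\rho(X)=\esssup X$), where upper semicontinuity visibly fails.
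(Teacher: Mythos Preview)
Your reduction is exactly the paper's: write $\rho_\mu(f)=\rho_{\mu\circ f^{-1}}(\mathrm{id})$ on $[-M,M]$, note the pushforward $\mu\mapsto\mu\circ f^{-1}$ is continuous from $\sigma(\P(E),B(E))$ to the weak topology on $\P([-M,M])$, and then invoke regularity of $m\mapsto\rho_m(\mathrm{id})$. The paper at this point asserts that Lemma \ref{le:rhofamilyregularity} gives \emph{continuity} of $m\mapsto\rho_m(\mathrm{id})$ on $\P([-M,M])$ for weak convergence; but as you correctly flag, that lemma only delivers lower semicontinuity, and your suspicion that the missing upper semicontinuity is essentially Lebesgue continuity is on point.

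Your test case $\rho=\esssup$ in fact shows the proposition is false as stated. Take $m_n=(1-\tfrac1n)\delta_0+\tfrac1n\delta_1\to\delta_0$ setwise on $[0,1]$; then $\rho_{m_n}(\mathrm{id})=1$ for all $n$ while $\rho_{\delta_0}(\mathrm{id})=0$, so $\mu\mapsto\rho_\mu(f)$ is not continuous for $\sigma(\P(E),B(E))$. The second claim fails too: with $\nu_n\equiv\delta_1\ll m_n$ one has $\alpha(\nu_n\,|\,m_n)=0$, yet $(\nu_n,m_n)\to(\delta_1,\delta_0)$ setwise and $\alpha(\delta_1\,|\,\delta_0)=\infty$. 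So the gap you identified is genuine and is present in the paper's own proof; the statement holds under the additional hypothesis of Lebesgue continuity (in which case it already follows from Theorem \ref{th:tight-lowersemicontinuity}), but not in the stated generality.
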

\begin{proof}
First, fix $f \in B(E)$ with $|f| \le M$ for $M > 0$. Note that $\rho_\mu(f) = \rho_{\mu \circ f^{-1}}(id)$, where $id$ denotes the identity map on $[-M,M]$. We saw in Lemma \ref{le:rhofamilyregularity} that $\P([-M,M]) \ni m \mapsto \rho_m(f)$ is continuous with respect to weak convergence. It is easy to check that $\mu \mapsto \mu \circ f^{-1}$ is a continuous map from $(\P(E),\sigma(\P(E),B(E)))$ to $\P([-M,M])$ endowed with the weak convergence topology, and this proves the first claim. According to the definition \eqref{def:alpha}, $\alpha(\nu |\mu)$ is the supremum of continuous functions of $(\nu,\mu)$, and this proves the second claim.
\end{proof}

\section{Acceptance consistency and superadditivity} \label{se:timeconsistency}

As was first observed by Weber \cite{weber-distributioninvariant}, a law invariant risk measure naturally gives rise to a \emph{dynamic risk measure} on any (nice enough) filtered probability space. We will use the same construction:
Define $\tilde{\rho}$ again by $\tilde{\rho}(P \circ X^{-1}) = \rho(X)$, which makes sense thanks to law-invariance. Using our previous notation, note that $\tilde{\rho}(m) = \rho_m(id)$, where $id$ denotes the identity map on $\R$. We may then define, for any $\sigma$-field $\G \subset \F$ in $\Omega$ and any $X \in L^\infty$,
\[
\rho(X | \G) := \tilde{\rho}(P(X \in \cdot \, | \, \G)).
\]
Note that a regular conditional law of $X$ given $\G$ exists because $\Omega$ is standard. Lemma \ref{le:rhofamilyregularity} ensures that $\rho(X | \G)$ is a $\G$-measurable random variable, defined uniquely up to a.s. equality. Similarly, for a random variable $Y$, write $\rho(X | Y) := \rho(X | \sigma(Y))$. Note that if $Y$ is $\G$-measurable then
\[
\rho(X +Y | \G) = \rho(X | \G) + Y, \ a.s.,
\]
for any random variable $X$. If $X$ and $Y$ are independent, then it is straightforward to check that
\[
\rho(f(X,Y) | X) = \rho(f(x,Y))|_{x = X}.
\]
We are nearly ready to define the type of time-consistency we investigate.

\begin{definition}
We say that a law-invariant risk measure $\rho$ is \emph{acceptance consistent} if
\[
\rho(X) \le \rho(\rho(X | \G)),
\]
for all sub-$\sigma$-fields $\G \subset \F$ and all $X \in L^\infty$. If the inequality is reversed, we say $\rho$ is \emph{rejection consistent}. We say $\rho$ is \emph{time consistent} if it is both acceptance and rejection consistent.
\end{definition}

\begin{remark}
This definition begins to look more like the one appearing in the literature (see \cite{acciaio-penner-dynamic}) once it is applied inductively. Let $(\F_t)_{t \ge 0}$ denote any filtration on $\Omega$, with $\F_t \subset \F$ for all $t$. Indeed, $(\rho(\cdot | \F_t))_{t \ge 0}$ is a \emph{dynamic risk measure} in the sense of \cite{acciaio-penner-dynamic}. If $\rho$ is acceptance consistent and $X \in L^\infty$, then it is straightforward to check that $\rho(X | \F_s) \le \rho(\rho(X | \F_t) | \F_s)$ a.s. for $0 \le s \le t$.
\end{remark}

\subsection{Superadditivity and shift-convexity}
Let us give names to certain divergence inequalities resembling the chain rule of classical relative entropy.
Henceforth we will need to assume that our divergences are simplified, as in Definition \ref{def:simplified}.
As far as the following definition of superadditivity is concerned,
this assumption is merely to ensure that the divergence $\alpha(\cdot | \cdot)$ is jointly measurable, so that the integrals make sense. Later, a technical point in the proof of the main Theorem \ref{th:mainequivalence} will depend crucially on the divergence being simplified, but the question of whether or not Theorem \ref{th:mainequivalence} holds in more generality remains open.

\begin{definition}
We say that a divergence $\alpha$ is \emph{partially superadditive} (resp. \emph{partially subadditive}) if
\[
\alpha\left(\left.\nu(dx)K^\nu_x(dy) \right| \mu_1 \times \mu_2\right) \ge  \alpha(\nu | \mu_1) + \int\nu(dx)\alpha(K^\nu_x | \mu_2), \ \text{ (resp. } \le \text{) }
\]
whenever $\nu(dx)K^\nu_x(dy)$ and $\mu_1 \times \mu_2$ are probability measures on the product of two Polish spaces; note that the latter is required to be a product measure. 
We say a simplified divergence $\alpha$ is (fully) \emph{superadditive} (resp. \emph{subadditive}) if
\[
\alpha\left(\left. \nu(dx)K^\nu_x(dy) \right|  \mu(dx)K^\mu_x(dy)\right) \ge  \alpha(\nu | \mu) + \int\nu(dx)\alpha(K^\nu_x | K^\mu_x), \ \text{ (resp. } \le \text{) }
\]
whenever $\nu(dx)K^\nu_x(dy)$ and $\mu(dx)K^\mu_x(dy)$ are probability measures on the product of two Polish spaces.
\end{definition}

Note that partial superadditivity as opposed to full superadditivity only requires the inequality to hold when the reference measure is a product.
It turns out that these conditions are equivalent, although we have only an indirect proof of this fact. As was discussed in the introduction, additivity properties of a divergence $\alpha$ are linked with time consistency and sub-level set properties of its induced risk measure, which we now describe. In the following, we will write $\P[-M,M]$ for $M > 0$ for to the set of probability measures on $\R$ which are supported on the interval $[-M,M]$.

\begin{definition} {\ }
\begin{enumerate}
\item The \emph{measure acceptance set} $\A$ of a law invariant risk measure $\rho$ is defined by $\A := \{P \circ X^{-1} : X \in L^\infty, \ \rho(X) \le 0\}$. In words, this is the set of laws of random variables $X$ satisfying $\rho(X) \le 0$.
\item A set $\A \subset \P(\R)$ is \emph{shift-convex} if for every $\mu \in \A$, every $M > 0$, and every measurable map $\R \ni x \mapsto K_x \in \A \cap \P[-M,M]$, it holds that the measure $\int_\R\mu(dx)K_x(\cdot - x)$ is in $\A$.
\end{enumerate}
\end{definition}

As was discussed by Weber, the \emph{convexity} of a measure acceptance set $\A$ admits a natural interpretation in terms of so-called compound lotteries: If two outcomes $X$ and $Y$ are acceptable, then convexity of $\A$ means that the outcome with law $tP\circ X^{-1} + (1-t)P\circ Y^{-1}$ is also acceptable, for any $t \in (0,1)$. Shift-convexity is open to interpretation on similar grounds:
Suppose $X$ is an acceptable outcome, and that $Y$ is conditionally acceptable given $X$. Then shift-convexity means that $X+Y$ is itself acceptable. To see this, in the definition of shift-convexity take $\mu$ to be the law of $X$ and $K_x$ to be the conditional law of $Y$ given $X$.
Section \ref{se:shiftconvexity} we will elaborate on interpretations and reformulations of this unusual property. We can now state the main result of this section.

\begin{theorem} \label{th:mainequivalence}
Suppose $\alpha$ is a simplified divergence induced by a law invariant risk measure $\rho$ with acceptance set $\A$. The following are equivalent:
\begin{enumerate}
\item $\rho$ is acceptance consistent.
\item $\A$ is shift-convex.
\item $\alpha$ is superadditive.
\item $\alpha$ is partially superadditive.
\end{enumerate}
Similarly, the same equivalences hold when ``acceptance'' is changed to ``rejection'', ``superadditive'' is changed to ``subadditive'', and $\A$ is changed to $\A^c$. The equivalence of (1) and (2) holds without the assumption that $\alpha$ is simplified.
\end{theorem}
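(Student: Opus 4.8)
The plan is to prove the four-way equivalence $(1)\Leftrightarrow(2)\Leftrightarrow(3)\Leftrightarrow(4)$ by establishing the cycle of implications that is easiest to verify directly, leaving the hardest reversal to be handled by a careful measurable-selection argument that exploits the simplified hypothesis. I would organize the proof around the dual pairing between $\rho$ and $\alpha$, translating each statement into its natural language: acceptance consistency (1) is a statement about the functional $\rho(\rho(X\mid\G))$; shift-convexity (2) is a geometric closure property of the set $\A\subset\P(\R)$; and superadditivity (3)/(4) are inequalities for the divergence on product spaces. The two obviously-equivalent-by-duality items are (1) and (3): acceptance consistency says $\rho(X)\le\rho(\rho(X\mid\G))$ for all $X$ and all $\sigma$-fields $\G$, and by writing the conditional law disintegration $P(X\in\cdot)=\nu(dx)K_x^{}(dy)$-style decompositions one expects $\rho(\rho(X\mid\G))$ to dualize precisely to the right-hand side $\alpha(\nu\mid\mu)+\int\nu(dx)\,\alpha(K^\nu_x\mid K^\mu_x)$ under the penalty representation.

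First I would prove $(2)\Leftrightarrow(1)$ without the simplified assumption, as the theorem explicitly promises this. The forward direction uses the interpretation already flagged in the text: given $X$ acceptable and $Y$ conditionally acceptable given $X$, take $\mu=P\circ X^{-1}$ and $K_x$ the conditional law of $Y$ given $X$, so that shift-convexity delivers $X+Y\in\A$; conversely, acceptance consistency applied with $\G=\sigma(X)$ and a judicious choice of $X,Y$ recovers the defining closure property of $\A$. The key technical point is that $\rho(X\mid\G)\le 0$ a.s.\ is equivalent to saying the conditional laws lie in $\A$, and that cash-additivity $\rho(X+Y\mid\G)=\rho(X\mid\G)+Y$ lets one shift by the $\G$-measurable part. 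I would then establish $(1)\Rightarrow(3)\Rightarrow(4)$: the first of these is the dual translation just described, and $(3)\Rightarrow(4)$ is trivial since partial superadditivity is the special case of full superadditivity with $\mu$ a product measure.

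The main obstacle, and the step where the simplified hypothesis is indispensable, will be the reversal $(4)\Rightarrow(1)$ (equivalently closing the cycle back to acceptance consistency). Here one is given only the partial-superadditivity inequality for product reference measures and must recover the full time-consistency inequality $\rho(X)\le\rho(\rho(X\mid\G))$. The difficulty is that evaluating $\rho(\rho(X\mid\G))$ requires integrating the divergence against a measure on $\P(\R)$, and to dualize correctly one must select, measurably in $x$, near-optimal test functions $f_x\in C([0,1])$ achieving the supremum defining $\alpha(K^\nu_x\mid K^\mu_x)$. Joint measurability of $\alpha$ (Lemma \ref{le:jointlymeasurable}), which holds precisely because $\alpha$ is simplified, is what makes the relevant integrals well-defined and the selection possible; I expect to invoke a measurable-selection theorem of Kuratowski--Ryll-Nardzewski type applied to the separable space $C([0,1])$, exactly the delicate point the introduction warns is a non-trivial gap from the essential-supremum approach. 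Finally, the ``rejection'' analogue follows by the identical argument with all inequalities reversed, replacing $\A$ by its complement $\A^c$ and superadditivity by subadditivity, so no separate work is needed beyond checking that each duality step is insensitive to the direction of the inequality.
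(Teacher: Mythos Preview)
Your overall scaffolding is sound, and you have correctly identified the key ingredients (the penalty representation, Lemma~\ref{le:jointlymeasurable}, and measurable selection into the separable space $C([0,1])$). However, you have the hard and easy directions reversed, and this matters because your account of $(1)\Rightarrow(3)$ as a routine ``dual translation'' glosses over the principal difficulty of the whole theorem.

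The implication $(4)\Rightarrow(1)$ is in fact the \emph{easy} step. Given partial superadditivity, one shows $\rho_{\mu_1}\bigl(\rho_{\mu_2}(f(x,\cdot))\bigr)\ge\rho_{\mu_1\times\mu_2}(f)$ directly: expand the outer $\rho_{\mu_1}$ as a supremum over $\nu$, note that for \emph{any} kernel $K$ the inner supremum over $\eta$ dominates $\int f(x,y)K_x(dy)-\alpha(K_x\mid\mu_2)$, apply (4), and recognize the resulting supremum over $\nu(dx)K_x(dy)$ as $\rho_{\mu_1\times\mu_2}(f)$. Only the trivial inequality direction of Proposition~\ref{pr:keyidentity} is needed here, and no measurable selection enters beyond what is required to make sense of the statement of (4).

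The genuine obstacle is $(1)\Rightarrow(3)$. Acceptance consistency gives $\rho_\mu\bigl(\rho_{K^\mu_x}(f(x,\cdot))\bigr)\ge\rho_{\bar\mu}(f)$, and order-reversal under conjugation would yield the desired penalty inequality \emph{provided} one knows that $\bar\nu\mapsto\alpha(\nu\mid\mu)+\int\nu(dx)\,\alpha(K^\nu_x\mid K^\mu_x)$ is the \emph{minimal} penalty function of the composed risk measure, not merely \emph{a} penalty function. Establishing minimality is precisely the content of Lemma~\ref{le:integralconvexity}, which requires showing
\[
\int\nu(dx)\,\alpha(K^\nu_x\mid K^\mu_x)=\sup_{f\in B(E\times F)}\Bigl\{\int f\,d\bar\nu-\int\nu(dx)\,\rho_{K^\mu_x}(f(x,\cdot))\Bigr\}.
\]
The nontrivial inequality here is exactly where the simplified hypothesis is used to select, measurably in $x$, near-optimizers $g_x\in C([0,1])$ for $\alpha(K^\nu_x\mid K^\mu_x)$ and then patch them into a single jointly measurable $f$ via Lusin's theorem. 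So the careful selection argument you describe is needed, but it belongs in $(1)\Rightarrow(3)$, not $(4)\Rightarrow(1)$; your current plan leaves the real gap in the proof unaddressed.
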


The equivalence between (1) and (3) is related to Theorem 27 of \cite{acciaio-penner-dynamic}, and Section \ref{se:essentialsuprema} elaborates on the precise connection.
From Theorem \ref{th:mainequivalence} we can conclude that \emph{not many divergences are additive}, i.e., both superadditive and subadditive.
According to Kupper and Schachermayer \cite{kupper-schachermayer},  the only time consistent law invariant risk measures are entropic:

\begin{corollary} \label{co:entropic}
Suppose $\alpha$ is a simplified divergence. If $\alpha$ is both superadditive and subadditive, then it is of one of the following forms:
\begin{align*}
\alpha(\nu | \mu) &= \frac{1}{\eta}H(\nu | \mu), \text{ for some } \eta > 0, \ \ \text{ for } \nu \ll \mu, \quad \infty \text{ otherwise}, \\
\alpha(\nu | \mu) &= 0  \ \ \text{ for } \nu = \mu, \quad \infty \text{ otherwise}, \\
\alpha(\nu | \mu) &= 0  \ \ \text{ for } \nu \ll \mu, \quad \infty \text{ otherwise}.
\end{align*}
The induced risk measure in these cases are $\rho(X) = \eta^{-1}\log\E[e^{\eta X}]$, $\rho(X) = \E X$, and $\rho(X) = \esssup X$.
\end{corollary}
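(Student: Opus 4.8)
The plan is to reduce the claim to the classification of time consistent law invariant risk measures established by Kupper and Schachermayer, using the equivalences already recorded in Theorem~\ref{th:mainequivalence}, and then to read off the induced divergence in each of the resulting cases.

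First, suppose $\alpha$ is simultaneously superadditive and subadditive. Applying Theorem~\ref{th:mainequivalence} once in its ``acceptance/superadditive'' form and once in its ``rejection/subadditive'' form, we conclude that the induced risk measure $\rho$ is both acceptance consistent and rejection consistent, i.e. time consistent, so that $\rho(X)=\rho(\rho(X\,|\,\G))$ for all $X \in L^\infty$ and all $\G \subset \F$. Iterating this identity along a filtration, as in the Remark following the definition of acceptance consistency, exhibits $(\rho(\cdot\,|\,\F_t))_t$ as a law invariant dynamic risk measure that is time consistent in the dynamic sense. The role of this step is to place $\rho$ squarely within the hypotheses of the Kupper--Schachermayer theorem.

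Second, I would invoke the classification of \cite{kupper-schachermayer}: the only time consistent law invariant risk measures are the entropic functionals $\rho(X)=\eta^{-1}\log\E[e^{\eta X}]$ for $\eta\in(0,\infty)$, together with their two degenerate limits $\rho(X)=\E X$ (the limit $\eta\downarrow 0$) and $\rho(X)=\esssup X$ (the limit $\eta\uparrow\infty$). This identifies $\rho$ up to the single parameter $\eta\in[0,\infty]$ and leaves only the computation of the associated divergence. I would then compute $\alpha$ directly from \eqref{def:alpha} in each case. For the entropic risk measure, the Gibbs variational principle identifies the minimal penalty function as $\tfrac1\eta H(\cdot\,|\,\cdot)$, giving $\alpha(\nu|\mu)=\tfrac1\eta H(\nu|\mu)$. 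For $\rho(X)=\E X$ we have $\rho_\mu(f)=\int f\,d\mu$, so $\alpha(\nu|\mu)=\sup_f\int f\,d(\nu-\mu)$, which is $0$ when $\nu=\mu$ and $+\infty$ otherwise (scale any $f$ separating $\nu$ and $\mu$). For $\rho(X)=\esssup X$ the constraint $\rho_\mu(f)\le 0$ reads $f\le 0$ $\mu$-a.e.; for $\nu\ll\mu$ this forces $\int f\,d\nu\le 0$ with supremum $0$ attained at $f=0$, so $\alpha(\nu|\mu)=0$ for $\nu\ll\mu$ and $\infty$ otherwise. These match the three stated forms.

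The step I expect to be the main obstacle is the bridge in the first two paragraphs: reconciling our pointwise, single-$\sigma$-field notion of time consistency with the filtration-indexed, essential-supremum-based framework in which the Kupper--Schachermayer classification is phrased, and checking that any regularity they require is automatic for the law invariant risk measures at hand. This is closely related to the essential-suprema subtlety flagged in Section~\ref{se:essentialsuprema}. Once $\rho$ is pinned down, the divergence computations are routine.
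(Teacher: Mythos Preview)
Your proposal is correct and follows exactly the approach the paper takes: the paper presents this as an immediate corollary of Theorem~\ref{th:mainequivalence} together with the Kupper--Schachermayer classification, without writing out a separate proof. The bridging concern you flag about reconciling the pointwise notion of time consistency with the framework of \cite{kupper-schachermayer} is not addressed in the paper either; it is simply absorbed into the citation.
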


\subsection{Properties of time consistency}

The following lemma shows that acceptance consistency is equivalent to a seemingly weaker statement, which will be easier to connect with shift-convexity:

\begin{lemma} \label{le:partialacceptanceconsistency}
Let $\rho$ be a law invariant risk measure. Then $\rho$ is acceptance consistent if and only if the following holds: For every pair of independent random variables $X,Y$ with values in some Polish spaces $E,F$, and for every $f \in B(E \times F)$, we have
\[
\rho(f(X,Y)) \le \rho(\rho(f(X,Y) | X)).
\]
\end{lemma}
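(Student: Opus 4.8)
The plan is to prove the two implications separately. The forward direction is immediate: if $\rho$ is acceptance consistent and $X,Y$ are independent, valued in Polish spaces $E,F$, with $f \in B(E\times F)$, then $f(X,Y) \in L^\infty$, and applying the definition of acceptance consistency to this random variable with the sub-$\sigma$-field $\G = \sigma(X)$ gives $\rho(f(X,Y)) \le \rho(\rho(f(X,Y)|X))$ at once. All the work lies in the converse, where I would reduce the general inequality $\rho(Z) \le \rho(\rho(Z|\G))$, for arbitrary $Z \in L^\infty$ and $\G \subset \F$, to the structured independent case.

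Fix $Z$ and $\G$, and set $\mu := P(Z \in \cdot \,|\, \G)$, a $\G$-measurable random element of $E := \P([-M,M])$, where $M$ bounds $|Z|$. By definition $\rho(Z|\G) = \tilde\rho(\mu)$, which is a genuine random variable because $\tilde\rho$ is measurable by Lemma \ref{le:rhofamilyregularity}. The key observation is that the joint law of $(Z,\mu)$ disintegrates over its second marginal as $\mathrm{law}(\mu)(dm)\,m(dz)$; equivalently, the regular conditional law of $Z$ given $\mu$ is $\mu$ itself. I would verify this by the tower property: for product integrands $g(z)h(m)$ one has $\E[g(Z)h(\mu)] = \E[h(\mu)\int g\,d\mu]$ since $h(\mu)$ is $\G$-measurable and $\int g\,d\mu = \E[g(Z)|\G]$, and a monotone class argument extends the identity to all bounded measurable integrands.

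With this disintegration I would re-realize the pair $(Z,\mu)$ through an independent construction on the nonatomic space $\Omega$. Choose independent random variables $X$ and $Y$ on $\Omega$ with $X$ distributed as $\mu$ and $Y$ uniform on $[0,1]$, which is possible because $\Omega$ is nonatomic and standard Borel, and set $f(m,y) := q_m(y)$, the quantile function of $m$ evaluated at $y$; this $f$ lies in $B(E \times [0,1])$ and satisfies $q_m(Y) \sim m$ for each fixed $m$. Then, conditionally on $X=m$, the variable $f(X,Y)$ has law $m$, so the joint law of $(f(X,Y),X)$ agrees with that of $(Z,\mu)$. Law invariance of $\rho$ gives $\rho(f(X,Y)) = \rho(Z)$; moreover $\rho(f(X,Y)|X) = \tilde\rho(X)$ has the same law as $\tilde\rho(\mu) = \rho(Z|\G)$, so $\rho(\rho(f(X,Y)|X)) = \rho(\rho(Z|\G))$. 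Applying the hypothesis to $X,Y,f$ yields precisely $\rho(Z) \le \rho(\rho(Z|\G))$, completing the converse.

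The step I expect to be the main obstacle is exactly this converse reduction: recognizing that both $\rho(Z)$ and $\rho(\rho(Z|\G))$ depend only on the joint law of $(Z,\mu)$, identifying the self-consistent disintegration $\mathrm{law}(\mu)(dm)\,m(dz)$, and then re-expressing that joint law through a genuinely independent pair via the quantile map. The remaining technical points—joint measurability of $(m,y)\mapsto q_m(y)$ and the existence on $\Omega$ of an independent pair with prescribed marginals—are standard, but I would flag them explicitly.
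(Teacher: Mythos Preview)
Your argument is correct, but it proceeds along a different route than the paper's. The paper fixes $Z \in L^\infty$ and $\G$, chooses a single random variable $Y$ generating $\G$, and then invokes a noise-transfer theorem (Kallenberg, Theorem 5.10) to write the pair $(Y,Z)$ in law as $(\widetilde{Y},f(\widetilde{Y},U))$ with $\widetilde{Y}$ and $U$ independent; the hypothesis applied to $\widetilde{Y},U,f$ and law invariance then finish. You instead observe that only the conditional law $\mu = P(Z \in \cdot \,|\,\G)$ matters, prove that $P(Z \in \cdot \,|\,\mu)=\mu$, and realize the joint law of $(\mu,Z)$ via an independent pair $(X,Y)$ and the quantile map $f(m,y)=q_m(y)$. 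Your approach is more self-contained (no appeal to Kallenberg) and makes the sufficiency of the conditional law explicit; it also avoids the small gap in the paper's argument that $\G$ must be countably generated in order to be generated by a single $Y$ (not automatic for arbitrary sub-$\sigma$-fields of $\F$). The paper's approach, on the other hand, is shorter once the noise-transfer theorem is granted and does not require checking the joint measurability of the quantile map. Both reductions land on the same application of the hypothesis and law invariance.
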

\begin{proof}
The ``only if'' direction is immediate. To prove the converse, fix $X \in L^\infty$ and a $\sigma$-field $\G \subset \F$. Find $Y \in L^\infty$ which generates $\G$, for example $Y = \sum_{n=1}^\infty 2^{-n}1_{B_n}$ where $\{B_n\}$ is a countable family of generators of $\G$ (recall that our ambient probability space is standard). By \cite[Theorem 5.10]{kallenberg-foundations}, we may find independent random variables $\widetilde{Y}$ and $U$ as well as a measurable function $f$ such that $(\widetilde{Y},f(\widetilde{Y},U))$ has the same law as $(Y,Z)$. Then the hypothesis and law invariance imply
\[
\rho(Z) = \rho(f(\widetilde{Y},U)) \le \rho(\rho(f(\widetilde{Y},U) | \widetilde{Y})).
\]
But the conditional law of $f(\widetilde{Y},U)$ given $\widetilde{Y}$ is the same as the conditional law of $Z$ given $Y$, and thus law invariance of $\rho$ implies that $\rho(f(\widetilde{Y},U) | \widetilde{Y})$ and $\rho(Z | Y) = \rho(Z | \G)$ have the same law. Using law invariance once more, we conclude that
\[
\rho(\rho(f(\widetilde{Y},U) | \widetilde{Y})) = \rho(\rho(Z | \G)).
\]
\end{proof}

The next lemma rephrases acceptance consistency, in a more measure-theoretic notation which will be useful later.

\begin{proposition} \label{pr:acceptanceconsistent-equivalences}
For a law invariant risk measure $\rho$, the following are equivalent:
\begin{enumerate}
\item $\rho$ is acceptance consistent.
\item For Polish spaces $E$ and $F$, $\bar{\mu}=\mu(dx)K^\mu_x(dy) \in \P(E \times F)$, $f \in B(E \times F)$, and $g \in B(E)$ satisfying $\rho_\mu(g) \le 0$, we have
\[
\mu\left\{x \in E : \rho_{K^\mu_x}(f(x,\cdot)) \le g(x)\right\} = 1 \quad \Rightarrow \quad \rho_{\bar{\mu}}(f) \le 0.
\]
\item For Polish spaces $E$ and $F$, $\bar{\mu}=\mu(dx)K^\mu_x(dy) \in \P(E \times F)$,  and $f \in B(E \times F)$, we have
\[
\rho_{\mu}\left(\rho_{K^\mu_x}(f(x,\cdot))|_{x = X}\right) \ge \rho_{\bar{\mu}}(f),
\]
\item For Polish spaces $E$ and $F$, $\mu_1 \in \P(E)$,  $\mu_2 \in \P(F)$, $f \in B(E \times F)$, and $g \in B(E)$ satisfying $\rho_{\mu_1}(g) \le 0$, we have
\[
\mu\left\{x \in E : \rho_{\mu_2}(f(x,\cdot)) \le g(x)\right\} = 1 \quad \Rightarrow \quad \rho_{\mu_1 \times \mu_2}(f) \le 0.
\]
\item For Polish spaces $E$ and $F$, $\bar{\mu}=\mu_1 \times \mu_2 \in \P(E \times F)$, and $f \in B(E \times F)$, we have
\[
\rho_{\mu_1}\left(\rho_{\mu_2}(f(x,\cdot))|_{x = X}\right) \ge \rho_{\bar{\mu}}(f),
\]
where $X$ denotes the identity map on $E$.
\end{enumerate}
The same equivalences hold for rejection consistent, but with the inequalities reversed.
\end{proposition}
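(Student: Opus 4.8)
The plan is to establish the cycle of implications $(1) \Rightarrow (3) \Rightarrow (5) \Rightarrow (1)$, which shows that (1), (3), (5) are equivalent, and then to obtain $(2) \Leftrightarrow (3)$ and $(4) \Leftrightarrow (5)$ by a short cash-additivity argument. The conceptual heart is a dictionary between the pointwise conditional risk $\rho(\cdot \mid \G)$ and the disintegration notation: if $(X,Y)$ is an $(E \times F)$-valued random variable with law $\bar\mu = \mu(dx)K^\mu_x(dy)$, then a regular conditional law of $(X,Y)$ given $X$ is $\delta_X \otimes K^\mu_X$, so the conditional law of $f(X,Y)$ given $X$ is the pushforward of $K^\mu_X$ under $f(X,\cdot)$, whence
\[
\rho(f(X,Y) \mid X) = \rho_{K^\mu_X}(f(X,\cdot)), \quad \text{a.s.}
\]
Composing with $\rho_\mu$, the defining inequality of acceptance consistency $\rho(f(X,Y)) \le \rho(\rho(f(X,Y) \mid X))$ reads precisely as the inequality in (3).

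For $(1) \Rightarrow (3)$, I fix Polish $E,F$, a disintegrated $\bar\mu$, and $f \in B(E \times F)$; since $\Omega$ is nonatomic standard Borel I realize $\bar\mu$ as the law of a pair $(X,Y)$ on $\Omega$, take $\G = \sigma(X)$, and use the identity above so that acceptance consistency becomes exactly (3) (recalling $\rho_\mu(h) = \rho(h(X))$ for $h(x) := \rho_{K^\mu_x}(f(x,\cdot))$). The implication $(3) \Rightarrow (5)$ is immediate on specializing to the constant kernel $K^\mu_x \equiv \mu_2$, so that $\bar\mu = \mu_1 \times \mu_2$. For $(5) \Rightarrow (1)$ I appeal to Lemma \ref{le:partialacceptanceconsistency}, which reduces acceptance consistency to the inequality $\rho(f(X,Y)) \le \rho(\rho(f(X,Y) \mid X))$ for \emph{independent} $X,Y$; for independent $X \sim \mu_1$ and $Y \sim \mu_2$ the same dictionary (with $K^\mu_x \equiv \mu_2$) identifies this inequality with (5), so verifying (5) for all pairs $(\mu_1,\mu_2)$ supplies the hypothesis of the lemma and closes the cycle.

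The sub-level-set reformulations follow from cash additivity and monotonicity. For $(3) \Rightarrow (2)$, set $h(x) := \rho_{K^\mu_x}(f(x,\cdot))$; if $h \le g$ $\mu$-a.s. with $\rho_\mu(g) \le 0$, then monotonicity gives $\rho_\mu(h) \le \rho_\mu(g) \le 0$, and (3) yields $\rho_{\bar\mu}(f) \le \rho_\mu(h) \le 0$. Conversely, for $(2) \Rightarrow (3)$, put $c := \rho_\mu(h)$ and apply (2) to the shifted function $f - c$ with $g := h - c$: cash additivity gives $\rho_{K^\mu_x}((f-c)(x,\cdot)) = h(x) - c = g(x)$ and $\rho_\mu(g) = 0$, so (2) forces $\rho_{\bar\mu}(f - c) = \rho_{\bar\mu}(f) - c \le 0$, i.e.\ $\rho_{\bar\mu}(f) \le \rho_\mu(h)$, which is (3). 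The equivalence $(4) \Leftrightarrow (5)$ is identical with $K^\mu_x$ replaced by the constant kernel $\mu_2$.

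I expect the main obstacle to be a purely measure-theoretic one: verifying that $x \mapsto \rho_{K^\mu_x}(f(x,\cdot))$ is measurable, so that the expressions in (2)--(5) and the conditional-law identity are well defined. I would handle this by writing $\rho_{K^\mu_x}(f(x,\cdot)) = \rho_{\nu_x}(\mathrm{id})$, where $\nu_x \in \P(\R)$ is the pushforward of $K^\mu_x$ under $f(x,\cdot)$; the map $x \mapsto \nu_x$ is measurable because $K^\mu$ is a measurable kernel and $f$ is jointly measurable, while $\nu \mapsto \rho_\nu(\mathrm{id})$ is measurable by Lemma \ref{le:rhofamilyregularity}, and the composition is then measurable. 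The rejection-consistent statement is proven verbatim, reversing every inequality (and using monotonicity in the opposite direction).
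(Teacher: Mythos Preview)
Your proposal is correct and follows essentially the same approach as the paper: both use the dictionary between $\rho(\cdot\mid\G)$ and the disintegration notation, both invoke Lemma \ref{le:partialacceptanceconsistency} to pass from the product-measure statement (5) back to full acceptance consistency (1), and both derive the sub-level-set forms (2), (4) from the inequality forms (3), (5) via the cash-additivity shift $f \mapsto f - \rho_\mu(h)$. The only cosmetic difference is the ordering of implications --- the paper runs $(1)\Rightarrow(2)\Rightarrow(3)\Rightarrow(5)\Leftrightarrow(1)$ and $(2)\Rightarrow(4)\Rightarrow(5)$, whereas you go $(1)\Rightarrow(3)\Rightarrow(5)\Rightarrow(1)$ and then prove $(2)\Leftrightarrow(3)$, $(4)\Leftrightarrow(5)$ directly --- but the actual arguments at each step are the same.
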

\begin{proof} It is obvious that (3) implies (5) and (2) implies (4). Property (5) and the property described in Lemma \ref{le:partialacceptanceconsistency} are equivalent, merely written in different notation, and thus (5) and (1) are equivalent. It remains to prove $1 \Rightarrow 2 \Rightarrow 3$ and $4 \Rightarrow 5$.

($1 \Rightarrow 2$) Fix Polish spaces $E$ and $F$, $\bar{\mu}=\mu(dx)K^\mu_x(dy) \in \P(E \times F)$, $f \in B(E \times F)$, and $g \in B(E)$ satisfying $\rho_\mu(g) \le 0$. Suppose also
\[
\mu\left\{x \in E : \rho_{K^\mu_x}(f(x,\cdot)) \le g(x)\right\} = 1.
\]
Find an $E \times F$-valued random variable $(X,Y)$ with law $\bar{\mu}$, and note that
\[
\rho(f(X,Y) | X) = \rho_{K^\mu_X}(f(X,\cdot)) \le g(X), \ a.s.
\]
Acceptance consistency and monotonicity of $\rho$ yield
\[
\rho_{\bar{\mu}}(f) = \rho(f(X,Y)) \le \rho(\rho(f(X,Y) | X)) \le \rho(g(X)) = \rho_\mu(g) \le 0.
\]

($2 \Rightarrow 3$) Fix Polish spaces $E$ and $F$, $\bar{\mu}=\mu(dx)K^\mu_x(dy) \in \P(E \times F)$, and $f \in B(E \times F)$. Define $g \in B(E)$ by $g(x) = \rho_{K^\mu_x}(f(x,\cdot))$. Then trivially
\[
\mu\left\{x \in E : \rho_{K^\mu_x}(f(x,\cdot) - \rho_\mu(g)) \le g(x) - \rho_\mu(g)\right\} = 1.
\]
Since $\rho_\mu(g - \rho_\mu(g)) = 0$,
property (2) implies
\[
\rho_{\bar{\mu}}(f - \rho_\mu(g)) \le 0.
\]
Rearrange this to get $\rho_{\bar{\mu}}(f) \le \rho_\mu(g)$, as desired.

($4 \Rightarrow 5$) Fix Polish spaces $E$ and $F$, $\mu_1 \in \P(E)$, $\mu_2 \in \P(F)$, and $f \in B(E \times F)$. Define $g \in B(E)$ by $g(x) = \rho_{\mu_2}(f(x,\cdot))$. Then trivially
\[
\mu_1\left\{x \in E : \rho_{\mu_2}(f(x,\cdot) - \rho_\mu(g)) \le g(x) - \rho_{\mu_1}(g)\right\} = 1.
\]
Since $\rho_{\mu_1}(g - \rho_{\mu_1}(g)) = 0$,
property (4) implies
\[
\rho_{\mu_1 \times \mu_2}(f - \rho_{\mu_1}(g)) \le 0.
\]
Rearrange this to get $\rho_{\mu_1 \times \mu_2}(f) \le \rho_{\mu_1}(g)$, as desired.
\end{proof}

This alternative description of acceptance consistency will serve us especially well when addressing additivity. For now, we will use it in establishing the connection between acceptance consistency and shift-convexity.

\begin{proposition} \label{pr:timeconsistent-shiftconvex}
A law-invariant risk measure is acceptance consistent if and only if it its measure acceptance set is shift-convex.
\end{proposition}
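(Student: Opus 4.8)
The plan is to prove the two implications separately, in each case passing through the reformulation of conditional acceptance consistency recorded in Proposition \ref{pr:acceptanceconsistent-equivalences}; concretely I will use that acceptance consistency is equivalent to its property (2). Throughout, recall that $\mu \in \A$ means exactly $\tilde{\rho}(\mu) = \rho_\mu(id) \le 0$, and that $\int_\R\mu(dx)K_x(\cdot - x)$ is the law of $X+Y$ when $X \sim \mu$ and the conditional law of $Y$ given $X=x$ is $K_x$.

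For the forward implication (acceptance consistent $\Rightarrow$ shift-convex), fix $\mu \in \A$, $M>0$, and a measurable kernel $x \mapsto K_x \in \A \cap \P[-M,M]$. Since $\Omega$ is nonatomic, I can realize a pair $(X,Y)$ with joint law $\mu(dx)K_x(dy)$ and set $W := X+Y$, whose law is exactly $\int_\R\mu(dx)K_x(\cdot - x)$. By cash additivity, $\rho(W \mid X) = \tilde{\rho}(K_X(\cdot - X)) = X + \rho_{K_X}(id) \le X$ almost surely, because $K_x \in \A$ forces $\rho_{K_x}(id) \le 0$. Acceptance consistency and monotonicity then give $\rho(W) \le \rho(\rho(W \mid X)) \le \rho(X) = \tilde{\rho}(\mu) \le 0$, so the law of $W$ lies in $\A$, which is precisely shift-convexity.

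The converse (shift-convex $\Rightarrow$ acceptance consistent) is the substantive direction, and I will verify property (2). So let $\bar{\mu} = \mu(dx)K^\mu_x(dy) \in \P(E \times F)$, $f \in B(E \times F)$, and $g \in B(E)$ be given with $\rho_\mu(g) \le 0$ and $\rho_{K^\mu_x}(f(x,\cdot)) \le g(x)$ for $\mu$-a.e.\ $x$; realize $(X,Y) \sim \bar{\mu}$ and put $W = f(X,Y)$, the goal being $\rho(W) \le 0$. The naive attempt applies shift-convexity with base variable $g(X)$ (whose law lies in $\A$) and increments $W - g(X)$, and the obstruction here is what I expect to be the crux of the whole proof: shift-convexity conditions the increments on the \emph{real} base variable, whereas the hypothesis only controls $W - g(X)$ conditioned on the finer information in $X$. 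As $\sigma(g(X))$ is generally strictly coarser than $\sigma(X)$, the conditional law of $W - g(X)$ given $g(X)$ is an average of acceptable laws and need not be acceptable, since we do not assume $\A$ convex. The remedy is to replace $g(X)$ by an \emph{injective} real proxy staying just above $g$: I construct a bounded measurable injective $\phi : E \to \R$ with $g \le \phi \le g + 2\epsilon$ by rounding $g$ up to a mesh-$\epsilon$ grid, $g'(x) := \epsilon\lceil g(x)/\epsilon\rceil$, fixing a Borel isomorphism $\iota : E \to (0,1)$, and setting $\phi := g' + \epsilon\,\iota$; the disjoint integer and fractional parts force injectivity. Being an injective Borel map between standard Borel spaces, $\phi$ is bi-measurable by Lusin--Souslin, so $\sigma(\phi(X)) = \sigma(X)$ and conditioning on $\phi(X)$ coincides with conditioning on $X$.

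With $c_\epsilon := \rho(\phi(X))$, monotonicity and cash additivity give $c_\epsilon \le \rho_\mu(g) + 2\epsilon \le 2\epsilon$. I then apply shift-convexity to the base variable $U := \phi(X) - c_\epsilon$, whose law lies in $\A$ since $\rho(U)=0$, and to the kernel $v \mapsto K_v$ given by the conditional law of the increment $W - \phi(X) = f(X,Y) - \phi(X)$ given $U=v$ (defined as $\delta_0$ off the support of the law of $U$, which does not affect the integral). By injectivity this equals the conditional law given $X$, so $K_v$ has $\rho = \rho_{K^\mu_x}(f(x,\cdot)) - \phi(x) \le g(x) - \phi(x) \le 0$ and lies in $\A \cap \P[-M,M]$ for $M := \|f\|_\infty + \|\phi\|_\infty$. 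Shift-convexity yields that the law of $U + (W - \phi(X)) = W - c_\epsilon$ is in $\A$, i.e.\ $\rho(W) \le c_\epsilon \le 2\epsilon$; letting $\epsilon \downarrow 0$ gives $\rho_{\bar{\mu}}(f) = \rho(W) \le 0$, completing property (2) and hence acceptance consistency. I note that this argument uses only the bound $\phi \ge g \ge \rho_{K^\mu_x}(f(x,\cdot))$ and never the measurability of $x \mapsto \rho_{K^\mu_x}(f(x,\cdot))$ on its own, so it appeals to no simplification hypothesis, consistent with the assertion that the equivalence (1)$\Leftrightarrow$(2) of Theorem \ref{th:mainequivalence} is unconditional.
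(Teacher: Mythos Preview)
Your argument is correct. For the forward direction you work directly from the definition of acceptance consistency (with $\G=\sigma(X)$), while the paper routes through Proposition~\ref{pr:acceptanceconsistent-equivalences}(2) and a noise-variable representation; the content is the same.

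The converse is where your proof genuinely differs, and for the reason you isolate. The paper verifies Proposition~\ref{pr:acceptanceconsistent-equivalences}(4): given $g\in B(E)$ with $\rho_{\mu_1}(g)\le 0$ and $\rho_{\mu_2}(f(x,\cdot))\le g(x)$ for $\mu_1$-a.e.\ $x$, it takes base measure $\nu=\mu_1\circ g^{-1}$ and kernel $K_x=\mu_2\circ[f(x,\cdot)-g(x)]^{-1}$, then invokes shift-convexity. But shift-convexity demands a kernel indexed by the \emph{real} base variable, whereas $K$ is naturally indexed by $x\in E$; when $g$ is not injective, pushing $K$ down to $\R$ via disintegration against $g(X)$ yields mixtures of acceptable laws, which need not lie in $\A$ absent convexity. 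The paper's write-up elides this (it writes ``for $x\in\R$'' while still evaluating $f(x,\cdot)$ and $g(x)$). Your remedy---perturb $g$ to a nearby injective Borel map $\phi$, so that by Lusin--Souslin $\sigma(\phi(X))=\sigma(X)$ and the real-indexed and $E$-indexed kernels coincide---closes this gap cleanly and lets you check the more general property~(2) directly. The cost is the limit $\epsilon\downarrow 0$; the gain is a fully rigorous argument with no hidden convexity appeal. One minor remark: your ``Borel isomorphism $\iota:E\to(0,1)$'' requires $E$ uncountable, but you only use that $\iota$ is an injective Borel map into $(0,1)$, which any Polish $E$ admits.
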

\begin{proof}
Let $\rho$ be a law-invariant risk measure with measure acceptance set $\A$. First, assume $\rho$ is acceptance consistent. Fix $\mu \in \A$, $M > 0$, and a measurable map $\R \ni x \mapsto K_x \in \A \cap \P[-M,M]$. Set $\bar{\mu} = \mu(dx)K_x(dy - x)$. Letting $\lambda$ denote Lebesgue measure on $[0,1]$, we may find (e.g. by \cite[Theorem 5.10]{kallenberg-foundations}) a measurable function $f : \R \times [0,1] \rightarrow \R$ such that, if $\hat{f}(x,y) := (x,f(x,y))$, then 
\[
\bar{\mu} = (\mu \times \lambda) \circ \hat{f}^{-1}.
\]
Now set $g(x) = x$. Since $\lambda \circ f(x,\cdot)^{-1} = K_x(\cdot - x)$, we have
\[
\lambda \circ [f(x,\cdot)-g(x)]^{-1} = K_x \in \A \cap \P[-M,M],
\]
for each $x$. Thus
\[
\mu\left\{x \in \R : \rho_\lambda(f(x,\cdot)) \le g(x)\right\} = \mu\left\{x \in \R : \lambda \circ [f(x,\cdot)-g(x)]^{-1} \in \A\right\} = 1.
\]
Note that since $\mu$ has compact support and $K_x \in \P[-M,M]$ for all $x$, it follows that $f$ is essentially bounded with respect to $\mu \times \lambda$. 
Since also $\mu \circ g^{-1} = \mu \in \A$, i.e., $\rho_\mu(g) \le 0$, acceptance consistency (Proposition \ref{pr:acceptanceconsistent-equivalences}(2)) implies that $\rho_{\mu \times \lambda}(f) \le 0$. In other words, $(\mu \times \lambda) \circ f^{-1} \in \A$. But this completes the proof of shift-convexity, since
\[
(\mu \times \lambda) \circ f^{-1} = \int_{\R}\mu(dx)K_x(\cdot - x).
\]

Conversely, assume now that $\rho$ is shift-convex. Let $E$ and $F$ be Polish spaces, and fix $\mu_1 \in \P(E)$, $\mu_2 \in \P(F)$, $f \in B(E \times F)$, and $g \in B(E)$ with $\rho_{\mu_1}(g) \le 0$. Suppose also that 
\[
\mu_1\left\{x \in E : \rho_{\mu_2}(f(x,\cdot)) \le g(x)\right\} = 1,
\]
or equivalently that 
\[
\mu_1\left\{x \in E : \mu_2 \circ [f(x,\cdot)-g(x)]^{-1} \in \A\right\} = 1,
\]
In light of Proposition \ref{pr:acceptanceconsistent-equivalences}(4), we must check that $\rho_{\mu_1 \times \mu_2}(f) \le 0$, or equivalently that $(\mu_1 \times \mu_2) \circ f^{-1} \in \A$. Set $\nu := \mu_1 \circ g^{-1}$, and note that $\nu \in \A$. For $x \in \R$, define also
\[
K_x := \begin{cases}
\mu_2 \circ [f(x,\cdot)-g(x)]^{-1} &\text{if } \mu_2 \circ [f(x,\cdot)-g(x)]^{-1} \in \A \\
\delta_0 &\text{otherwise}.
\end{cases}
\]
(The choice of $\delta_0$ is arbitrary, and any other element of $\A$ would do.) Then $K_x \in \A$ for each $x$, and shift-convexity implies
\[
\int_\R \nu(dx)K_x(\cdot - x) \in \A.
\]
But in fact $\int_\R \nu(dx)K_x(\cdot - x)$ is equal to $(\mu_1 \times \mu_2) \circ f^{-1}$, since for $\phi \in B(\R)$ we have
\begin{align*}
\int_\R \nu(dx)\int_\R K_x(dy - x)\phi(y) &= \int_\R \nu(dx)\int_\R K_x(dy)\phi(x+y) \\
	&= \int_E\mu_1(dx)\int_F\mu_2(dy)\phi\left(g(x) + f(x,y) - g(x)\right) \\
	&= \int_{E \times F} \phi \circ f\,d(\mu_1 \times \mu_2).
\end{align*}
\end{proof}

Finally, before we turn to the proof of Theorem \ref{th:mainequivalence}, we compute a penalty function for the risk measure $X \mapsto \rho(\rho(X | \G))$, under no time consistency assumptions. This is related to some results in \cite{acciaio-penner-dynamic} and \cite{cheridito2011composition}, to name but a few, but different in the sense that our conditional penalty functions are defined as pointwise suprema as opposed to essential suprema; see Section \ref{se:essentialsuprema} for a discussion of this point.

\begin{proposition} \label{pr:keyidentity}
Let $\rho$ be a law invariant risk measure with induced divergence $\alpha$, which we assume is simplified. Let $E$ and $F$ be Polish spaces, and let $\bar{\mu} = \mu(dx)K^\mu_x(dy) \in \P(E \times F)$. Let $f \in B(E \times F)$, and let $X$ denote the identity map on $E$. Then
\[
\rho_{\mu}\left(\rho_{K^\mu_x}(f(x,\cdot))|_{x = X}\right) = \sup_{\nu(dx)K^\nu_x(dy) \in \P(E \times F)}\left\{\int_E\int_Ff(x,y)K^\nu_x(dy)\nu(dx) - \int_F\alpha(K^\nu_x | K^\mu_x) \nu(dx) - \alpha(\nu | \mu)\right\}.
\]
\end{proposition}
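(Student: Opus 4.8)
The plan is to recognize the right-hand side as a decoupled, two-stage optimization and to reduce the entire identity to a single interchange of integration and supremum. I would define the jointly indexed function
\[
\Psi(x,\eta) := \int_F f(x,y)\,\eta(dy) - \alpha(\eta \,|\, K^\mu_x), \qquad (x,\eta) \in E \times \P(F),
\]
so that by the dual representation \eqref{def:inducedriskmeasurefamily} of $\rho_{K^\mu_x}$ we have $g(x) := \rho_{K^\mu_x}(f(x,\cdot)) = \sup_{\eta \in \P(F)}\Psi(x,\eta)$, and monotonicity together with cash additivity of $\rho_{K^\mu_x}$ force $|g| \le \|f\|_\infty$, so $g$ is bounded. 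Writing $\rho_\mu(g) = \sup_{\nu \in \P(E)}\left(\int_E g\,d\nu - \alpha(\nu|\mu)\right)$ and noting that $\alpha(\nu|\mu)$ does not involve the kernel $K^\nu$, the asserted identity follows once I establish, for each fixed $\nu \in \P(E)$, the interchange
\[
\int_E \sup_{\eta \in \P(F)}\Psi(x,\eta)\,\nu(dx) = \sup_{K^\nu}\int_E \Psi(x,K^\nu_x)\,\nu(dx),
\]
the supremum on the right running over measurable kernels $K^\nu$ from $E$ to $F$; taking $\sup_\nu$ of both sides after subtracting $\alpha(\nu|\mu)$ then yields the claim.

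The inequality $\ge$ in the interchange is immediate: every kernel $K^\nu$ satisfies $\Psi(x,K^\nu_x) \le \sup_\eta \Psi(x,\eta) = g(x)$ pointwise, so integrating against $\nu$ and taking the supremum over $K^\nu$ gives the bound. Assembled over $\nu$, this is exactly the easy direction (right-hand side $\le$ left-hand side) of the Proposition, and it uses nothing beyond the definition of $\rho_{K^\mu_x}$ as a supremum.

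The reverse inequality is the substance of the proof, and it is where the hypothesis that $\alpha$ is simplified enters. First I must verify that $\Psi$ is jointly measurable on $E \times \P(F)$: the map $(x,\eta) \mapsto \int_F f\,d\eta$ is jointly measurable by a routine monotone-class argument starting from product functions $a(x)b(y)$, while $(x,\eta) \mapsto \alpha(\eta | K^\mu_x)$ is the composition of the measurable kernel $x \mapsto K^\mu_x$ with the jointly measurable map $(\eta,m) \mapsto \alpha(\eta|m)$ furnished by Lemma \ref{le:jointlymeasurable} — this last step is precisely where simplifiedness is indispensable, and it also shows $g$ itself is (universally) measurable as a projection. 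Granting this, I fix $\epsilon > 0$ and consider the set $\{(x,\eta) : \Psi(x,\eta) > g(x) - \epsilon\}$, which is measurable with nonempty $x$-sections; the Jankov--von Neumann selection theorem (as with the universally measurable selections in the proof of Theorem \ref{th:divergence-characterization}) produces a universally measurable kernel $x \mapsto K^\nu_x$ with $\Psi(x,K^\nu_x) > g(x) - \epsilon$ for all $x$. Replacing it by a $\nu$-a.e. equal Borel kernel and integrating gives $\int_E \Psi(x,K^\nu_x)\,\nu(dx) \ge \int_E g\,d\nu - \epsilon$, and letting $\epsilon \downarrow 0$ delivers the interchange.

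The main obstacle is thus this measurable-selection step, together with the verification that $g$ and $\Psi$ are measurable. The analytic and universally-measurable subtleties here are exactly the \emph{delicate measurable selection argument} and the essential-supremum-versus-pointwise gap flagged in the introduction, and they are resolved only because the simplified hypothesis secures the joint measurability of $\alpha$ through Lemma \ref{le:jointlymeasurable}; without it the inner optimization cannot be realized by a single measurable kernel and the identity would have to be stated with essential suprema instead.
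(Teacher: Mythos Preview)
Your proposal is correct and follows essentially the same route as the paper's proof: both unfold the dual representations of $\rho_\mu$ and $\rho_{K^\mu_x}$ and then reduce to interchanging $\int_E\nu(dx)\sup_{\eta}(\cdots)$ with $\sup_{K}\int_E\nu(dx)(\cdots)$ via a measurable selection argument. The paper is terser---it simply cites \cite[Proposition 7.50]{bertsekasshreve} for this interchange---whereas you spell out the joint measurability of $\Psi$ and correctly trace the role of the simplified hypothesis to Lemma~\ref{le:jointlymeasurable}; the underlying selection principle (Jankov--von Neumann versus the Bertsekas--Shreve formulation) is the same.
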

\begin{proof}
We first compute
\begin{align*}
\rho_{\mu}&\left(\rho_{K^\mu_x}(f(x,\cdot))|_{x = X}\right) \\
	&= \sup_{\nu \in \P(E)}\left\{\int_E\rho_{K^\mu_x}(f(x,\cdot))\nu(dx) - \alpha(\nu | \mu)\right\} \\
	&= \sup_{\nu \in \P(E)}\left\{\int_E\sup_{\eta \in \P(F)}\left(\int_Ff(x,y)\eta(dy) - \alpha(\eta | K^\mu_x) \right)\nu(dx) - \alpha(\nu | \mu)\right\}.
\end{align*}
Complete the proof by using a well known measurable selection argument \cite[Proposition 7.50]{bertsekasshreve} to deduce
\begin{align*}
\int_E&\sup_{\eta \in \P(F)}\left(\int_Ff(x,y)\eta(dy) - \alpha(\eta | K^\mu_x) \right)\nu(dx) \\
	&= \sup_{K}\left(\int_E\int_Ff(x,y)K_x(dy)\nu(dx) - \int_F\alpha(K_x | K^\mu_x) \nu(dx,)\right),
\end{align*}
where the supremum is over all kernels from $E$ to $F$.
\end{proof}

\subsection{Proof of Theorem \ref{th:mainequivalence}}
We saw in Proposition \ref{pr:timeconsistent-shiftconvex} that acceptance consistency and shift-convexity are equivalent. We will prove that acceptance consistency implies superadditivity and that partial superadditivity implies acceptance consistency. This is enough, since clearly superadditivity implies partial superadditivity. Fix throughout two Polish spaces $E$ and $F$ and a function $f \in B(E \times F)$.

First assume $\rho$ is partially superadditive. Fix $\bar{\mu} = \mu_1 \times \mu_2 \in \P(E \times F)$. Use Proposition \ref{pr:keyidentity} followed by partial superadditivity to get
\begin{align*}
\rho_{\mu}&\left(\rho_{\mu_2}(f(x,Y))|_{x = X}\right) \\
	&= \sup_{\nu(dx)K^\nu_x(dy) \in \P(E \times F)}\left\{\int_E\int_Ff(x,y)K^\nu_x(dy)\nu(dx) - \int_F\alpha(K^\nu_x | \mu_2) \nu(dx) - \alpha(\nu | \mu_1)\right\} \\
	&\ge \sup_{\nu(dx)K^\nu_x(dy) \in \P(E \times F)}\left\{\int_E\int_Ff(x,y)K^\nu_x(dy)\nu(dx) -  \alpha(\nu(dx)K^\nu_x(dy) | \bar{\mu})\right\} \\
	&= \rho_{\bar{\mu}}(f).
\end{align*}
Conclude from Proposition \ref{pr:acceptanceconsistent-equivalences}(5) that $\rho$ is acceptance consistent.

Now suppose $\rho$ is acceptance consistent. Let $\bar{\mu} = \mu(dx)K^\mu_x(dy) \in \P(E \times F)$. Use Proposition \ref{pr:keyidentity} followed by Proposition \ref{pr:acceptanceconsistent-equivalences}(3) to get
\begin{align}
\sup_{\nu(dx)K^\nu_x(dy) \in \P(E \times F)}&\left\{\int_E\int_Ff(x,y)K^\nu_x(dy)\nu(dx) - \int_F\alpha(K^\nu_x | K^\mu_x) \nu(dx) - \alpha(\nu | \mu)\right\} \nonumber \\
	&= \rho_{\mu}\left(\rho_{K^\mu_x}(f(x,\cdot))|_{x = X}\right) \label{pf:mainthm1} \\
	&\ge \rho_{\bar{\mu}}(f) \nonumber
\end{align}
On the other hand, according to Lemma \ref{le:integralconvexity} proven below, and using the definition of $\alpha$, we get
\begin{align*}
\int\nu(dx)\alpha(K^\nu_x|K^\mu_x) + \alpha(\nu | \mu) &= \sup_{f \in B(E \times F), g \in B(E)}\left\{\int f\,d\bar{\nu} - \int\nu(dx)\rho_{K^\mu_x}(f(x,\cdot)) + \int g\,d\nu - \rho_\mu(g)\right\} \\
	&= \sup_{f \in B(E \times F), g \in B(E)}\left\{\int_{E \times F} (g(x) + f(x,y))\,\bar{\nu}(dx,dy) - \rho_\mu(g + \rho_{K^\mu_x}(f(x,\cdot)))\right\} \\
	&= \sup_{f \in B(E \times F), g \in B(E)}\left\{\int f\,d\bar{\nu} - \rho_\mu(\rho_{K^\mu_x}(f(x,\cdot))|_{x = X})\right\}
\end{align*}
Indeed, in the second line we replaced $g(x)$ by $g(x) + \rho_{K^\mu_x}(f(x,\cdot))$, and in the final step we replaced $f$ by $f+g$. This shows that the function of $\nu(dx)K^\nu_x(dy)\in\P(E \times F)$ given by
\begin{align}
\int_F\alpha(K^\nu_x | K^\mu_x) \nu(dx) + \alpha(\nu | \mu) \label{pf:mainthm2}
\end{align}
is precisely the minimal penalty function of the risk measure given by \eqref{pf:mainthm1}.
Since $\bar{\nu} \mapsto \alpha(\bar{\nu}|\bar{\mu})$ is the minimal penalty function of $\rho_{\bar{\mu}}$ (see Theorem \ref{th:follmerschied}), it follows from the order-reversing property of convex conjugation that $\alpha(\cdot|\bar{\mu})$ dominates the minimal penalty function of the risk measure given by \eqref{pf:mainthm1}. That is,
\[
\alpha\left(\left. \nu(dx)K^\nu_x(dy) \right| \bar{\mu}\right) \ge \int_F\alpha(K^\nu_x | K^\mu_x) \nu(dx) + \alpha(\nu | \mu),
\]
for every $\nu(dx)K^\nu_x(dy) \in \P(E \times F)$. \hfill \qedsymbol

\begin{lemma}[Nearly Lemma 4 of \cite{acciaio-penner-dynamic}] \label{le:integralconvexity}
For any $\bar{\nu} = \nu(dx)K^\nu_x(dy) \in \P(E \times F)$, we have 
\[
\int\nu(dx)\alpha(K^\nu_x|K^\mu_x) = \sup_{f \in B(E \times F)}\left\{\int f\,d\bar{\nu} - \int\nu(dx)\rho_{K^\mu_x}(f(x,\cdot))\right\}
\]
\end{lemma}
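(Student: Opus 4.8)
Write $L:=\int\nu(dx)\,\alpha(K^\nu_x|K^\mu_x)$ for the left-hand side and $R$ for the supremum on the right; the plan is to prove $R\le L$ and $L\le R$ separately. The inequality $R\le L$ is the easy, pointwise direction. For any fixed $f\in B(E\times F)$ and every $x$, the slice $f(x,\cdot)$ lies in $B(F)$, so by the very definition of $\alpha(K^\nu_x|K^\mu_x)$ as a supremum over $B(F)$ we have $\int_F f(x,\cdot)\,dK^\nu_x-\rho_{K^\mu_x}(f(x,\cdot))\le\alpha(K^\nu_x|K^\mu_x)$. Disintegrating $\int f\,d\bar\nu=\int_E\nu(dx)\int_F f(x,\cdot)\,dK^\nu_x$ and integrating this pointwise bound against $\nu$ gives $\int f\,d\bar\nu-\int\nu(dx)\rho_{K^\mu_x}(f(x,\cdot))\le L$, and taking the supremum over $f$ yields $R\le L$. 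The only delicate point here is measurability of the two integrands: writing $\rho_{K^\mu_x}(f(x,\cdot))=\rho_{K^\mu_x\circ f(x,\cdot)^{-1}}(id)$ exhibits it as a composition of the measurable map $x\mapsto K^\mu_x\circ f(x,\cdot)^{-1}$ with the measurable map of Lemma~\ref{le:rhofamilyregularity}, while $x\mapsto\alpha(K^\nu_x|K^\mu_x)$ is measurable because the (simplified) divergence is jointly measurable by Lemma~\ref{le:jointlymeasurable}.

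The reverse inequality $L\le R$ is where the simplified hypothesis is essential. First I would reduce to $F=[0,1]$: fixing a Borel isomorphism $T:F\rightarrow[0,1]$, Proposition~\ref{pr:informationinequality} gives $\alpha(K^\nu_x|K^\mu_x)=\alpha(K^\nu_x\circ T^{-1}|K^\mu_x\circ T^{-1})$ for every $x$, while the substitution $f(x,\cdot)\mapsto f(x,T^{-1}(\cdot))$ leaves both $\int f\,d\bar\nu$ and each $\rho_{K^\mu_x}(f(x,\cdot))$ unchanged, thanks to the consistency relation $\rho_{m\circ T^{-1}}(g)=\rho_m(g\circ T)$; hence $L$ and $R$ are both unaffected and we may assume $F=[0,1]$. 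Since $\alpha$ is simplified,
\[
\alpha(K^\nu_x|K^\mu_x)=\sup_{h\in C([0,1])}\left(\int_{[0,1]}h\,dK^\nu_x-\rho_{K^\mu_x}(h)\right)
\]
for each $x$, and because each $\rho_{K^\mu_x}$ is $1$-Lipschitz for the supremum norm and $C([0,1])$ is separable, this supremum may be taken over a single countable dense family $\{h_j\}\subset C([0,1])$.

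The core of the argument is then a measurable selection that produces a \emph{single bounded} competitor $f$. Setting $G_j(x):=\int h_j\,dK^\nu_x-\rho_{K^\mu_x}(h_j)$ (each measurable in $x$ as above), I would fix a truncation level $N$, put $\alpha_N(x):=\max_{j\le N}G_j(x)$, and let $j_N(x):=\min\{j\le N:G_j(x)=\alpha_N(x)\}$, which is measurable since it is determined by finitely many measurable comparisons. Then $f_N(x,y):=h_{j_N(x)}(y)=\sum_{j\le N}1_{\{j_N(x)=j\}}h_j(y)$ is jointly measurable and bounded by $\max_{j\le N}\|h_j\|_\infty<\infty$, so $f_N\in B(E\times[0,1])$, and by construction
\[
\int f_N\,d\bar\nu-\int\nu(dx)\rho_{K^\mu_x}(f_N(x,\cdot))=\int\alpha_N\,d\nu\le R.
\]
Finally $\alpha_N\uparrow\alpha(K^\nu_\cdot|K^\mu_\cdot)$ pointwise, and $\alpha_N\ge G_1\ge\inf h_1-\sup h_1$ is bounded below by a fixed constant, so monotone convergence gives $\int\alpha_N\,d\nu\rightarrow L$, whence $L\le R$.

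The main obstacle, and exactly the reason the simplified hypothesis cannot be dispensed with here, is the uniform boundedness demanded of the competitor $f\in B(E\times F)$. For a general divergence the near-optimal test functions realizing $\alpha(K^\nu_x|K^\mu_x)$ may be unbounded and may blow up as $x$ varies, with no evident way to glue them into one bounded function. Passing to $[0,1]$ and to continuous test functions furnishes an honestly countable family $\{h_j\}$, each truncation $\{h_1,\dots,h_N\}$ of which is uniformly bounded, over which the measurable selection can be carried out; the monotone convergence step then recovers the full value of $L$ in the limit $N\rightarrow\infty$.
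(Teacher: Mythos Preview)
Your proof is correct and takes a genuinely cleaner route than the paper's. Both proofs share the same skeleton---the easy direction $R\le L$ by pointwise comparison, the reduction to $F=[0,1]$ via Borel isomorphism, and the crucial use of the simplified hypothesis to replace $B(F)$ by $C([0,1])$---but they diverge at the measurable selection step. The paper invokes the full measurable selection theorem (Proposition~7.50 of Bertsekas--Shreve) to obtain a universally measurable $\epsilon$-optimizer $x\mapsto g_x\in C([0,1])$, then uses Lusin's theorem to truncate to a compact set on which this selection is continuous (hence bounded), and finally splits into three cases according to whether $L<\infty$, whether $\nu\{\alpha(K^\nu_\cdot|K^\mu_\cdot)=\infty\}>0$, etc. Your argument instead exploits the separability of $C([0,1])$ and the $1$-Lipschitz continuity of $\rho_\mu$ to reduce to a single countable family $\{h_j\}$, then for each finite $N$ takes the elementary selection $j_N(x)=\min\{j\le N:G_j(x)=\max_{i\le N}G_i(x)\}$, which is automatically measurable and yields a uniformly bounded competitor $f_N$; a single monotone convergence step (with the uniform lower bound $\alpha_N\ge G_1\ge\inf h_1-\sup h_1$) then replaces the paper's entire case analysis. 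Your approach buys simplicity and avoids both the universal-measurability machinery and Lusin's theorem; the paper's approach is more in the spirit of general selection arguments and would adapt more readily if one wanted an $\epsilon$-optimal selection with additional regularity in $x$.
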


\subsection{Essential suprema} \label{se:essentialsuprema}
Let us briefly discuss how to connect our results with a more common dual characterization of acceptance consistency in terms of penalty functions, as can be found in \cite{acciaio-penner-dynamic}. Assume throughout that our divergence $\alpha$ is simplified.
Let $\bar{\mu} = \mu(dx)K^\mu_x(dy)\in \P(E \times F)$ for Polish spaces $E$ and $F$. Let $(X,Y)$ denote the identity map (i.e., coordinate maps) on $E \times F$, and define the filtration $(\F_0,\F_1,\F_2)$ on $E \times F$ by letting $\F_0$ be the trivial $\sigma$-field, letting $\F_1 = \sigma(X)$, and letting $\F_2$ be the Borel $\sigma$-field.
Define a dynamic risk measure $(\rho_0,\rho_1)$ on $E \times F$ by 
\begin{align*}
[\rho_1(f)](x,y) &= \rho_{K^\mu_x}(f(x,\cdot)), \text{ for } (x,y) \in E \times F \\
\rho_0(f) &= \rho_{\bar{\mu}}(f).
\end{align*}
That is, $\rho_1$ maps $\F_2$-measurable random variables to $\F_1$-measurable random variables. Alternatively, we could see $\rho_1$ as mapping from $L^\infty(E \times F,\bar{\mu})$ to $L^\infty(E,\mu)$.
In this notation, acceptance consistency simply means $\rho_0(f) \le \rho_0(\rho_1(f))$ for all $f \in B(E \times F)$. 
According to Theorem 27 of \cite{acciaio-penner-dynamic}, acceptance consistency is equivalent to the inequality
\[
\alpha_0(\bar{\nu}) \ge \alpha_{0,1}(\bar{\nu}) + \E^{\bar{\nu}}\left[\alpha_{1}(\bar{\nu}) \right]
\]
holding for every $\bar{\nu} = \nu(dx)K^\nu_x(dy) \in \P_{\bar{\mu}}(E \times F)$, where $\alpha_0$, $\alpha_1$, and $\alpha_{0,1}$ are defined by 
\begin{align*}
\alpha_0(\bar{\nu}) &:= \sup\left\{\E^{\bar{\nu}}[f] : f \in L^\infty(E \times F,
\F_2,\bar{\mu}), \ \rho_0(f) \le 0\right\} = \alpha(\bar{\nu} | \bar{\mu}), \\
\alpha_1(\bar{\nu}) &:= \esssup\left\{\E^{\bar{\nu}}[f | \F_1] : f \in L^\infty(E \times F,\F_2,\bar{\mu}), \ \rho_1(f) \le 0 \ a.s.\right\} \\
	&= \esssup\left\{\int_F f(X,y)\,K^\nu_X(dy) : f \in L^\infty(E \times F,\F_2,\bar{\mu}), \ \mu\{ x\in E :\rho_{K^\mu_x}(f(x,\cdot)) \le 0\} = 1\right\} \\
\alpha_{0,1}(\bar{\nu}) &:= \sup\left\{\E^{\bar{\nu}}[f] : f \in L^\infty(E \times F,
\F_1,\bar{\mu}), \ \rho_0(f) \le 0\right\} \\
	&= \sup\left\{\int_E f\,d\nu : f \in L^\infty(E,
\mu), \ \rho_{\mu}(f) \le 0\right\} = \alpha(\nu | \mu).
\end{align*}
Here $\E^{\bar{\nu}}$ denotes integration with respect to $\bar{\nu}$. In other words, acceptance consistency is equivalent to
\[
\alpha(\bar{\nu} | \bar{\mu}) \ge \alpha(\nu | \mu) + \E^{\bar{\nu}}\left[\alpha_{1}(\bar{\nu}) \right]
\]
holding for every $\bar{\nu} = \nu(dx)K^\nu_x(dy) \in \P_{\bar{\mu}}(E \times F)$. This differs from our definition of superadditivity only in the term $\E^{\bar{\nu}}[\alpha_1(\bar{\nu})]$. According to Lemma 4 of \cite{acciaio-penner-dynamic}, 
\begin{align*}
\E^{\bar{\nu}}\left[\alpha_{1}(\bar{\nu}) \right] &= \sup\left\{\int_{E \times F}f\,d\bar{\nu} : f \in L^\infty(E \times F,\F_2,\bar{\mu}), \ \rho_1(f) \le 0 \ a.s.\right\}.
\end{align*}
Note that $\rho_1(f) \le 0$ a.s. if and only if $\mu\{ x\in E :\rho_{K^\mu_x}(f(x,\cdot)) \le 0\} = 1$. Moreover, for any $f \in B(E \times F)$, if $g(x,y) := f(x,y) - \rho_{K^\mu_x}(f(x,\cdot))$ then $\rho_{K^\mu_x}(g(x,\cdot)) \le 0$ for all $x$. Thus
\[
\E^{\bar{\nu}}\left[\alpha_{1}(\bar{\nu}) \right] = \sup_{f \in B(E \times F)}\left\{\int f\,d\bar{\nu} - \int\nu(dx)\rho_{K^\mu_x}(f(x,\cdot))\right\}
\]
But according to Lemma \ref{le:integralconvexity}, this is in turn equal to
\[
\int_E\nu(dx)\alpha(K^\nu_x | K^\mu_x).
\]
In other words, Lemma \ref{le:integralconvexity} bridges our characterization of acceptance consistency with that of \cite[Theorem 27]{acciaio-penner-dynamic}, which we now see are equivalent.

\subsection{Weak time consistency}

A related notion of time consistency was studied by Weber in \cite{weber-distributioninvariant}. Namely, we say a law invariant risk measure $\rho$ is \emph{weakly acceptance consistent} if $\rho(X | \G) \le 0$ a.s. implies $\rho(X) \le 0$, for every $X \in L^\infty$ and every $\sigma$-field $\G \subset \F$. Similarly, $\rho$ is \emph{weakly rejection consistent} if $\rho(X | \G) > 0$ a.s. implies $\rho(X) > 0$.
The following result, due in large part to Weber \cite{weber-distributioninvariant}, characterizes weak time consistency in terms of measure acceptance sets as well as divergences.
Let us say that a set $\A \subset \P(\R)$ is \emph{locally measure convex} if for each $M > 0$ and each $Q \in \P(\A \cap \P[-M,M])$ the mean measure $\int_{\A \cap \P[-M,M]} Q(dm)m$ is in $\A$.

\begin{theorem} \label{th:weber}
Suppose $\alpha$ is a simplified divergence induced by a law invariant risk measure $\rho$ with acceptance set $\A$. The following are equivalent:
\begin{enumerate}
\item $\rho$ is weakly acceptance consistent.
\item $\A$ is locally measure convex.
\item For Polish spaces $E$ and $F$, and measures $\mu(dx)K^\mu_x(dy)$ and $\nu(dx)K^\nu_x(dy)$ in $\P(E \times F)$, we have
\[
\alpha\left(\nu(dx)K^\nu_x(dy) \ | \ \mu(dx)K^\mu_x(dy)\right) \ge \int\nu(dx)\alpha(K^\nu_x | K^\mu_x).
\]
\item For Polish spaces $E$ and $F$, and measures $\mu_1 \times \mu_2$ and $\nu(dx)K^\nu_x(dy)$ in $\P(E \times F)$, we have
\[
\alpha\left(\nu(dx)K^\nu_x(dy) \ | \ \mu_1\times\mu_2\right) \ge \int\nu(dx)\alpha(K^\nu_x | \mu_2).
\]
\end{enumerate}
Similarly, the same equivalences hold when ``acceptance'' is changed to ``rejection'', ``sub'' is changed to ``super'', and $\A$ is changed to $\A^c$. The equivalence of (1) and (2) holds without the assumption that $\alpha$ is simplified.
\end{theorem}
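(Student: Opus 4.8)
The plan is to run the cycle of implications $(1)\Rightarrow(3)\Rightarrow(4)\Rightarrow(2)\Rightarrow(1)$, together with a separate direct proof that $(1)\Leftrightarrow(2)$ that never invokes the simplified hypothesis. The whole argument mirrors the proof of Theorem \ref{th:mainequivalence}, with one change: where that proof used the composed risk measure $f\mapsto\rho_\mu(\rho_{K^\mu_x}(f(x,\cdot))|_{x=X})$ (whose penalty, by Proposition \ref{pr:keyidentity}, carries the extra term $\alpha(\nu|\mu)$), here the relevant object is the pointwise \emph{essential supremum}
\[
R_{\bar\mu}(f):=\esssup_{x\sim\mu}\rho_{K^\mu_x}(f(x,\cdot)),\qquad f\in L^\infty(E\times F,\bar\mu),
\]
associated to $\bar\mu=\mu(dx)K^\mu_x(dy)$. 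Dropping the outer $\rho_\mu$ is exactly what removes the $\alpha(\nu|\mu)$ summand and turns superadditivity into the weaker inequalities $(3)$ and $(4)$. I would first record that $R_{\bar\mu}$ is a convex risk measure with the Fatou property (monotonicity, cash additivity and normalization are clear; the Fatou property follows from the Fatou property of each $\rho_{K^\mu_x}$ together with the elementary bound $\esssup_x\liminf_n a_n(x)\le\liminf_n\esssup_x a_n(x)$), and that by Lemma \ref{le:integralconvexity} its minimal penalty function is precisely $\bar\nu\mapsto\int\nu(dx)\alpha(K^\nu_x|K^\mu_x)$. This is the content already extracted in Section \ref{se:essentialsuprema}, and the simplified hypothesis enters only to guarantee, via Lemma \ref{le:jointlymeasurable}, that this integral is well defined.

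For $(1)\Leftrightarrow(2)$ I would argue directly at the level of distributions, using only that $(\Omega,\F,P)$ is nonatomic and standard. Assuming $(2)$, take $X\in L^\infty$ with $\rho(X|\G)\le0$ a.s.; the conditional law $K_\omega:=P(X\in\cdot\,|\,\G)(\omega)$ then lies in $\A\cap\P[-M,M]$ for a.e.\ $\omega$, its $P$-law is some $Q\in\P(\A\cap\P[-M,M])$, and $P\circ X^{-1}=\int m\,Q(dm)$, so local measure convexity gives $\rho(X)\le0$. Conversely, assuming $(1)$ and given $Q\in\P(\A\cap\P[-M,M])$, I would realize $Q$ as the law of a random measure $Z$ on $\Omega$, set $\G=\sigma(Z)$, and build $X$ with $P(X\in\cdot\,|\,\G)=Z$ (possible by nonatomicity); then $\rho(X|\G)=\tilde\rho(Z)\le0$ a.s., so $\rho(X)\le0$ and $P\circ X^{-1}=\int m\,Q(dm)\in\A$. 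Neither direction uses simplifiedness, which proves the final sentence of the theorem.

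The implication $(1)\Rightarrow(3)$ rests on the observation that weak acceptance consistency is, without any reduction, equivalent to the pointwise domination $\rho_{\bar\mu}\le R_{\bar\mu}$ for every $\bar\mu$. Indeed, given $f$ with $\rho_{K^\mu_x}(f(x,\cdot))\le0$ for $\mu$-a.e.\ $x$, realizing an $E\times F$-valued pair $(X,Y)$ of law $\bar\mu$ and taking $\G=\sigma(X)$ one has $\rho(f(X,Y)|X)=\rho_{K^\mu_x}(f(x,\cdot))|_{x=X}\le0$ a.s., whence $\rho_{\bar\mu}(f)=\rho(f(X,Y))\le0$; cash additivity of $R_{\bar\mu}$ upgrades this to $\rho_{\bar\mu}(f)\le R_{\bar\mu}(f)$ for all $f$. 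Since $\alpha(\cdot|\bar\mu)$ and $\int\nu(dx)\alpha(K^\nu_x|K^\mu_x)$ are the minimal penalty functions of $\rho_{\bar\mu}$ and $R_{\bar\mu}$ respectively, the order-reversing property of convex conjugation (Theorem \ref{th:follmerschied}) converts $\rho_{\bar\mu}\le R_{\bar\mu}$ into exactly inequality $(3)$. The implication $(3)\Rightarrow(4)$ is trivial, being the specialization $K^\mu_x\equiv\mu_2$.

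Finally, for $(4)\Rightarrow(2)$ I would run the same conjugate-duality equivalence for product references, which shows $(4)$ is equivalent to $\rho_{\mu_1\times\mu_2}(f)\le\esssup_{x\sim\mu_1}\rho_{\mu_2}(f(x,\cdot))$ for all $f$. Given $Q\in\P(\A\cap\P[-M,M])$ I would take $\mu_1=Q$ on $E=\P[-M,M]$, $\mu_2=\lambda$ Lebesgue measure on $[0,1]$, and $f(m,y):=q_m(y)$ the quantile function of $m$; then $\lambda\circ f(m,\cdot)^{-1}=m$ gives $\rho_{\mu_2}(f(m,\cdot))=\tilde\rho(m)\le0$ for $Q$-a.e.\ $m$, so the displayed inequality forces $(Q\times\lambda)\circ f^{-1}=\int m\,Q(dm)\in\A$, which is local measure convexity. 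This closes the cycle. The rejection statement follows by the symmetric argument, reversing every inequality and replacing $\A$ by $\A^c$. I expect the main obstacle to be the careful verification that $R_{\bar\mu}$ is a Fatou risk measure whose minimal penalty is the claimed integral functional, i.e.\ pinning down the essential-supremum duality of Section \ref{se:essentialsuprema}, since everything downstream is soft once the dictionary $\rho_{\bar\mu}\le R_{\bar\mu}\Leftrightarrow(3)$ is in place.
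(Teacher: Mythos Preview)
Your proposal is correct and follows essentially the same route as the paper. Both arguments hinge on Lemma~\ref{le:integralconvexity} together with the order-reversing property of convex conjugation; your introduction of the auxiliary risk measure $R_{\bar\mu}(f)=\esssup_{x\sim\mu}\rho_{K^\mu_x}(f(x,\cdot))$ is just an explicit repackaging of what the paper handles via acceptance-set inclusions (and what Section~\ref{se:essentialsuprema} calls $\rho_1$). The only substantive differences are organizational: you prove $(4)\Rightarrow(2)$ via a concrete quantile construction whereas the paper proves $(4)\Rightarrow(1)$ abstractly, and you supply a self-contained $(1)\Leftrightarrow(2)$ whereas the paper attributes this to Weber. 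Your identification of the Fatou property of $R_{\bar\mu}$ as the main technical point is apt, though note that the paper sidesteps this by working directly with the acceptance-set formulation (which needs only that both sides have the same acceptance set, not that $R_{\bar\mu}$ itself admits a dual representation).
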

\begin{proof}
The implication $(1) \Leftrightarrow (2)$ in the following was first noticed by Weber \cite{weber-distributioninvariant}, and the rest is proven along the same lines as Theorem \ref{th:mainequivalence}, but we provide a sketch: Suppose first that (1) holds. Fix Polish spaces $E$ and $F$ and measures $\mu(dx)K^\mu_x(dy)$ and $\nu(dx)K^\nu_x(dy)$ in $\P(E \times F)$. It is easy to see (similar to Proposition \ref{pr:acceptanceconsistent-equivalences}) that weak acceptance consistency is equivalent to the following: for $f \in B(E \times F)$, 
\[
\mu\{\rho_{K^\mu_x}(f(x,\cdot)) \le 0\}=1 \quad \Rightarrow \quad \rho_{\bar{\mu}}(f) \le 0.
\]
Thus, by Lemma \ref{le:integralconvexity},
\begin{align*}
\int\nu(dx)\alpha(K^\nu_x|K^\mu_x) &= \sup_{f \in B(E \times F)}\left\{\int f\,d\bar{\nu} - \int\nu(dx)\rho_{K^\mu_x}(f(x,\cdot))\right\} \\
	&= \sup\left\{\int f\,d\bar{\nu} : f \in B(E \times F), \ \mu\{\rho_{K^\mu_x}(f(x,\cdot)) \le 0\}=1\right\} \\
	&\le \sup\left\{\int f\,d\bar{\nu} : f \in B(E \times F), \ \rho_{\bar{\mu}}(f) \le 0\right\} \\
	&= \alpha(\bar{\nu} | \bar{\mu}).
\end{align*}
This proves $(1) \Rightarrow (3)$. Since clearly (3) implies (4), let us finally show that (4) implies (1). Fix Polish spaces $E$ and $F$ and $\mu_1 \times \mu_2 \in \P(E \times F)$. As in the proof of Theorem \ref{th:mainequivalence}, the inequality of (4), combined with Lemma \ref{le:integralconvexity} and the order-reversing property of convex conjugation, implies the set inclusion
\[
\left\{f \in B(E \times F) : \rho_{\mu_1 \times \mu_2}(f) \le 0\right\} \supset \left\{f \in B(E \times F) : \mu_1\{\rho_{\mu_2}(f(x,\cdot)) \le 0\}=1\right\}.
\]
Again, it is easy to see (similar to Proposition \ref{pr:acceptanceconsistent-equivalences}) this implies weak acceptance consistency.
\end{proof}

\begin{remark}
In fact, for a measure acceptance set $\A$, local measure convexity is equivalent to (ordinary) convexity. Indeed, the set $\A \cap \P[-M,M]$ is weakly closed for each $M > 0$, which can be proven easily using the Fatou property \ref{th:jouini-touzi-schachermayer} and the Skorohod representation for weak convergence. It is well known that closed convex sets are measure convex, e.g. \cite[Corollary 1.2.4]{winkler-choquet}. The same equivalence may not hold for the complement $\A^c$, which is not closed.
\end{remark}

\subsection{More on shift-convexity} \label{se:shiftconvexity}
Let us recall our first interpretation of a shift-convex acceptance set: Suppose $X$ and $Y$ are risks, and $X$ is acceptable. Suppose that $Y$ is conditionally acceptable given $X$. Then shift-convexity means that $X+Y$ is itself acceptable. Proposition \ref{pr:shiftconvex-equivalence} below shows that this interpretation can be sharpened somewhat:
Suppose $X$ and $Y$ are risks, and $X$ is acceptable. Suppose that $X$ is $\G$-measurable for some $\sigma$-field $\G$, and suppose the risk $Y$ is conditionally acceptable given $\G$. Then shift-convexity implies that the risk $X+Y$ is acceptable as well.

\begin{proposition} \label{pr:shiftconvex-equivalence}
Suppose $\A$ is the measure acceptance set of a law invariant risk measure. Then $\A$ is shift-convex if and only if it satisfies the following property:
\begin{enumerate}
\item[(S)] For each $M > 0$ and each $\gamma \in \P(\R \times (\A \cap \P[-M,M]))$ with first marginal belonging to $\A$, 
it holds that 
\[
\int\gamma(dx,dm)m(\cdot - x) \in \A.
\]
\end{enumerate}
Similarly, $\A^c$ is shift-convex if and only if it satisfies property (S).
\end{proposition}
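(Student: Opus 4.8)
The plan is to read property (S) as the precise formalization of the interpretation sketched just above the proposition, and to prove it by invoking the equivalence between shift-convexity and acceptance consistency from Proposition \ref{pr:timeconsistent-shiftconvex}. I would first dispatch the essentially definitional implication (S)$\,\Rightarrow\,$shift-convexity: given $\mu \in \A$, $M>0$, and a measurable kernel $\R \ni x \mapsto K_x \in \A \cap \P[-M,M]$, set $\gamma := \mu \circ \Phi^{-1}$ with $\Phi(x) := (x,K_x)$, a measurable map into $\R \times \P[-M,M]$. Then $\gamma$ is supported on $\R \times (\A \cap \P[-M,M])$, its first marginal is $\mu \in \A$, and a change of variables gives $\int \gamma(dx,dm)\,m(\cdot - x) = \int_\R \mu(dx)\,K_x(\cdot - x)$, so (S) yields exactly the membership required by shift-convexity. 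The same argument applies verbatim with $\A$ replaced by $\A^c$.

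For the converse, assume $\A$ is shift-convex, equivalently (Proposition \ref{pr:timeconsistent-shiftconvex}) that $\rho$ is acceptance consistent. Fix $M>0$ and $\gamma \in \P(\R \times (\A \cap \P[-M,M]))$ with first marginal $\mu \in \A$; here $\A \cap \P[-M,M]$ is weakly closed (hence Borel), so $\gamma$ is a legitimate Borel measure. Using nonatomicity of $\Omega$, I would realize a triple $(X,\M,U)$ on $\Omega$ with $(X,\M) \sim \gamma$ and $U$ uniform on $[0,1]$ independent of $(X,\M)$, and set $Z := G(\M,U)$, where $G$ is a jointly measurable sampling map with $G(m,\cdot)_\# \lambda = m$ for each $m$ (e.g.\ the quantile transform). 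Then $Z \in [-M,M]$, and since $\mu$ is compactly supported, $X$ is bounded, so $X + Z \in L^\infty$. With $\G := \sigma(X,\M)$, the regular conditional law of $Z$ given $\G$ is $\M$, which lies in $\A$ a.s., so $\rho(Z \mid \G) = \tilde{\rho}(\M) \le 0$ a.s.; as $X$ is $\G$-measurable, cash-additivity gives $\rho(X+Z \mid \G) = X + \rho(Z \mid \G) \le X$ a.s. Acceptance consistency and monotonicity then produce $\rho(X+Z) \le \rho(\rho(X+Z \mid \G)) \le \rho(X) \le 0$, and since the law of $X+Z$ is exactly $\int \gamma(dx,dm)\,m(\cdot - x)$, this measure belongs to $\A$, establishing (S).

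The $\A^c$ version follows the same template with the inequalities reversed: shift-convexity of $\A^c$ is equivalent to rejection consistency by Theorem \ref{th:mainequivalence}, the hypothesis $\M \in \A^c$ a.s.\ gives $\tilde{\rho}(\M) > 0$ and hence $\rho(X+Z \mid \G) = X + \tilde{\rho}(\M) > X$ a.s., and rejection consistency with monotonicity yields $\rho(X+Z) \ge \rho(X) > 0$. I expect the main obstacle to be the realization step rather than the inequalities: one must sample from the random measure $\M$ through a jointly measurable map so that $\tilde{\rho}(\M)$ is a bona fide random variable (for which Lemma \ref{le:rhofamilyregularity} guarantees measurability of $m \mapsto \tilde{\rho}(m)$), and one must verify that the regular conditional law of $Z$ given $\G$ is identically $\M$. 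Once these measurable-selection points are in place, the remainder is the short conditioning computation that formalizes the stated interpretation.
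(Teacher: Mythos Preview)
Your argument is correct. The implication (S) $\Rightarrow$ shift-convexity is handled exactly as in the paper (your $\gamma = \mu \circ \Phi^{-1}$ is just $\mu(dx)\delta_{K_x}(dm)$). For the converse, however, you take a genuinely different route. The paper disintegrates $\gamma(dx,dm) = \mu(dx)Q_x(dm)$ and then \emph{averages} each fiber, defining $K_x := \int Q_x(dm)\,m$; to know that $K_x \in \A$ it invokes local measure convexity of $\A$, which follows because shift-convexity $\Rightarrow$ acceptance consistency $\Rightarrow$ weak acceptance consistency $\Rightarrow$ local measure convexity (Theorem \ref{th:weber}). Having $K_x \in \A$ for $\mu$-a.e.\ $x$, the paper then applies shift-convexity \emph{directly} to the kernel $x \mapsto K_x$.

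Your approach bypasses Theorem \ref{th:weber} entirely: you realize $(X,\M)$ with law $\gamma$, sample $Z$ from $\M$ via an independent uniform, and use acceptance consistency on the enlarged $\sigma$-field $\G = \sigma(X,\M)$ together with cash-additivity and monotonicity. This is cleaner conceptually and matches the informal interpretation preceding the proposition more literally. The trade-off is the measurable-selection housekeeping you flag (joint measurability of the quantile map, identification of the regular conditional law of $Z$ given $\G$ with $\M$), all of which is standard. The paper's route, by contrast, keeps everything at the level of measures on $\R$ and never needs to build auxiliary random variables, but pays for this by importing Weber's weak-consistency characterization as an intermediate step.
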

\begin{proof}
Suppose $\A$ satisfies property (S). Fix $\mu \in \A$ and a measurable map $\R \ni x \mapsto K_x \in \A \cap \P[-M,M]$. Define $\gamma \in \P(\R \times (\A \cap \P[-M,M]))$ now by
\[
\gamma(dx,dm) := \mu(dx)\delta_{K_x}(dm).
\]
Then clearly the first marginal of $\gamma$ is $\mu$, which belongs to $\A$. Moreover, $\gamma(\R \times (\A \cap \P[-M,M])) = 1$ since $K_x \in \A \cap \P[-M,M]$ for every $x$.
Thus, by property (S), the measure
\[
\int_\R\mu(dx)K_x(\cdot - x) = \int\gamma(dx,dm)m(\cdot - x)
\]
is in $\A$, which shows that $\A$ is shift-convex.

Conversely, suppose $\A$ is shift-convex. Then the corresponding law invariant risk measure $\rho$ is acceptance consistent. A fortiori, $\rho$ is weakly acceptance consistent and thus $\A$ is locally measure convex by Theorem \ref{th:weber}. Now fix $M > 0$ and $\gamma \in \P(\R \times (\A \cap \P[-M,M]))$ with first marginal $\mu$ belonging to $\A$. Disintegrate $\gamma$ to find a measurable map $\R \ni x \mapsto Q_x \in \P(\A \cap \P[-M,M])$ such that
\[
\gamma(dx,dm) = \mu(dx)Q_x(dm).
\]
For each $x$ define $K_x \in \P(\R)$ to be the mean measure of $Q_x$, i.e.
\[
K_x(\cdot) = \int Q_x(dm)m(\cdot).
\]
Since $Q_x(\A \cap \P[-M,M]) = 1$ for $\mu$-a.e. $x$, and since $\A$ is locally measure convex, it holds that $K_x \in \A$ for $\mu$-a.e. $x$. From partial shift-convexity we conclude that the measure
\begin{align*}
\int\gamma(dx,dm)m(\cdot - x) &= \int_\R\mu(dx)\int_{\A \cap \P[-M,M]}Q_x(dm)m(\cdot - x) = \int_\R\mu(dx)K_x(\cdot - x)
\end{align*}
is in $\A$, which proves property (S).
\end{proof}

\section{Further properties of divergences} \label{se:furtherproperties}

While every divergence is convex in its first argument by definition,
it is well known that relative entropy and also $f$-divergences are \emph{jointly} convex. It turns out that \emph{joint} convexity of a divergence is equivalent to \emph{concavity} of the corresponding law invariant risk measure on the level of distributions. To be clear, for a law invariant risk measure $\rho$, define the function $\tilde{\rho}$ on the set of probability measures on $\R$ with compact support by setting $\tilde{\rho}(P \circ X^{-1}) = \rho(X)$, for $X \in L^\infty$. The concavity of $\tilde{\rho}$ was studied recently by Acciaio and Svindland \cite{acciaio-svindland-concave}, who make a compelling case that concavity is much more common in spite of the convexity of $\rho$ on the level of random variables. Indeed, they show that $\rho(X) = \E X$ is the \emph{only} law invariant risk measure for which $\tilde{\rho}$ is convex. The entropic risk measure, for example, clearly has $\tilde{\rho}$ concave. Moreover, if $\rho$ is the optimized certainty equivalent corresponding to a function $\phi$, then the formula
\[
\tilde{\rho}(\mu) = \inf_{m \in \R}\left(\int\phi(m + x)\mu(dx) - m\right)
\]
shows that $\tilde{\rho}$ is concave.

\begin{proposition} \label{pr:jointconvexity}
Let $\rho$ be a law invariant risk measure with induced divergence $\alpha$. The following are equivalent:
\begin{enumerate}
\item $\alpha$ is jointly convex, in the sense that  $\alpha(\cdot|\cdot)$ is convex on $\P(E) \times \P(E)$ for each Polish space $E$.
\item For each Polish space $E$ and each $f \in B(E)$, the map $\mu \mapsto \rho_\mu(f)$ is concave.
\item $\tilde{\rho}$ is concave.
\end{enumerate}
\end{proposition}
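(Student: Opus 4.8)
The plan is to prove the cycle of implications $(2)\Rightarrow(1)\Rightarrow(3)\Rightarrow(2)$, exploiting throughout the duality between $\rho_\mu$ and $\alpha(\cdot\,|\,\mu)$ recorded in \eqref{def:alpha} together with the identity $\rho_\mu(f)=\tilde\rho(\mu\circ f^{-1})$, which follows from law invariance whenever $f\in B(E)$ (so that $\mu\circ f^{-1}$ has compact support and $\tilde\rho$ is defined). The guiding observation is that $\nu\mapsto\int_E f\,d\nu$ and $\mu\mapsto\mu\circ f^{-1}$ are both affine, so that joint convexity of $\alpha$, concavity of $\mu\mapsto\rho_\mu(f)$, and concavity of $\tilde\rho$ are really three facets of one fact, transported back and forth through the dual formulas.

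For $(2)\Rightarrow(1)$, I would fix a Polish space $E$ and observe that for each $f\in B(E)$ the function $(\nu,\mu)\mapsto\int_E f\,d\nu-\rho_\mu(f)$ is the sum of a term affine in $\nu$ and a term convex in $\mu$ (the latter because $\mu\mapsto\rho_\mu(f)$ is concave by hypothesis); such a separated sum is jointly convex on $\P(E)\times\P(E)$. Since $\alpha(\nu\,|\,\mu)=\sup_{f\in B(E)}\bigl(\int_E f\,d\nu-\rho_\mu(f)\bigr)$ is a supremum of jointly convex functions, it is jointly convex, which is $(1)$. For $(3)\Rightarrow(2)$, I would simply compose: the pushforward $\mu\mapsto\mu\circ f^{-1}$ is affine into $\P([-M,M])$ for any $M\ge\sup|f|$, so $\rho_\mu(f)=\tilde\rho(\mu\circ f^{-1})$ is the concave function $\tilde\rho$ precomposed with an affine map, hence concave in $\mu$.

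The one step requiring genuine care is $(1)\Rightarrow(3)$, because $\tilde\rho(m)=\rho_m(id)=\sup_{\nu}\bigl(\int x\,\nu(dx)-\alpha(\nu\,|\,m)\bigr)$ presents $\tilde\rho$ as a supremum of functions concave in $m$, and a supremum of concave functions need not be concave; the hypothesis of \emph{joint} convexity of $\alpha$, rather than mere convexity in the first argument, is exactly what rescues this. Here I would argue with near-optimizers: given $m_0,m_1$ supported in a common $[-M,M]$, $t\in(0,1)$, and $\epsilon>0$, choose $\nu_i$ with $\int x\,\nu_i(dx)-\alpha(\nu_i\,|\,m_i)\ge\tilde\rho(m_i)-\epsilon$ (finiteness of $\tilde\rho$ and boundedness of $\int x\,\nu_i(dx)$ force $\alpha(\nu_i\,|\,m_i)<\infty$), then set $\nu_t=(1-t)\nu_0+t\nu_1$ and $m_t=(1-t)m_0+tm_1$ and estimate $\tilde\rho(m_t)\ge\int x\,\nu_t(dx)-\alpha(\nu_t\,|\,m_t)$. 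Linearity of $\nu\mapsto\int x\,\nu(dx)$ together with joint convexity of $\alpha$ yields $\tilde\rho(m_t)\ge(1-t)\tilde\rho(m_0)+t\tilde\rho(m_1)-\epsilon$, and letting $\epsilon\downarrow0$ gives concavity of $\tilde\rho$. The crux is precisely that the two near-optimizers $\nu_0,\nu_1$ must be combined at the \emph{same} parameter $t$ that combines the reference measures $m_0,m_1$, so that the penalty term is controlled by the joint convexity of $\alpha$ rather than by convexity in a single argument.
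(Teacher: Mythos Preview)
Your proof is correct and essentially matches the paper's argument. The only difference is organizational: the paper proves the two equivalences $(1)\Leftrightarrow(2)$ and $(2)\Leftrightarrow(3)$ separately, whereas you run the cycle $(2)\Rightarrow(1)\Rightarrow(3)\Rightarrow(2)$; your $(2)\Rightarrow(1)$ and $(3)\Rightarrow(2)$ are identical to the paper's, and your near-optimizer argument for $(1)\Rightarrow(3)$ is precisely the paper's $(1)\Rightarrow(2)$ specialized to $E=[-M,M]$ and $f=id$, after which $(2)\Rightarrow(3)$ is trivial.
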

\begin{proof}
($1 \Rightarrow 2$) Let $E$ be a Polish space and $f \in B(E)$. Fix $t \in (0,1)$ and $\mu_1,\mu_2 \in \P(E)$. Then (1) implies
\begin{align*}
\rho_{t\mu_1 + (1-t)\mu_2}(f) &= \sup_{\nu \in \P(E)}\left\{\int f\,d\nu - \alpha(\nu | t\mu_1 + (1-t)\mu_2)\right\} \\
	&\ge \sup_{\nu_1,\nu_2 \in \P(E)}\left\{t\int f\,d\nu_1 + (1-t)\int f\,d\nu_2 - \alpha(t\nu_1 + (1-t)\nu_2 | t\mu_1 + (1-t)\mu_2)\right\} \\
	&\ge \sup_{\nu_1,\nu_2 \in \P(E)}\left\{t\int f\,d\nu_1 + (1-t)\int f\,d\nu_2 - t\alpha(\nu_1 | \mu_1) + (1-t)\alpha(\nu_2 | \mu_2)\right\} \\
	&= t\rho_{\mu_1}(f) + (1-t)\rho_{\mu_2}(f).
\end{align*}

($2 \Rightarrow 1$) On the other hand, if $\nu_1,\nu_2 \in \P(E)$, then (2) implies
\begin{align*}
\alpha(t\nu_1 + (1-t)\nu_2 | t\mu_1 + (1-t)\mu_2) &= \sup_{f \in B(E)}\left\{t\int f\,d\nu_1 + (1-t)\int f\,d\nu_2 - \rho_{t\mu_1 + (1-t)\mu_2}(f)\right\} \\
	&\le \sup_{f \in B(E)}\left\{t\int f\,d\nu_1 + (1-t)\int f\,d\nu_2 - t\rho_{\mu_1}(f) + (1-t)\rho_{\mu_2}(f)\right\} \\
	&\le t\alpha(\nu_1 | \mu_1) + (1-t)\alpha(\nu_2 | \mu_2).
\end{align*}

($3 \Rightarrow 2$) This is immediate from the identity
\[
\rho_{t\mu_1 + (1-t)\mu_2}(f) = \tilde{\rho}\left(t\mu_1\circ f^{-1} + (1-t)\mu_2\circ f^{-1}\right).
\]

($2 \Rightarrow 3$) This is almost immediate from the above identity. Assume (2). Let $m_1,m_2 \in \P(\R)$ have compact support, and let $t \in (0,1)$. Then, letting $id$ denote the identity map on $\R$,
\[
\tilde{\rho}(tm_1 + (1-t)m_2) = \rho_{tm_1 + (1-t)m_2}(id) \ge t\rho_{m_1}(id) + (1-t)\rho_{m_2}(id) = t\tilde{\rho}(m_1) + (1-t)\tilde{\rho}(m_2).
\]
\end{proof}

Divergences are actually uniquely determined by their values for \emph{finite} spaces $E$, as is formalized in the following proposition. Building on the characterization of relative entropy in Corollary \ref{co:entropic} below, we could derive an even simpler characterization akin to those surveyed by Csisz\'ar \cite{csiszar2008axiomatic}, but this would lead us too far astray. 

\begin{proposition}
Suppose $\alpha$ is a simplified divergence.
For any Polish space $E$ and any $\mu,\nu \in \P(E)$, we have
\[
\alpha(\nu | \mu) = \sup\left\{\alpha(\nu \circ T^{-1}|\mu \circ T^{-1}) : T : E \rightarrow F \text{ measurable, } F \text{ finite}\right\}.
\]
\end{proposition}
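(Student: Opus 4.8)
The inequality ``$\ge$'' is built into the statement: Proposition~\ref{pr:informationinequality} (the data processing inequality) gives $\alpha(\nu \circ T^{-1} | \mu \circ T^{-1}) \le \alpha(\nu | \mu)$ for every measurable $T$, so the right-hand supremum is automatically at most $\alpha(\nu|\mu)$. The whole content of the proposition is the reverse inequality, and the plan is to extract it from the dual representation \eqref{def:alpha} by showing that the supremum defining $\alpha(\nu|\mu)$ may be restricted to \emph{finitely-valued} $f \in B(E)$, each of which factors through a finite space.

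\textbf{Two driving identities.} First I would record the elementary facts that do all the work. Since each $\rho_\mu$ is monotone and cash additive, it is $1$-Lipschitz for the supremum norm, i.e. $|\rho_\mu(f) - \rho_\mu(h)| \le \|f-h\|_\infty$ (as already used in the proof of Lemma~\ref{le:jointlymeasurable}). Second, the consistency of the induced family yields $\rho_{\mu \circ T^{-1}}(g) = \rho_\mu(g \circ T)$ for measurable $T : E \to F$ and $g \in B(F)$, alongside the change-of-variables identity $\int_F g\,d(\nu \circ T^{-1}) = \int_E (g \circ T)\,d\nu$.

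\textbf{Lower bound for simple functions.} Given any finitely-valued $f \in B(E)$, write $f = g \circ T$, where $T : E \to F$ sends each point to the index of its level set (so $F$ is finite) and $g$ records the finitely many values of $f$. Plugging $g$ into the dual formula for $\alpha(\nu \circ T^{-1} | \mu \circ T^{-1})$ and applying the two identities above gives
\[
\alpha(\nu \circ T^{-1} | \mu \circ T^{-1}) \ge \int_F g\,d(\nu \circ T^{-1}) - \rho_{\mu \circ T^{-1}}(g) = \int_E f\,d\nu - \rho_\mu(f).
\]
Thus every finitely-valued $f$ contributes, through a finite $T$, a term to the right-hand supremum dominating $\int_E f\,d\nu - \rho_\mu(f)$. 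It then remains only to approximate: fixing $\epsilon > 0$, I would select $f \in B(E)$ with $\int_E f\,d\nu - \rho_\mu(f) > \alpha(\nu|\mu) - \epsilon$ (or exceeding any prescribed level when $\alpha(\nu|\mu)=\infty$), then round $f$ to a finitely-valued $\tilde f$ with $\|f - \tilde f\|_\infty < \epsilon$ by slicing its bounded range into short intervals. The $1$-Lipschitz property of $\rho_\mu$ and the trivial bound $|\int (f-\tilde f)\,d\nu| \le \|f-\tilde f\|_\infty$ force $\int_E \tilde f\,d\nu - \rho_\mu(\tilde f)$ to differ from $\int_E f\,d\nu - \rho_\mu(f)$ by at most $2\epsilon$, and the displayed bound applied to $\tilde f$ closes the argument as $\epsilon \downarrow 0$.

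\textbf{On the main obstacle.} I do not anticipate a genuine difficulty here; the proposition amounts to the observation that the dual supremum over $B(E)$ is already controlled by simple functions, which by construction factor through finite spaces. The only points requiring care are the uniform bookkeeping in the case $\alpha(\nu|\mu)=\infty$ (handled by the ``exceeding any prescribed level'' phrasing, or directly by a two-point indicator $T$ witnessing $\nu \not\ll \mu$) and the measurability of the level-set map $T$ into a finite space, which is automatic. It is worth remarking that the \emph{simplified} hypothesis is not actually invoked: the representation \eqref{def:alpha} over all of $B(E)$, rather than its continuous refinement, is what the proof exploits.
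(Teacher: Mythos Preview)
Your argument is correct, and in fact cleaner than the paper's. The paper first reduces to $E=[0,1]$ via a Borel isomorphism, invokes the simplified hypothesis to choose a \emph{continuous} near-optimizer $f \in C([0,1])$, and then uses finite-range maps $T_n : [0,1] \to [0,1]$ with $|x - T_n(x)| \le 1/n$ so that $f \circ T_n \to f$ uniformly (which requires continuity of $f$). You bypass all of this by approximating an arbitrary $f \in B(E)$ uniformly by a simple function and observing that any simple function factors through a finite space. This is more direct, avoids the Borel isomorphism detour, and---as you correctly point out---does not use the simplified hypothesis at all, so your proof actually establishes the proposition for every divergence, not just simplified ones. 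The paper's route buys nothing here; your approach is strictly more elementary and strictly more general.
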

\begin{proof}
The inequality $\ge$ follows immediately from the definition of a divergence. To prove the reverse inequality, note that it holds trivially if $E$ is finite. Generally, by Borel isomorphism (see \cite[Theorem 15.6]{kechris-settheory}), there exists a measurable function $S : E \rightarrow [0,1]$ with measurable inverse. Suppose we can prove that
\begin{align}
\alpha(\nu | \mu) = \sup\left\{\alpha(\nu \circ T^{-1}|\mu \circ T^{-1}) : T : [0,1] \rightarrow F \text{ measurable, } F \text{ finite}\right\}. \label{pf:finiteapprox1}
\end{align}
for all $\mu,\nu \in \P([0,1])$. Then, if $\mu,\nu \in \P(E)$, we use Proposition \ref{pr:informationinequality}  to conclude
\begin{align*}
\alpha(\nu | \mu) &= \alpha(\nu \circ S^{-1} | \mu \circ S^{-1}) \\
	&= \sup\left\{\alpha(\nu \circ (T \circ S)^{-1}|\mu \circ (T \circ S)^{-1}) : T : [0,1] \rightarrow F \text{ measurable, } F \text{ finite}\right\} \\
	&=\sup\left\{\alpha(\nu \circ T^{-1}|\mu \circ T^{-1}) : T : E \rightarrow F \text{ measurable, } F \text{ finite}\right\}.
\end{align*}
Indeed, this is true because every measurable map $T : E \rightarrow F$ can be written as $T' \circ S$, where $T' = T \circ S^{-1}$. Hence, we need only to prove \eqref{pf:finiteapprox1}.

Since $[0,1]$ is compact, for each $n$ we may find a measurable map $T_n : [0,1] \rightarrow [0,1]$ with finite range such that $|x - T_n(x)| \le 1/n$ for all $x \in [0,1]$.
Then $T_n$ converges uniformly to the identity. Since $\alpha$ is simplified, for a given $\epsilon > 0$ we may find a \emph{continuous} function $f$ on $[0,1]$ such that
\[
\alpha(\nu | \mu) \le \epsilon + \int f\,d\nu - \rho_\mu(f).
\]
Since $\rho_\mu$ is continuous in the supremum norm, and since $f \circ T_n \rightarrow f$ uniformly, we conclude that $\rho_\mu(f) = \lim_n\rho_\mu(f \circ T_n)$. Thus
\begin{align*}
\alpha(\nu | \mu) &\le \epsilon + \lim_{n\rightarrow\infty}\left(\int f \circ T_n\,d\nu - \rho_\mu(f \circ T_n)\right) \\
	&= \epsilon + \lim_{n\rightarrow\infty}\left(\int f\,d\nu \circ T_n^{-1} - \rho_{\mu \circ T_n^{-1}}(f)\right) \\
	&\le \epsilon + \liminf_{n\rightarrow\infty}\alpha(\nu \circ T_n^{-1} | \mu \circ T_n^{-1}).
\end{align*}
This is enough to complete the proof.
\end{proof}

Finally, let us mention a result of potential relevance in mathematical statistics, namely that sufficient statistics always attain equality in the inequality $\alpha(\nu \circ T^{-1} | \mu \circ T^{-1}) \le \alpha(\nu | \mu)$. 
The foundational paper \cite{kullback-leibler} of Kullback and Leibler proved this result for relative entropy, and Liese and Vadja \cite[Theorem 14]{liese2006divergences} treat the case of $f$-divergences.

\begin{proposition} \label{pr:sufficientstatistic}
Let $E$ be a Polish space and $\mu,\nu \in \P(E)$ with $\nu \ll \mu$. Suppose a measurable map $T : E \rightarrow F$ is sufficient for $\{\mu,\nu\}$, meaning that $d\nu/d\mu$ is $T$-measurable. Then, for any divergence $\alpha$, we have $\alpha(\nu \circ T^{-1} | \mu \circ T^{-1}) = \alpha(\nu | \mu)$. In particular, this holds if $T$ is a bijection with measurable inverse.
\end{proposition}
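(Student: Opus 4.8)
The inequality $\alpha(\nu \circ T^{-1} | \mu \circ T^{-1}) \le \alpha(\nu | \mu)$ is already furnished by Proposition \ref{pr:informationinequality} (the data processing inequality, applied with $K_x = \delta_{T(x)}$), so the entire content lies in the reverse inequality. My plan is to exhibit a single \emph{reverse} kernel $K$ from $F$ to $E$ that simultaneously recovers $\mu$ and $\nu$ from their $T$-pushforwards, namely $(\mu \circ T^{-1})K = \mu$ and $(\nu \circ T^{-1})K = \nu$, and then to invoke the data processing inequality for $K$ in the opposite direction:
\[
\alpha(\nu | \mu) = \alpha\bigl((\nu \circ T^{-1})K \bigm| (\mu \circ T^{-1})K\bigr) \le \alpha(\nu \circ T^{-1} | \mu \circ T^{-1}).
\]
The existence of such a $K$ is precisely where sufficiency will be used.

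To build $K$, I would disintegrate $\mu$ along $T$: since $E$ is Polish, hence standard Borel, there is a measurable family $F \ni t \mapsto \mu_t \in \P(E)$ with $\mu(dx) = \int_F (\mu \circ T^{-1})(dt)\,\mu_t(dx)$ and with $\mu_t$ concentrated on the fibre $T^{-1}(\{t\})$ for $(\mu \circ T^{-1})$-a.e.\ $t$. Setting $K_t := \mu_t$ (and defining $K_t$ arbitrarily, say as a fixed point mass, on the $(\mu\circ T^{-1})$-null set where the disintegration misbehaves) yields a genuine kernel with $(\mu \circ T^{-1})K = \mu$ by construction. Because $\nu \ll \mu$ forces $\nu \circ T^{-1} \ll \mu \circ T^{-1}$, this exceptional null set is also $(\nu \circ T^{-1})$-null, so the arbitrary choice made there is harmless.

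The key step is to verify $(\nu \circ T^{-1})K = \nu$, and this is exactly where sufficiency enters. By the Doob--Dynkin lemma, $T$-measurability of $d\nu/d\mu$ produces a measurable $h : F \to [0,\infty)$ with $d\nu/d\mu = h \circ T$; a change-of-variables computation then gives $d(\nu \circ T^{-1})/d(\mu \circ T^{-1}) = h$. Testing against an arbitrary $\phi \in B(E)$, I would compute
\[
\int_E \phi\,d\bigl((\nu \circ T^{-1})K\bigr) = \int_F h(t)\left(\int_E \phi\,d\mu_t\right)(\mu \circ T^{-1})(dt),
\]
while disintegrating $\int_E \phi(x) h(T(x))\,\mu(dx)$ along $T$ and using that $h(T(x)) = h(t)$ for $\mu_t$-a.e.\ $x$ (since $\mu_t$ lives on the fibre $T^{-1}(\{t\})$) produces the identical right-hand side. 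As $\int_E \phi(x)h(T(x))\mu(dx) = \int_E\phi\,d\nu$, this shows $(\nu \circ T^{-1})K = \nu$, and the displayed chain of (in)equalities then closes the argument.

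The main obstacle is simply the fibre-concentration property of the disintegration, i.e.\ that $\mu_t$ is carried by $T^{-1}(\{t\})$; this is what permits pulling the $T$-measurable density $h\circ T$ out through the conditional measure $\mu_t$, and everything else is bookkeeping. The final clause of the proposition, concerning a bijection $T$ with measurable inverse, requires no extra work: such a $T$ has $\sigma(T)$ equal to the full Borel $\sigma$-field of $E$, so $d\nu/d\mu$ is automatically $T$-measurable (and in any case this clause is already covered by the equality statement in Proposition \ref{pr:informationinequality}).
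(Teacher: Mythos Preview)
Your argument is correct and is a genuinely different route from the paper's. The paper works on the dual (risk-measure) side: it uses that sufficiency gives $\E^\nu[f\,|\,T]=\E^\mu[f\,|\,T]$, combines this with the Jensen-type inequality $\rho_\mu(f)\ge\rho_\mu(\E^\mu[f\,|\,T])$ for law invariant risk measures (Corollary 4.65 of F\"ollmer--Schied), and thereby reduces the supremum defining $\alpha(\nu|\mu)$ to one over $T$-measurable test functions, which is exactly $\alpha(\nu\circ T^{-1}|\mu\circ T^{-1})$. Your proof instead stays on the primal side and is the classical Blackwell-type ``reverse channel'' argument: disintegrate $\mu$ along $T$ to get a kernel $K$ with $(\mu\circ T^{-1})K=\mu$, use sufficiency (via $d\nu/d\mu=h\circ T$) to show that the \emph{same} kernel satisfies $(\nu\circ T^{-1})K=\nu$, and then apply the data processing inequality to $K$. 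Your approach is arguably cleaner in that it invokes only axiom (3) of a divergence and no further risk-measure machinery; the paper's approach, on the other hand, keeps the computation entirely in terms of $\rho_\mu$ and avoids the disintegration/fibre-concentration bookkeeping. Both ultimately rest on the same convex-order monotonicity, since that is what underlies Proposition \ref{pr:informationinequality} as well.
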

\begin{proof}
We use a more probabilistic notation for this proof, since we deal with conditional expectations.
By definition of a divergence, $\alpha(\nu \circ T^{-1} | \mu \circ T^{-1}) \le \alpha(\nu | \mu)$, so we must only prove the reverse inequality.
As is well known, sufficiency of $T$ implies easily that $\E^\nu[f \, | \, T] = \E^\mu[f \, | \, T]$ a.s. for each $f \in B(E)$; indeed, if $h \in B(E)$ is $T$-measurable, then
\[
\E^\nu[f h] = \E^\mu[f h d\nu/d\mu] = \E^\mu[\E^\mu[f \, | \, T] h d\nu/d\mu] = \E^\nu[\E^\mu[f \, | \, T] h].
\]
Because of Corollary 4.65 of \cite{follmer-schied-book}, we have $\rho_\mu(f) \ge \rho_\mu(\E^\mu[f \, | \, T])$. Thus
\begin{align*}
\alpha(\nu | \mu) &= \sup_{f \in B(E)}\left( \E^\nu[f] - \rho_\mu(f)\right) \\
	&\le \sup_{f \in B(E)}\left(\E^\nu[f] - \rho_\mu(\E^\mu[f \, | \, T])\right) \\
	&= \sup_{f \in B(E)}\left(\E^\nu[\E^\mu[f \, | \, T]] - \rho_\mu(\E^\mu[f \, | \, T])\right).
\end{align*}
Every $T$-measurable function on $E$ may be written as $g \circ T$ for some measurable function $g$, and thus
\[
\alpha(\nu | \mu) \le \sup_{g \in B(F)}\left(\E^\nu[g \circ T] - \rho_\mu(g \circ T)\right) = \sup_{g \in B(F)}\left(\E^{\nu \circ T^{-1}}[g] - \rho_{\mu\circ T^{-1}}(g)\right) = \alpha(\nu \circ T^{-1} | \mu \circ T^{-1}).
\]
\end{proof}

\begin{remark}
Proposition \ref{pr:sufficientstatistic} raises a natural question: Does the converse hold? That is, does the equality $\alpha(\nu \circ T^{-1} | \mu \circ T^{-1}) = \alpha(\nu | \mu)$ imply that $T$ is sufficient for $(\mu,\nu)$? This does not hold for all divergences, but it does when $\alpha$ is relative entropy, as was observed first by Kullback and Leibler \cite{kullback-leibler}. Liese and Vadja \cite{liese2006divergences} show that this converse holds for many (but not all) $f$-divergences. This characterization leads to useful tests for sufficiency, as is explained in both of these papers \cite{kullback-leibler,liese2006divergences}.
\end{remark}

\section{Examples} \label{se:examples}

Before we discuss some common law invariant risk measures, recall that our sign convention is not the usual one. Namely, $\rho$ is increasing, not decreasing. More precisely, if $\rho$ is a risk measure according to our definition, the map $ X \mapsto \rho(-X)$ is what is more often called a risk measure, as in \cite{follmer-schied-book}.

\subsection{Shortfall risk measures} \label{se:shortfall}

Shortfall risk measures, introduced by F\"ollmer and Schied \cite{follmer-schied-convex}, are of the form
\[
\rho(X) = \inf\{c \in \R : \E[\ell(X-c)] \le 1\},
\]
where $\ell$ is a \emph{loss function}, defined as follows:

\begin{definition} \label{def:lossfunction}
A \emph{loss function} is a convex and nondecreasing function $\ell : \R \rightarrow \R$ satisfying $\ell(0) = 1 < \ell(x)$ for all $x > 0$. 
\end{definition}

Of course, the induced family of risk measures is
\begin{align}
\rho_\mu(f) = \inf\{c \in \R : \int_E\ell(f(x)-c)\mu(dx) \le 1\}. \label{def:shortfall}
\end{align}
Note that by continuity of $\ell$ and monotone convergence, the infimum is always attained. In particular,
\[
\int_E\ell(f(x)-\rho_\mu(f))\mu(dx) \le 1.
\]
According to the \cite[Theorem 4.115]{follmer-schied-book} the induced divergence is
\begin{align}
\alpha(\nu | \mu) = \inf_{t > 0}\frac{1}{t}\left(1 + \int_E\ell^*\left(t\frac{d\nu}{d\mu}\right)d\mu\right), \text{ for } \nu \ll \mu, \label{def:shortfallentropy}
\end{align}
where $\ell^*(x) = \sup_{y \in \R}(xy - \ell(y))$ is the convex conjugate.
It is known that shortfall risk measures are Lebesgue continuous in the sense of Definition \ref{def:lebesgue-continuous} \cite[Proposition 4.113 and Exercise 4.2.2]{follmer-schied-book}. Hence, by Theorem \ref{th:tight-lowersemicontinuity}, $\alpha$ is jointly weakly lower semicontinuous and simplified. 
Let us now determine when a shortfall risk measure is superadditive. We say that a nonnegative function $\ell : \R \rightarrow [0,\infty]$ is \emph{log-subadditive} (resp. log-superadditive) if $\ell(x+y) \le \ell(x)\ell(y)$ (resp. $\ge$) for all $x,y \in \R$.

\begin{proposition} \label{pr:shortfalltensorization}
Let $\ell$ be a loss function and $\alpha$ the corresponding divergence defined in \eqref{def:shortfallentropy}. If $\ell$ is log-subadditive (resp. log-superadditive) then $\alpha$ is superadditive (resp. subadditive), or equivalently $\rho$ is acceptance consistent (resp. rejection consistent)
\end{proposition}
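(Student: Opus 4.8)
The plan is to reduce to acceptance consistency (resp.\ rejection consistency) via Theorem \ref{th:mainequivalence}, which applies here because shortfall risk measures are Lebesgue continuous and hence, by Theorem \ref{th:tight-lowersemicontinuity}, induce a simplified divergence. Once acceptance consistency is known, Theorem \ref{th:mainequivalence} delivers superadditivity (and the rejection/subadditivity analogue) for free. So the real work is to verify the acceptance consistency criterion of Proposition \ref{pr:acceptanceconsistent-equivalences}(2) directly from the explicit formula \eqref{def:shortfall}.

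First I would record the dictionary between the sublevel conditions on $\rho$ and integral inequalities for $\ell$. Since $\ell$ is continuous and $c \mapsto \int \ell(f-c)\,d\mu$ is nonincreasing and continuous (by dominated convergence) with the infimum in \eqref{def:shortfall} attained, one has for $f \in B(E)$ and $c \in \R$ that $\rho_\mu(f) \le c$ iff $\int \ell(f-c)\,d\mu \le 1$, and $\rho_\mu(f) \ge c$ iff $\int \ell(f-c)\,d\mu \ge 1$. The second equivalence uses continuity to see that the boundary value at $c = \rho_\mu(f)$ equals exactly $1$ whenever $\rho_\mu(f)$ is finite.

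Now suppose $\ell$ is log-subadditive, fix $\bar{\mu} = \mu(dx)K^\mu_x(dy)$, $f \in B(E \times F)$, and $g \in B(E)$ with $\rho_\mu(g) \le 0$, and assume $\rho_{K^\mu_x}(f(x,\cdot)) \le g(x)$ for $\mu$-a.e.\ $x$. By the dictionary these hypotheses read $\int_E \ell(g)\,d\mu \le 1$ and $\int_F \ell(f(x,\cdot)-g(x))\,dK^\mu_x \le 1$ for $\mu$-a.e.\ $x$. Writing $f(x,y) = (f(x,y)-g(x)) + g(x)$ and applying log-subadditivity gives
\[
\int_{E \times F} \ell(f)\,d\bar{\mu} = \int_E\Big(\int_F \ell(f(x,y))\,K^\mu_x(dy)\Big)\mu(dx) \le \int_E \ell(g(x))\Big(\int_F \ell(f(x,y)-g(x))\,K^\mu_x(dy)\Big)\mu(dx) \le \int_E \ell(g)\,d\mu \le 1,
\]
so $\rho_{\bar{\mu}}(f) \le 0$. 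This is precisely the criterion of Proposition \ref{pr:acceptanceconsistent-equivalences}(2), whence $\rho$ is acceptance consistent and, by Theorem \ref{th:mainequivalence}, $\alpha$ is superadditive.

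The log-superadditive case is the mirror image: I would invoke the rejection analogue of Proposition \ref{pr:acceptanceconsistent-equivalences}(2) (all inequalities reversed), so the hypotheses become $\int_E \ell(g)\,d\mu \ge 1$ and $\int_F \ell(f(x,\cdot)-g(x))\,dK^\mu_x \ge 1$ $\mu$-a.e., and the same chain of estimates runs in reverse to give $\int \ell(f)\,d\bar{\mu} \ge 1$, i.e.\ $\rho_{\bar{\mu}}(f) \ge 0$; Theorem \ref{th:mainequivalence} then yields subadditivity. The one genuinely delicate point is the dictionary in the reversed (``$\ge$'') direction, which rests on the continuity of $c \mapsto \int \ell(\cdot - c)\,d\mu$ forcing the boundary value to be exactly $1$; and the nonnegativity of $\ell$ is what makes the log-(sub/super)additive factorization usable under the integral sign.
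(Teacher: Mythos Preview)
Your argument is correct and rests on the same log-subadditivity estimate as the paper's; the only difference is which equivalent condition from Theorem~\ref{th:mainequivalence} you verify. The paper shows directly that the measure acceptance set $\A = \{m : \int \ell\,dm \le 1\}$ is shift-convex (item~(2)), while you check acceptance consistency via Proposition~\ref{pr:acceptanceconsistent-equivalences}(2) (item~(1)). The underlying computation is identical.

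One small point on the rejection case: the step ``$\int \ell(f)\,d\bar\mu \ge 1$, i.e.\ $\rho_{\bar\mu}(f) \ge 0$'' is not automatic. If $\ell$ has a flat piece, the map $c \mapsto \int \ell(f-c)\,d\bar\mu$ can equal $1$ on an interval containing $0$ in its interior, yielding $\rho_{\bar\mu}(f) < 0$ despite $\int \ell(f)\,d\bar\mu = 1$. (Your remark about continuity forcing the boundary value to be exactly $1$ justifies the \emph{other} direction of the dictionary, from $\rho \ge 0$ to $\int \ell \ge 1$.) This is easily repaired by perturbation: replace $f,g$ by $f+\epsilon,\,g+\epsilon$; then $\rho_\mu(g+\epsilon) \ge \epsilon > 0$ gives $\int \ell(g+\epsilon)\,d\mu > 1$ strictly, the chain yields $\int \ell(f+\epsilon)\,d\bar\mu > 1$, hence $\rho_{\bar\mu}(f) + \epsilon > 0$, and letting $\epsilon \downarrow 0$ finishes. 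The paper's shift-convexity route sidesteps this, since $\A^c = \{m : \int \ell\,dm > 1\}$ already carries strict inequalities throughout.
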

\begin{proof}
Assume $\ell$ is log-subadditive. With Theorem \ref{th:mainequivalence} in mind, we will show that the following set is shift-convex:
\[
\A := \left\{m \in \bigcup_{M > 0}\P[-M,M] : \int\ell\,dm \le 1\right\}.
\]
Fix $\mu \in \A$, $M > 0$, and a measurable map $\R \ni x \mapsto K_x \in \A \cap \P[-M,M]$. Then $\int_\R \ell\,dK_x \le 1$ for all $x$ and also $\int_\R\ell\,d\mu \le 1$. Thus
\begin{align*}
\int_\R\mu(dx)\int_\R K_x(dy-x)\ell(y) &= \int_\R\mu(dx)\int_\R K_x(dy)\ell(y+x) \\
	&\le \int_\R\mu(dx)\ell(x)\int_\R K_x(dy)\ell(y) \\
	&\le 1,
\end{align*}
and it follows that $\int_\R\mu(dx)K_x(\cdot - x)$ is in $\A$.
\end{proof}

\begin{remark} \label{re:notmanylogsubadditive}
It is difficult to construct interesting examples of log-subadditive functions, beyond the obvious case of $\ell(x)=e^{\eta x}$ for $\eta > 0$. Note that $\ell(x) = 0$ for some $x$ precludes log-subadditivity, since it implies $\ell(y)\le\ell(y-x)\ell(x)=0$ for all $y$. The function $\log \ell$ must be nondecreasing and subadditive on $\R$ and equal to zero at zero, and moreover the exponential of this function must be convex. The only other examples we found are all of the restrictive form $\ell(x)=e^{F(x)}$ for nondecreasing functions $F$ with $F'(0) > 0$ and with at most linear growth, i.e., $F(x) \le c_1x$ for all $x \ge 0$ and $F(x) \ge c_2x$ for $x \le 0$, for $c_1,c_2 > 0$.
\end{remark}

\subsection{Optimized certainty equivalent}
An optimized certainty equivalent, as introduced by Ben-Tal and Teboulle \cite{bental-teboulle-1986,bental-teboulle-2007}, is of the form
\[
\rho(X) := \inf_{m \in \R}\left(\E[\phi(m+X)] - m\right),
\]
where $\phi : \R \rightarrow \R$ is convex and nondecreasing, with $\phi^*(1) = \sup_{x \in \R}(x - \phi(x)) = 0$.
Of course, the induced family of risk measures is
\[
\rho_\mu(f) := \inf_{m \in \R}\left(\int_E\phi(m+f(x))\mu(dx) - m\right).
\]
The corresponding divergence is the $\phi^*$-divergence,
\[
\alpha(\nu|\mu) = \int\phi^*\left(\frac{d\nu}{d\mu}\right)d\mu, \text{ for } \nu \ll \mu.
\]
As we saw in the discussion preceding Proposition \ref{pr:jointconvexity}, an optimized certainty equivalent always satisfies the concavity condition of Proposition \ref{pr:jointconvexity}, and this provides an alternative proof of the well known joint convexity of $\alpha$. It is also known that $\alpha$ is jointly weakly lower semicontinuous, which we confirm using Theorem \ref{th:tight-lowersemicontinuity} before addressing time consistency and additivity properties.

\begin{proposition}
Every optimized certainty equivalent is weakly lower semicontinuous. In particular, $\alpha$ is simplified.
\end{proposition}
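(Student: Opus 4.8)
The plan is to verify that the optimized certainty equivalent $\rho$ is Lebesgue continuous in the sense of Definition \ref{def:lebesgue-continuous}, and then to invoke Theorem \ref{th:tight-lowersemicontinuity}: since conditions (4) and (1) there are equivalent, Lebesgue continuity yields that $\alpha$ is jointly weakly lower semicontinuous (Definition \ref{def:lsc}), and restricting the joint lower semicontinuity to a fixed second argument $\mu$ shows that each $\nu \mapsto \alpha(\nu|\mu)$ is weakly lower semicontinuous, which is exactly the meaning of simplified (Definition \ref{def:simplified}). The computational backbone is the identity
\[
\rho(X) = \E[X] + \inf_{m \in \R}\E[\psi(m + X)], \qquad \psi := \phi - \mathrm{id},
\]
obtained by writing $\phi(m+X) = (m+X) + \psi(m+X)$ inside the definition of $\rho$. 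Here $\psi$ is convex and nonnegative with $\inf_{\R}\psi = -\phi^*(1) = 0$, so the infimum term is nonnegative and carries all of the nontrivial dependence on $X$.

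To prove Lebesgue continuity, let $X_n \in L^\infty$ be uniformly bounded, say $|X_n| \le C$, with $X_n \to X$ a.s. The term $\E[X_n] \to \E[X]$ by bounded convergence, so it remains to show $I_n \to I$, where $I_n := \inf_m h_n(m)$, $I := \inf_m h(m)$, $h_n(m) := \E[\psi(m+X_n)]$, and $h(m) := \E[\psi(m+X)]$. Each $h_n$ is convex and nonnegative. Since $\psi$ is continuous and $m + X_n$ is confined to $[m-C, m+C]$, bounded convergence gives $h_n(m) \to h(m)$ for each fixed $m$; being convex, the $h_n$ then converge to $h$ uniformly on compact sets. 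The bound $\limsup_n I_n \le I$ is immediate: given $\epsilon > 0$, choose $m^*$ with $h(m^*) < I + \epsilon$ and use $I_n \le h_n(m^*) \to h(m^*)$.

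The essential point is the reverse bound $\liminf_n I_n \ge I$, which requires localizing the minimizers of $h_n$ uniformly in $n$; this is where the fixed shape of $\psi$ (independent of $n$) is indispensable, since pointwise convergence of convex functions alone does \emph{not} preserve infima. Because $\psi$ is convex with $\inf_\R \psi = 0$, its one-sided derivatives are strictly negative to the left of the minimizing set $\{\psi = 0\}$ and strictly positive to its right. When $\{\psi = 0\} = [a,b]$ with $a,b$ finite, differentiating under the expectation shows that $\partial^+ h_n(m) = \E[\psi'_+(m+X_n)] > 0$ for $m > b + C$ and $\partial^- h_n(m) < 0$ for $m < a - C$, uniformly in $n$; hence every $h_n$ (and $h$) attains its infimum on the fixed compact interval $K := [a - C, b + C]$, and uniform convergence on $K$ gives $I_n = \inf_K h_n \to \inf_K h = I$. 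In the degenerate cases where $\psi$ vanishes on a half-line or tends to $0$ at infinity, one checks directly that $I_n = I = 0$, so the conclusion is trivial. Either way $\rho(X_n) \to \rho(X)$, establishing Lebesgue continuity. The main obstacle is exactly this uniform control of the minimizers; once it is in place the remaining steps, including the passage from jointly weakly lower semicontinuous to simplified, are routine.
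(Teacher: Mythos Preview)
Your argument is correct, but you have worked considerably harder than necessary. The paper's proof exploits the second sentence of Definition \ref{def:lebesgue-continuous}: Lebesgue continuity is \emph{equivalent} to the condition that $\rho(X_n) \downarrow \rho(X)$ whenever $X_n \downarrow X$ a.s.\ in $L^\infty$. With this monotone version the proof becomes almost trivial, because $\phi$ is nondecreasing: given $\epsilon>0$, pick $m$ with $\E[\phi(m+X)]-m \le \rho(X)+\epsilon$; monotone convergence gives $\E[\phi(m+X_n)] \downarrow \E[\phi(m+X)]$, so $\rho(X_n) \le \E[\phi(m+X_n)]-m \le \rho(X)+2\epsilon$ for large $n$, while the reverse inequality $\rho(X_n) \ge \rho(X)$ is immediate from monotonicity of $\rho$. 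No decomposition $\phi = \mathrm{id} + \psi$, no analysis of where the minimizers in $m$ live, and no case split on the shape of $\{\psi=0\}$ are needed.

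Your route---attacking the general bounded a.s.\ convergence directly---forces you to prove that $\inf_m \E[\psi(m+X_n)] \to \inf_m \E[\psi(m+X)]$, and the localization of minimizers you carry out (via the sign of $\psi'_\pm$ outside $[a-C,b+C]$, plus the degenerate cases) is exactly the price of not using monotonicity. The argument is sound: pointwise convergence of the convex functions $h_n$ does yield uniform convergence on compacta, differentiating under the expectation is legitimate for convex $\psi$ and bounded $X_n$, and your case analysis on $\{\psi=0\}$ is exhaustive since $\inf_\R \psi = -\phi^*(1) = 0$. So nothing is wrong; it is simply that the paper's three-line proof sidesteps all of this by reducing to the monotone case up front.
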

\begin{proof}
The second claim follows from the first by Theorem \ref{th:tight-lowersemicontinuity}.
According to Theorem \ref{th:tight-lowersemicontinuity}, it suffices to show Lebesgue continuity: If $X_n,X \in L^\infty$ with $X_n \downarrow X$ a.s., we must show that $\rho(X_n) \downarrow \rho(X)$. Let $\epsilon > 0$, and find $m \in \R$ such that
\[
\E[\phi(m+X)] - m \le \rho(X) + \epsilon.
\]
By monotone convergence, $\E[\phi(m + X_n)] \downarrow \E[\phi(m+X)]$. Thus, for sufficiently large $n$,
\[
\rho(X_n) \le \E[\phi(m + X_n)] - m \le \E[\phi(m+X)] - m + \epsilon \le \rho(X) + 2\epsilon.
\]
\end{proof}

\begin{theorem}
Suppose $\phi$ satisfies $y\phi^*(x) + x\phi^*(y) \le \phi^*(xy)$ (resp. $\ge$) for all $x,y \ge 0$. Then $\alpha$ is superadditive (resp. subadditive), or equivalently $\rho$ is acceptance consistent (resp. rejection consistent).
\end{theorem}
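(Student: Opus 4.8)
The plan is to verify superadditivity of the $\phi^*$-divergence directly from the explicit formula $\alpha(\nu|\mu) = \int\phi^*(d\nu/d\mu)\,d\mu$, rather than through shift-convexity as in the shortfall case; since the preceding proposition shows that $\alpha$ is simplified, Theorem~\ref{th:mainequivalence} will then convert superadditivity into acceptance consistency. Fix Polish spaces $E$, $F$ and disintegrated measures $\bar\nu = \nu(dx)K^\nu_x(dy)$ and $\bar\mu = \mu(dx)K^\mu_x(dy)$ on $E \times F$.

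First I would dispose of the degenerate cases. If $\bar\nu$ is not absolutely continuous with respect to $\bar\mu$, then $\alpha(\bar\nu|\bar\mu) = \infty$ and superadditivity is trivial, so assume $\bar\nu \ll \bar\mu$. Then the first marginals satisfy $\nu \ll \mu$, for $\nu$-a.e.\ $x$ we have $K^\nu_x \ll K^\mu_x$, and the Radon--Nikodym derivative factors as $\frac{d\bar\nu}{d\bar\mu}(x,y) = p(x)q_x(y)$, where $p := d\nu/d\mu$ and $q_x := dK^\nu_x/dK^\mu_x$. Moreover, by the data-processing inequality (Proposition~\ref{pr:informationinequality}) applied to the coordinate projection, $\alpha(\nu|\mu) \le \alpha(\bar\nu|\bar\mu)$, so I may also assume $\alpha(\bar\nu|\bar\mu) < \infty$; this forces the densities to be finite almost everywhere and makes the integrals below well defined.

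The algebraic core is a single application of the hypothesis. Setting the two arguments equal to $p(x)$ and $q_x(y)$ gives, pointwise,
\[
q_x(y)\,\phi^*(p(x)) + p(x)\,\phi^*(q_x(y)) \le \phi^*\bigl(p(x)q_x(y)\bigr).
\]
Integrating in $y$ against $K^\mu_x$ and using $\int_F q_x\,dK^\mu_x = K^\nu_x(F) = 1$ collapses the first term to $\phi^*(p(x))$, yielding
\[
\phi^*(p(x)) + p(x)\int_F \phi^*(q_x)\,dK^\mu_x \le \int_F \phi^*\bigl(p(x)q_x\bigr)\,dK^\mu_x.
\]
Integrating this in $x$ against $\mu$, recognizing $\int_F\phi^*(q_x)\,dK^\mu_x = \alpha(K^\nu_x|K^\mu_x)$ and using $\nu(dx) = p(x)\mu(dx)$, the left-hand side becomes $\alpha(\nu|\mu) + \int_E \alpha(K^\nu_x|K^\mu_x)\,\nu(dx)$ while the right-hand side is precisely $\int_{E\times F}\phi^*(d\bar\nu/d\bar\mu)\,d\bar\mu = \alpha(\bar\nu|\bar\mu)$. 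This is exactly superadditivity. The subadditive case is identical with the inequality reversed, except that I would first note that when $\bar\nu \not\ll \bar\mu$ the right-hand side is itself $\infty$ (either $\nu \not\ll \mu$, or $K^\nu_x \not\ll K^\mu_x$ on a set of positive $\nu$-measure), so that the required inequality holds trivially there as well.

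I expect the only real obstacle to be bookkeeping rather than mathematics: justifying the factorization of the joint density and the disintegration of absolute continuity, checking measurability of $x \mapsto \alpha(K^\nu_x|K^\mu_x)$ (which follows from Lemma~\ref{le:jointlymeasurable}, since $\alpha$ is simplified, composed with the measurable map $x \mapsto (K^\nu_x,K^\mu_x)$), and confirming that the two applications of Tonelli are legitimate given the possible sign of $\phi^*$. All of these become routine once the finiteness reductions above are in place.
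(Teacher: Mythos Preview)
Your proposal is correct and follows essentially the same route as the paper: factor the joint Radon--Nikodym derivative as $p(x)q_x(y)$, apply the hypothesis $y\phi^*(x)+x\phi^*(y)\le\phi^*(xy)$ pointwise, and integrate against $\mu(dx)K^\mu_x(dy)$ to recover $\alpha(\nu|\mu)+\int\nu(dx)\alpha(K^\nu_x|K^\mu_x)\le\alpha(\bar\nu|\bar\mu)$. The paper's proof is terser, omitting the absolute-continuity and finiteness reductions and the measurability remarks you flag, but the computation is line-for-line the same.
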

\begin{proof}
We treat the superadditive case, as the subadditive case is proven similarly.
Suppose $E$ and $F$ are Polish spaces, and let $\bar{\mu} = \mu(dx)K^\mu_x(dy)$ and $\bar{\nu} = \nu(dx)K^\nu_x(dy)$ be probability measures on $E \times F$. 
Simply use the definition of $\alpha$:
\begin{align*}
\alpha(\bar{\nu} | \bar{\mu}) &= \int_{E \times F}\phi^*\left(\frac{d\bar{\nu}}{d\bar{\mu}}\right)d\mu \\
	&= \int_E\mu(dx)\int_F K^\mu_x(dy)\phi^*\left(\frac{d\nu}{d\mu}(x)\frac{dK^\nu_x}{dK^\mu_x}(y)\right) \\
	&\ge \int_E\mu(dx)\int_FK^\mu_x(dy)\left[\frac{dK^\nu_x}{dK^\mu_x}(y)\phi^*\left(\frac{d\nu}{d\mu}(x)\right) + \frac{d\nu}{d\mu}(x)\phi^*\left(\frac{dK^\nu_x}{dK^\mu_x}(y)\right) \right] \\
	 &= \int_E\phi^*\left(\frac{d\nu}{d\mu}\right)d\mu + \int_E\nu(dx)\int_FK^\mu_x(dy)\phi^*\left(\frac{dK^\nu_x}{dK^\mu_x}(y)\right) \\
	&= \alpha(\nu | \mu) + \int_E\nu(dx)\alpha(K^\nu_x|K^\mu_x).
\end{align*}
\end{proof}

\begin{remark}
Of course, the relationship $\phi^*(xy) = x\phi^*(y) + y\phi^*(x)$ is satisfied by $\phi^*(x) = x\log x$, the conjugate of which (assuming $\phi^* = +\infty$ on the negative half-line) is $\phi(x) = e^{x-1}$. More generally, suppose $\ell$ is a strictly increasing log-subadditive loss function. Then $\ell^{-1}(xy) \ge \ell^{-1}(x) + \ell^{-1}(y)$ for $x,y > 0$, and so $\phi^*(x) := x\ell^{-1}(x)$ satisfies $\phi^*(xy) \ge x\phi^*(y) + y\phi^*(x)$. As was discussed in Remark \ref{re:notmanylogsubadditive}, there are not many such functions.
\end{remark}

\subsection{Coherent risk measures}

A risk measure is called \emph{coherent} if $\rho(\lambda X) = \lambda\rho(X)$ for all $X \in L^\infty$ and $\lambda \ge 0$. A coherent law invariant risk measure admits a representation
\[
\rho(X) = \sup_{Q \in \Q}\E^Q[X],
\]
where $\Q \subset \P_P(\Omega)$ is closed convex.  Law invariance simply means that if $Q \in \Q$ and $Q'\in \P_P(\Omega)$ have the same density law $ P \circ (dQ/dP)^{-1} = P \circ (dQ'/dP)^{-1}$, then $Q'$ must also be in $\Q$; indeed, this follows easily from the Kusuoka representation (see \cite{kusuoka2001law,jouini-touzi-schachermayer}). For a Polish space $E$ and $\mu \in \P(E)$, note that
\[
\rho_\mu(f) = \sup_{\eta \in \Q[\mu]}\int f\,d\eta,
\]
where
\[
\Q[\mu] := \left\{Q \circ X^{-1} : X \in L^0(\Omega;E), \ Q \in \Q, \ P \circ X^{-1} = \mu \right\} \subset \P_\mu(E).
\]
Here $L^0(\Omega;E)$ denotes the set of measurable functions from $\Omega$ to $E$.
Let us see when $\rho$ is acceptance consistent by using Proposition \ref{pr:acceptanceconsistent-equivalences}(5). Let $E$ and $F$ be Polish spaces, let $\mu_1 \in \P(E)$, let $\mu_2 \in \P(F)$, and let $f \in B(E \times F)$. Then, if $X$ denotes the identity map on $E$,
\begin{align}
\rho_{\mu_1}(\rho_{\mu_2}(f(x,\cdot))|_{x = X}) &= \sup_{\eta \in \Q[\mu]}\int_E\eta(dx)\rho_{\mu_2}(f(x,\cdot)) \nonumber \\
	&= \sup_{\eta \in \Q[\mu_1]}\int_E\eta(dx)\sup_{\eta' \in \Q[\mu_2]}\int_E\eta'(dy)f(x,y) \nonumber \\ 
	&= \sup_{\eta \in \Q[\mu_1,\mu_2]}\int_{E \times F}f\,d\eta, \label{pf:coherent}
\end{align}
where we define 
\[
\Q[\mu_1,\mu_2] := \left\{m(dx)K^m_x(dy) \in \P(E \times F) : m \in \Q[\mu_1], \ K^m_x \in \Q[\mu_2] \text{ for all } x\right\}.
\]
Indeed, the last line of \eqref{pf:coherent} follows from a well known measurable selection argument \cite[Proposition 7.50]{bertsekasshreve}. Thus, $\rho$ is acceptance consistent if and only if 
\[
\sup_{\eta \in \Q[\mu_1 \times \mu_2]}\int_{E \times F}f\,d\eta \le \sup_{\eta \in \Q[\mu_1,\mu_2]}\int_{E \times F}f\,d\eta,
\]
for all $f \in B(E \times F)$, since the left-hand side is exactly $\rho_{\mu_1 \times \mu_2}(f)$. But this is equivalent to the closed convex hull of $\Q[\mu_1,\mu_2]$ containing $\Q[\mu_1 \times \mu_2]$. For this to hold for every pair $\mu_1,\mu_2$ is a very stringent requirement, and we were unable to clarify it any further. It holds in the extreme cases, when $\Q$ is a singleton or $\Q = \P_P(\Omega)$.
Note that the above discussion is equally valid for the robust entropic risk measure
\[
\rho(X) = \sup_{Q \in \Q}c^{-1}\log\E^Q[e^{c X}], \ c > 0.
\]

\appendix

\section{Proof of Lemma \ref{le:integralconvexity}} \label{se:pf:integralconvexity}
Keep the notation of Lemma \ref{le:integralconvexity}.
For each $x \in E$ we have
\[
\alpha(K^\nu_x|K^\mu_x) = \sup_{f \in B(F)}\left(\int_F f\,dK^\nu_x  - \rho_{K^\mu_x}(f)\right),
\]
and this clearly implies (recalling that $\bar{\nu} := \nu(dx)K^\nu_x(dy)$) that
\begin{align*}
\int_E\nu(dx)\alpha(K^\nu_x|K^\mu_x) \ge \sup_{f \in B(E \times F)}\left\{\int_{E \times F} f\,d\bar{\nu} - \int_E\nu(dx)\rho_{K^\mu_x}(f(x,\cdot))\right\}.
\end{align*}
The rest of the proof is devoted to establishing the trickier inequality:
\begin{align}
\int_E\nu(dx)\alpha(K^\nu_x|K^\mu_x) \le \sup_{f \in B(E \times F)}\left\{\int_{E \times F} f\,d\bar{\nu} - \int_E\nu(dx)\rho_{K^\mu_x}(f(x,\cdot))\right\}. \label{pf:integralconvexity1}
\end{align}

\textbf{Proof for $F=[0,1]$.}
Assume first that $F = [0,1]$. 
Since $\alpha$ is simplified, for each $x \in E$ we have
\[
\alpha(K^\nu_x|K^\mu_x) = \sup_{f \in C([0,1])}\left(\int_{[0,1]} f\,dK^\nu_x  - \rho_{K^\mu_x}(f)\right).
\]
Now note that for each $f \in C([0,1])$, the function of $x$ inside the supremum on the right-hand side above is measurable. Indeed, we showed in Lemma \ref{le:rhofamilyregularity} that $\mu \mapsto \rho_\mu(f)$ is measurable (actually lower semicontinuous) when $f$ is continuous and bounded.
Since $C([0,1])$ equipped with the supremum norm is a Polish space, for a fixed $\epsilon > 0$ we may find (by \cite[Proposition 7.50]{bertsekasshreve}) a universally measurable map $E \ni x \mapsto g_x \in C([0,1])$ such that, for each $x\in E$,
\begin{align}
\int_{[0,1]} g_x\,dK^\nu_x  - \rho_{K^\mu_x}(g_x) \ge \begin{cases}
\alpha(K^\nu_x|K^\mu_x) - \epsilon &\text{if } \alpha(K^\nu_x|K^\mu_x) < \infty \\
1/\epsilon &\text{if } \alpha(K^\nu_x|K^\mu_x) = \infty.
\end{cases} \label{pf:integralconvexity2}
\end{align}
We may find a Borel measurable map which agrees $\nu$-a.e. with $x \mapsto g_x$, and we abuse notation by denoting this again by $x \mapsto g_x$.
By Lusin's theorem, for each $\delta > 0$ there exists a compact set $S_\delta \subset E$ such that $\nu(S_\delta^c) \le \delta$ and the restriction $S_\delta \ni x \mapsto g_x \in C([0,1])$ is continuous. Without loss of generality, assume that $S_\delta \supset S_{\delta'}$ whenever $\delta < \delta'$. Define $g^\delta_x := g_x$ for $x \in S_\delta$ and $g^\delta_x := 0$ (the zero function) for $x \notin S_\delta$, and note that $g^\delta$ is a \emph{bounded} measurable function from $E$ to $C([0,1])$.
Define $g^\delta : E \times [0,1] \rightarrow \R$ by $g^\delta(x,y)=g^\delta_x(y)$, and note that $g^\delta$ is jointly measurable (thanks to Theorem 4.55 and Lemma 4.51 of \cite{aliprantisborder}) and bounded.

\textit{Case 1.}
Suppose first that $\int_E\nu(dx)\alpha(K^\nu_x|K^\mu_x) < \infty$. 
Then, for sufficiently small $\delta$, we have
\[
\int_{S_\delta^c}\nu(dx)\alpha(K^\nu_x|K^\mu_x) \le \epsilon.
\]
Then
\begin{align*}
\int_E\nu(dx)\alpha(K^\nu_x|K^\mu_x) &\le \epsilon + \int_{S_\delta}\nu(dx)\alpha(K^\nu_x|K^\mu_x) \\
	&\le 2\epsilon + \int_{S_\delta}\nu(dx)\left[\int_{[0,1]} g_x\,dK^\nu_x  - \rho_{K^\mu_x}(g_x)\right] \\
	&= 2\epsilon + \int_E\nu(dx)\left[\int_{[0,1]} g^\delta_x\,dK^\nu_x  - \rho_{K^\mu_x}(g^\delta_x)\right] \\
	&= 2\epsilon + \int_{E \times [0,1]} g^\delta\,d\bar{\nu} - \int_E\nu(dx)\rho_{K^\mu_x}(g^\delta(x,\cdot)) \\
	&\le 2\epsilon + \sup_{f \in B(E \times [0,1])}\left\{\int_{E \times [0,1]} f\,d\bar{\nu} - \int_E\nu(dx)\rho_{K^\mu_x}(f(x,\cdot))\right\}.
\end{align*}
The second line used \eqref{pf:integralconvexity2}, and the third used the definition of $g^\delta$.
Since $\epsilon > 0$ was arbitrary, we obtain  \eqref{pf:integralconvexity1}.

\textit{Case 2.}
Suppose the set $I := \{x \in E : \alpha(K^\nu_x|K^\mu_x) = \infty\}$ has $\nu(I) > 0$, so that $\int_E\nu(dx)\alpha(K^\nu_x|K^\mu_x) = \infty$. Then for small enough $\delta$ we have $\nu(S_\delta \cap I) \ge \nu(I)/2 > 0$. Then, again using \eqref{pf:integralconvexity2} and the definition of $g^\delta$,
\begin{align}
0 < \frac{\nu(I)}{2\epsilon} &\le \frac{\nu(S_\delta \cap I)}{\epsilon} = \int_{S_\delta \cap I}\epsilon^{-1}\nu(dx) \nonumber \\
	&\le \int_{S_\delta \cap I}\nu(dx)\left[\int_{[0,1]} g_x\,dK^\nu_x  - \rho_{K^\mu_x}(g_x)\right] \nonumber \\
	&\le \epsilon\nu(I^c) + \int_E\nu(dx)\left[\int_{[0,1]} g^\delta_x\,dK^\nu_x  - \rho_{K^\mu_x}(g^\delta_x)\right] \nonumber \\
	&\le \epsilon\nu(I^c) + \sup_{f \in B(E \times [0,1])}\left\{\int_{E \times [0,1]} f\,d\bar{\nu} - \int_E\nu(dx)\rho_{K^\mu_x}(f(x,\cdot))\right\}. \label{pf:integralconvexity3}
\end{align}
The second to last inequality followed from the fact that
\[
\int_{I^c}\nu(dx)\int_{[0,1]} g^\delta_x\,dK^\nu_x  - \rho_{K^\mu_x}(g^\delta_x)  \ge -\epsilon\nu(I^c),
\]
by definition of $g^\delta$.
Send $\epsilon \downarrow 0$ in \eqref{pf:integralconvexity3} to conclude that the right-hand side of \eqref{pf:integralconvexity1} is also infinite.

\textit{Case 3.}
Finally, suppose $\int_E\nu(dx)\alpha(K^\nu_x|K^\mu_x) = \infty$ but $\nu(I) = 0$, with $I$ defined as in Case 2.
Since $\alpha \ge 0$, the monotone convergence theorem implies
\[
\lim_{\delta \downarrow 0}\int_{S_\delta}\nu(dx)\alpha(K^\nu_x|K^\mu_x) = \infty.
\]
Fix $M > 0$. Find $\delta$ such that $\int_{S_\delta}\nu(dx)\alpha(K^\nu_x|K^\mu_x) \ge M$. Then, again using \eqref{pf:integralconvexity2}, $\nu(I)=0$, and the definition of $g^\delta$,
\begin{align*}
M &\le \epsilon + \int_{S_\delta}\nu(dx)\left[\int_{[0,1]} g_x\,dK^\nu_x  - \rho_{K^\mu_x}(g_x)\right] \\
	&= \epsilon + \int_{E}\nu(dx)\left[\int_{[0,1]} g^\delta_x\,dK^\nu_x  - \rho_{K^\mu_x}(g^\delta_x)\right] \\
	&\le \epsilon + \sup_{f \in B(E \times [0,1])}\left\{\int_{E \times [0,1]} f\,d\bar{\nu} - \int_E\nu(dx)\rho_{K^\mu_x}(f(x,\cdot))\right\}
\end{align*}
Since $M > 0$ was arbitrary, this shows that the right-hand side is infinite, verifying \eqref{pf:integralconvexity1}.

\textbf{Proof for general $F$.}
Finally, we remove the assumption that $F = [0,1]$, merely assuming now that it is Polish. By Borel isomorphism (see \cite[Theorem 15.6]{kechris-settheory}) we may find a measurable bijection $T : F \rightarrow [0,1]$ with measurable inverse. Note that Proposition \ref{pr:informationinequality} yields $\alpha(K^\nu_x|K^\mu_x) = \alpha(K^\nu_x \circ T^{-1}|K^\mu_x \circ T^{-1})$. Apply the above result to prove \eqref{pf:integralconvexity1}:
\begin{align*}
\int_E\nu(dx)\alpha(K^\nu_x|K^\mu_x) &= \int_E\nu(dx)\alpha(K^\nu_x \circ T^{-1}|K^\mu_x \circ T^{-1}) \\
	&= \sup_{f \in B(E \times [0,1])}\int_E\nu(dx)\left[\int_{[0,1]}K^\nu_x \circ T^{-1}(dy)f(x,y) -\rho_{K^\mu_x \circ T^{-1}}(f(x,\cdot))\right] \\
	&= \sup_{f \in B(E \times [0,1])}\int_E\nu(dx)\left[\int_FK^\nu_x(dy)f(x,T(y)) -\rho_{K^\mu_x}(f(x,T(\cdot)))\right] \\
	&\le \sup_{f \in B(E \times F)}\int_E\nu(dx)\left[\int_FK^\nu_x(dy)f(x,y) -\rho_{K^\mu_x}(f(x,\cdot))\right].
\end{align*}

\subsection*{Acknowledgements}
Thanks are due to Igor Cialenco, Andrew Papanicolaou, and especially Stefan Weber for valuable discussions and references, as well as Kavita Ramanan for careful feedback on an early draft. Part of this research was performed while the author was visiting the Institute for Pure and Applied Mathematics (IPAM), which is supported by the National Science Foundation.

\bibliographystyle{amsplain}
\bibliography{../riskmeasures-bib}

\providecommand{\bysame}{\leavevmode\hbox to3em{\hrulefill}\thinspace}
\providecommand{\MR}{\relax\ifhmode\unskip\space\fi MR }
\providecommand{\MRhref}[2]{%
  \href{http://www.ams.org/mathscinet-getitem?mr=#1}{#2}
}
\providecommand{\href}[2]{#2}
\begin{thebibliography}{10}

\bibitem{acciaio-penner-dynamic}
B.~Acciaio and I.~Penner, \emph{Dynamic risk measures}, Advanced Mathematical
  Methods for Finance, Springer, 2011, pp.~1--34.

\bibitem{acciaio-svindland-concave}
B.~Acciaio and G.~Svindland, \emph{Are law-invariant risk functions concave on
  distributions?}, Dependence Modeling \textbf{1} (2013), 54--64.

\bibitem{aliprantisborder}
C.~Aliprantis and K.~Border, \emph{Infinite dimensional analysis: {A}
  hitchhiker's guide}, 3 ed., Springer, 2007.

\bibitem{artzner1999coherent}
P.~Artzner, F.~Delbaen, J.-M. Eber, and D.~Heath, \emph{Coherent measures of
  risk}, Mathematical Finance \textbf{9} (1999), no.~3, 203--228.

\bibitem{bental-teboulle-1986}
A.~Ben-Tal and M.~Teboulle, \emph{Expected utility, penalty functions, and
  duality in stochastic nonlinear programming}, Management Science \textbf{32}
  (1986), no.~11, 1445--1466.

\bibitem{bental-teboulle-2007}
\bysame, \emph{An old-new concept of convex risk measures: {T}he optimized
  certainty equivalent}, Mathematical Finance \textbf{17} (2007), no.~3,
  449--476.

\bibitem{bertsekasshreve}
D.~Bertsekas and S.~Shreve, \emph{Stochastic optimal control: The discrete time
  case}, Athena Scientific, 1996.

\bibitem{cheridito2011composition}
P.~Cheridito and M.~Kupper, \emph{Composition of time-consistent dynamic
  monetary risk measures in discrete time}, International Journal of
  Theoretical and Applied Finance \textbf{14} (2011), no.~01, 137--162.

\bibitem{csiszar2008axiomatic}
I.~Csisz{\'a}r, \emph{Axiomatic characterizations of information measures},
  Entropy \textbf{10} (2008), no.~3, 261--273.

\bibitem{delbaen2002coherent}
F.~Delbaen, \emph{Coherent risk measures on general probability spaces},
  Advances in finance and stochastics, Springer, 2002, pp.~1--37.

\bibitem{detlefsen2005conditional}
K.~Detlefsen and G.~Scandolo, \emph{Conditional and dynamic convex risk
  measures}, Finance and Stochastics \textbf{9} (2005), no.~4, 539--561.

\bibitem{filipovic-svindland}
D.~Filipovi{\'c} and G.~Svindland, \emph{The canonical model space for
  law-invariant convex risk measures is ${L}^1$}, Mathematical Finance
  \textbf{22} (2012), no.~3, 585--589.

\bibitem{follmer2006convex}
H.~F{\"o}llmer and I.~Penner, \emph{Convex risk measures and the dynamics of
  their penalty functions}, Statistics \& Decisions \textbf{24} (2006),
  no.~1/2006, 61--96.

\bibitem{follmer-schied-convex}
H.~F{\"o}llmer and A.~Schied, \emph{Convex measures of risk and trading
  constraints}, Finance and stochastics \textbf{6} (2002), no.~4, 429--447.

\bibitem{follmer-schied-robustpreferences}
\bysame, \emph{Robust preferences and convex measures of risk}, Advances in
  finance and stochastics, Springer, 2002, pp.~39--56.

\bibitem{follmer-schied-book}
\bysame, \emph{Stochastic finance: {A}n introduction in discrete time}, Walter
  de Gruyter, 2011.

\bibitem{frittelli2002putting}
M.~Frittelli and E.R. Gianin, \emph{Putting order in risk measures}, Journal of
  Banking \& Finance \textbf{26} (2002), no.~7, 1473--1486.

\bibitem{frittelli2004dynamic}
\bysame, \emph{Dynamic convex risk measures}, Risk measures for the 21st
  century (2004), 227--248.

\bibitem{frittelli2005law}
\bysame, \emph{Law invariant convex risk measures}, Advances in mathematical
  economics, Springer, 2005, pp.~33--46.

\bibitem{gozlan-leonard-survey}
N.~Gozlan and C.~L{\'e}onard, \emph{Transport inequalities. {A} survey}, arXiv
  preprint arXiv:1003.3852 (2010).

\bibitem{heath2004pareto}
D.~Heath and H.~Ku, \emph{Pareto equilibria with coherent measures of risk},
  Mathematical Finance \textbf{14} (2004), no.~2, 163--172.

\bibitem{jouini-touzi-schachermayer}
E.~Jouini, W.~Schachermayer, and N.~Touzi, \emph{Law invariant risk measures
  have the {F}atou property}, Advances in mathematical economics, Springer,
  2006, pp.~49--71.

\bibitem{kallenberg-foundations}
O.~Kallenberg, \emph{Foundations of modern probability}, Springer Science \&
  Business Media, 2002.

\bibitem{kechris-settheory}
A.~Kechris, \emph{Classical descriptive set theory}, vol. 156, Springer Science
  \& Business Media, 2012.

\bibitem{kullback-leibler}
S.~Kullback and R.A. Leibler, \emph{On information and sufficiency}, The annals
  of mathematical statistics (1951), 79--86.

\bibitem{kupper-schachermayer}
M.~Kupper and W.~Schachermayer, \emph{Representation results for law invariant
  time consistent functions}, Mathematics and Financial Economics \textbf{2}
  (2009), no.~3, 189--210.

\bibitem{kusuoka2001law}
S.~Kusuoka, \emph{On law invariant coherent risk measures}, Advances in
  mathematical economics, Springer, 2001, pp.~83--95.

\bibitem{lacker-liquidity}
D.~Lacker, \emph{Liquidity, risk measures, and concentration of measure},
  Preprint (2015).

\bibitem{liese2006divergences}
F.~Liese and I.~Vajda, \emph{On divergences and informations in statistics and
  information theory}, Information Theory, IEEE Transactions on \textbf{52}
  (2006), no.~10, 4394--4412.

\bibitem{riedel2004dynamic}
F.~Riedel, \emph{Dynamic coherent risk measures}, Stochastic processes and
  their applications \textbf{112} (2004), no.~2, 185--200.

\bibitem{roorda2007time}
B.~Roorda and J.M. Schumacher, \emph{Time consistency conditions for
  acceptability measures, with an application to tail value at risk},
  Insurance: Mathematics and Economics \textbf{40} (2007), no.~2, 209--230.

\bibitem{shapiro2013kusuoka}
A.~Shapiro, \emph{On {K}usuoka representation of law invariant risk measures},
  Mathematics of Operations Research \textbf{38} (2013), no.~1, 142--152.

\bibitem{strassen-marginals}
V.~Strassen, \emph{The existence of probability measures with given marginals},
  The Annals of Mathematical Statistics (1965), 423--439.

\bibitem{tutsch2008update}
S.~Tutsch, \emph{Update rules for convex risk measures}, Quantitative Finance
  \textbf{8} (2008), no.~8, 833--843.

\bibitem{weber-distributioninvariant}
S.~Weber, \emph{Distribution-invariant risk measures, information, and dynamic
  consistency}, Mathematical Finance \textbf{16} (2006), no.~2, 419--441.

\bibitem{winkler-choquet}
G.~Winkler, \emph{Choquet order and simplices: with applications in
  probabilistic models}, vol. 1145, Springer, 1985.

\end{thebibliography}

\end{document}